%% file: main.tex
\documentclass[12pt]{article}
\usepackage{geometry}
\usepackage[T1]{fontenc}
\usepackage[utf8]{inputenc}
\usepackage{enumerate,enumitem}
\usepackage{natbib}
\usepackage{url} 
\usepackage{bbm}
\usepackage{float}
\usepackage{xspace}
\usepackage{comment}
\usepackage{diagbox}
\usepackage{multirow, multicol}
\usepackage{boxedminipage}
\usepackage[dvipsnames]{xcolor}
\usepackage{authblk}
\usepackage{natbib}
\usepackage{wrapfig}
\usepackage{relsize}
\usepackage{comment}
\usepackage{diagbox}
\usepackage{enumitem}
\usepackage{multirow}
\usepackage{boxedminipage}
\usepackage{algorithm,algpseudocode}

\usepackage{colortbl} 
\usepackage{tcolorbox}
\usepackage{wrapfig}
\usepackage{setspace}
\usepackage{graphicx,psfrag,epsf,subcaption}
\usepackage{amssymb,amsthm,amsmath,mathrsfs,bbm,nicefrac}
\usepackage{booktabs}
\usepackage[bookmarks,bookmarksnumbered,%
    allbordercolors={0.8 0.8 0.8},%
    linktocpage%
    ]{hyperref} 
\hypersetup{colorlinks,
  citecolor={RoyalBlue},linkcolor={ForestGreen},urlcolor={BrickRed}}
\usepackage[capitalise,noabbrev]{cleveref}
\usepackage{textcomp}
\usepackage{times}
\usepackage[toc,page,title,titletoc,header]{appendix}
\input{mathcommands}
\newcommand{\negred}[1]{\textcolor{Mahogany}{\textbf{#1}}}
\newcommand{\bracket}[1]{(#1)}

\title{Neyman-Pearson multiclass classification under label noise via empirical likelihood}
\date{}
\author[1]{Qiong Zhang}
\author[2]{Qinglong Tian}
\author[2]{Pengfei Li\thanks{Correspondence to: pengfei.li@uwaterloo.ca. Qiong Zhang gratefully acknowledges the funding support by the National Key R \& D Program of China Grant 2024YFA1015800.}}
\affil[1]{\small Institute of Statistics and Big Data, Renmin University of China, Beijing, China}
\affil[2]{\small Department of Statistics and Actuarial Science, University of Waterloo}

\begin{document}

\maketitle

\begin{abstract}
In many classification problems, misclassification costs are highly asymmetric, while training labels are often corrupted due to measurement error, annotator variability, or adversarial noise. 
The Neyman--Pearson multiclass classification (NPMC) framework addresses such asymmetry by controlling class-specific errors, but existing methods assume that training labels are correctly observed. 
To our knowledge, no existing approach handles NPMC under label noise in the multiclass setting, and the only binary method requires prior knowledge of the noise mechanism. 
A fundamental difficulty is that, without structural assumptions, noisy-label models are non-identifiable: distinct combinations of class-conditional distributions and noise mechanisms can induce the same observed distribution, preventing recovery of the quantities required for error control.
We show that the exponential tilting density ratio model restores identifiability, and leverage this structure to develop an empirical likelihood approach for NPMC with noisy labels. 
The proposed method jointly estimates clean-label class proportions, posterior probabilities, and the noise mechanism from noisy data, without requiring prior knowledge of the confusion matrix. 
An expectation--maximization algorithm enables efficient computation. 
The resulting estimators are $\sqrt{n}$-consistent and asymptotically normal, and the induced classifiers satisfy Neyman--Pearson oracle inequalities in both binary and multiclass settings. 
Simulation and real-data experiments demonstrate near-oracle performance.
\end{abstract}

\noindent%
{\it Keywords:}  Corrupted labels, Data contamination, Density ratio model, Empirical likelihood, Supervised learning

\input{sections/intro}
\input{sections/setting}

\input{sections/estimation}

\input{sections/binary}
\input{sections/multiclass}

\input{sections/real_data}
\input{sections/conclusion}

\bibliography{biblio}
\bibliographystyle{apalike}

\clearpage
\newpage
\appendix
\section*{Appendices}
\pagenumbering{arabic}
\input{sections/appendix}

\end{document}

%% file: mathcommands.tex
\usepackage{amsmath,amsfonts,bm}

\newcommand{\ie}{{\em i.e.,~}} 

\def\vp{{\bm{p}}}

\def\vs{{\bm{s}}}

\def\vv{{\bm{v}}}
\def\vw{{\bm{w}}}

\def\mM{{\bm{M}}}

\def\mS{{\bm{S}}}
\def\mT{{\bm{T}}}

\def\mV{{\bm{V}}}
\def\mW{{\bm{W}}}

\DeclareMathAlphabet{\mathsfit}{\encodingdefault}{\sfdefault}{m}{sl}
\SetMathAlphabet{\mathsfit}{bold}{\encodingdefault}{\sfdefault}{bx}{n}

\def\gD{{\mathcal{D}}}

\def\gF{{\mathcal{F}}}
\def\gG{{\mathcal{G}}}
\def\gH{{\mathcal{H}}}

\def\gK{{\mathcal{K}}}
\def\gL{{\mathcal{L}}}

\def\gN{{\mathcal{N}}}

\def\gS{{\mathcal{S}}}

\def\gX{{\mathcal{X}}}

\def\gZ{{\mathcal{Z}}}

\def\sE{{\mathbb{E}}}

\def\sP{{\mathbb{P}}}

\def\sR{{\mathbb{R}}}
\def\sS{{\mathbb{S}}}

\DeclareMathOperator*{\argmax}{arg\,max}

\newcommand{\iid}{\overset{\text{i.i.d.}}{\sim}}
 \newcommand{\ind}{\perp\!\!\!\!\perp}

\newtheorem{remark}{Remark}[section]
\newtheorem{theorem}{Theorem}[section]
\newtheorem{lemma}{Lemma}[section]
\newtheorem{assumption}{Assumption}[section]
\newtheorem{example}{Example}[section]
\newtheorem{definition}{Definition}[section]
\newtheorem{proposition}{Proposition}[section]

\newcommand{\bbeta}{\mbox{\boldmath $\beta$}}

\newcommand{\bnu}{\mbox{\boldmath $\nu$}}
\newcommand{\btheta}{\mbox{\boldmath $\theta$}}
\newcommand{\blambda}{\mbox{\boldmath $\lambda$}}

\newcommand{\bgamma}{\mbox{\boldmath $\gamma$}}

\newcommand{\bea}{\begin{eqnarray*}}
\newcommand{\eea}{\end{eqnarray*}}
\newcommand{\ba}{\begin{eqnarray*}}
\newcommand{\ea}{\end{eqnarray*}}
\newcommand{\be}{\begin{equation}}
\newcommand{\ee}{\end{equation}}
\newcommand{\bi}{\begin{itemize}}
\newcommand{\ei}{\end{itemize}}

\makeatletter
\newcommand*\rel@kern[1]{\kern#1\dimexpr\macc@kerna}
\newcommand*\widebar[1]{%
  \begingroup
  \def\mathaccent##1##2{%
    \rel@kern{0.8}%
    \overline{\rel@kern{-0.8}\macc@nucleus\rel@kern{0.2}}%
    \rel@kern{-0.2}%
  }%
  \macc@depth\@ne
  \let\math@bgroup\@empty \let\math@egroup\macc@set@skewchar
  \mathsurround\z@ \frozen@everymath{\mathgroup\macc@group\relax}%
  \macc@set@skewchar\relax
  \let\mathaccentV\macc@nested@a
  \macc@nested@a\relax111{#1}%
  \endgroup
}
\makeatother

%% file: sections/intro.tex
\section{Introduction}
\subsection{Label noise is pervasive in real-world applications} 
\label{sec:label_noise_pervasiveness}
In supervised learning, \emph{label noise} refers to discrepancies between observed labels and the (unobserved) ground truth. 
Such contamination is ubiquitous in practice. 
Empirical studies suggest that roughly 5\% of labels in real-world datasets are incorrect, and even widely used benchmarks such as MNIST and ImageNet contain non-negligible annotation errors~\citep{northcutt2021confident}. 
In domain-specific settings, for example in CheXpert~\citep{irvin2019chexpert}, ambiguities in radiology reports give rise to label errors that are difficult to quantify yet unavoidable. 
More recently, outputs from LLMs have also been shown to induce label noise~\citep{zhang2025stochasticity}.

Label noise may arise from sources including measurement error, annotator variability, and adversarial corruption. 
Its impact on statistical learning procedures is well documented~\citep{frenay2013classification}: contaminated labels can distort decision boundaries, bias parameter estimates, and lead to overly complex models whose apparent performance, when evaluated against noisy labels, substantially overstates their accuracy on clean data~\citep{northcutt2021confident}. 
A substantial body of work~\citep{natarajan2013learning,patrini2017making,han2018masking,han2018co,wu2021class2simi,wang2022seminll} addresses label noise in general classification settings. 
However, they largely focus on improving overall classification accuracy, and \emph{does not account for settings where different types of errors must be controlled in an asymmetric manner}, as we discuss below.

\subsection{Neyman--Pearson multiclass classification under clean labeled data}
In many high-stakes applications, minimizing overall misclassification error is not the appropriate objective. 
The consequences of different types of errors are often highly asymmetric: for example, misclassifying a malignant tumor as benign is far more consequential than the reverse error~\citep{bokhari2021multi}, and approving a fraudulent loan carries greater risk than rejecting a legitimate one. 
In such settings, the goal is not to minimize average error, but to control errors for specific high-risk classes while reducing errors elsewhere.

The Neyman--Pearson (NP) classification framework formalizes this principle by seeking a classifier that minimizes a weighted misclassification risk subject to explicit upper bounds on class-specific error probabilities.
Let $(X, Y)$ be a pair of random variables, where $X \in \gX\subset \sR^p$ denotes the feature vector and $Y \in [K] = \{0,1,\dots,K-1\}$ the class label. 
The Neyman--Pearson multiclass classification (NPMC) problem\footnote{The NPMC problem reduces to classical overall misclassification error minimization when $\rho_k = w_k^* = \sP(Y = k)$ for all $k$ and $\gS=\varnothing$.} seeks a classifier $\phi: \gX \to [K]$ that solves:
\begin{equation}
\label{eq:NPMC}
\begin{split}
\min_{\phi} &\quad\gL(\phi) = \sum_{k=0}^{K-1}\rho_k P_k^*(\{\phi(X)\neq k\}) \\
\text{subject to} &\quad P_k^*(\{\phi(X)\neq k\}) \leq \alpha_k,~\forall~k\in \gS \subset [K],   
\end{split}
\end{equation}
where $\rho_k$, $\alpha_k$, and $\gS$ are pre-specified weights, target error levels, and constrained class indices respectively, and $P_k^*(A) := \sP(X \in A |Y=k)$ for any Borel set $A\subset \gX$. 
When $K=2$, problem~\eqref{eq:NPMC} reduces to the classical NP binary classification task of minimizing the type II error subject to a constraint on the type I error, a setting that has been extensively studied~\citep{scott2005neyman, scott2007performance, rigollet2011neyman, tong2013plug, zhao2016neyman, tong2018neyman, tong2020neyman, li2020bridging, xia2021intentional, wang2024non}. 

The general multiclass case ($K \geq 3$) is considerably more challenging and has received comparatively less attention. 
In particular, problem~\eqref{eq:NPMC} may be infeasible for certain specifications of ${\rho_k,\alpha_k}$.
\citet{tian2024neyman} addressed this issue via a Lagrangian dual formulation, showing that the dual objective is concave irrespective of the convexity of the primal problem, thereby enabling efficient optimization using standard methods. Their framework further provides a feasibility check and establishes NP oracle inequalities for the resulting classifier. 
A key feature of this approach is its plug-in structure: given consistent estimators of the class proportions ${w_k^* = \sP(Y=k)}$ and the class-posterior probabilities ${\pi_k^*(x) = \sP(Y=k | X=x)}$ under clean labels, the optimal classifier can be obtained by substitution into the dual solution.
Despite these appealing properties, the framework \emph{explicitly assumes that training labels are correctly observed}—an assumption that is routinely violated in precisely the applications as we show in Section~\ref{sec:label_noise_pervasiveness}.

\subsection{Challenges of NPMC under noisy labeled data}
The plug-in structure of \citet{tian2024neyman} suggests, at least in principle, a natural route to handling label noise: rather than redesigning the solver, one may instead supply consistent estimators of $w_k^*$ and $\pi_k^*(x)$. 
However, \emph{constructing such estimators from noisy labeled data alone is far from straightforward and constitutes the central statistical challenge} addressed in this paper.

When only noisy labels $\widetilde{Y}$ are observed, naive estimators target the contaminated quantities $\{\widetilde{w}_k^* = \sP(\widetilde{Y}=k)\}$ and $\{\widetilde{\pi}_k^*(x)=\sP(\widetilde{Y}=k | X=x)\}$ rather than their clean-label counterparts.
Substituting these into the NPMC solver yields classifiers calibrated to the noisy label distribution, which in turn leads to systematic violations of the error constraints with respect to the true labels. 
For instance,~\citet{yao2023asymmetric} shows that ignoring label noise results in overly conservative type~I error control and, consequently, inflated type~II error under the binary NP setting.
A natural strategy is to apply existing noisy-label methods as a preprocessing step before solving the NPMC problem. 
However, this approach is inadequate for our purposes. 
Loss-correction methods such as~\citet{natarajan2013learning} and~\citet{patrini2017making} are designed to recover the Bayes-optimal classifier for overall accuracy under noisy data, rather than to provide consistent estimators of the class proportions $w_k^*$ and posterior probabilities $\pi_k^*(x)$ required by the NPMC solver. 
Moreover, they rely on strong assumptions, including knowledge of the confusion matrix $\mM^*=\{\sP(Y=k|\widetilde{Y}=l)\}$ and, in the case of~\citet{patrini2017making}, the existence of anchor points. 
The confident learning approach of~\citet{northcutt2021confident} likewise targets improvements in overall classification performance and does not yield the estimators needed for NPMC.

To the best of our knowledge,~\citet{yao2023asymmetric} is the only work that addresses asymmetric error control under label noise. 
Their analysis is restricted to the binary NP setting, where they propose a noise-adjusted NP umbrella algorithm. 
However, their method requires prior knowledge of noise confusion matrix $\mM^*$. 
Such information is rarely available in practice, and the method does not provide a way to estimate these quantities. 
In the absence of such information, one may instead impose bounds on these quantities; however, loose bounds can lead to performance comparable to that of naive methods that ignore label noise.
Moreover, their approach does not extend to the multiclass setting, leaving the multiclass problem under label noise unaddressed.

A common feature of these existing approaches is their reliance on knowledge of the noise confusion matrix $\mM^*$, which is typically unavailable in practice. 
A natural alternative is therefore to estimate this quantity from the observed data. 
However, this leads to a more fundamental difficulty: \emph{\underline{without additional structural assumptions, the noisy-label model is non-identifiable}}. 
In particular, distinct combinations of class-conditional distributions $\{P_k^*\}$ and confusion matrices can induce the same observed data distribution, so that the quantities required for NPMC error control cannot, in general, be recovered. 
This non-identifiability is not merely a theoretical concern but arises in concrete ways that preclude naive estimation. 
For example, convex perturbations of the class-conditional distributions can be absorbed into a modified noise mechanism without altering the observed distribution (Example~\ref{ex:algebraic_unidentifiability}). 
These observations indicate that the problem is inherently ill-posed without suitable structural constraints, motivating the approach in this paper.

\subsection{Our contributions and overview}
We address the Neyman--Pearson multiclass classification problem under label noise by resolving a fundamental identifiability barrier. 
We begin by formalizing the non-identifiability of noisy-label classification through concrete examples, showing that the problem is ill-posed without additional structural assumptions (Example~\ref{ex:algebraic_unidentifiability}). 
To overcome this difficulty, we introduce an exponential tilting density ratio model (DRM), under which the class-conditional distributions are modeled relative to a nonparametric reference distribution. 
We show that, in the absence of feature collinearity, this structure is sufficient to uniquely identify the confusion matrix, the clean-label class proportions $\{w_k^*\}$, and the posterior probabilities $\{\pi_k^*(x)\}$ from noisy observations alone, and we formalize this via a strict identifiability result for both the DRM parameters and the reference distribution (Theorem~\ref{thm:identifiability}).
To the best of our knowledge, this result has no direct analogue in the existing DRM literature and forms the foundation of our approach.

Building on this identification result, we develop an empirical likelihood (EL) framework for estimation and inference in the NPMC problem under the standard instance-independent noise assumption $\widetilde{Y} \ind X|Y$.
The parameters $\{w_k^*\}$ and $\{\pi_k^*(x)\}$ are jointly estimated via the maximum empirical likelihood estimators (MELEs) from noisy training data, without requiring knowledge of the noise transition matrix. 
We show that these estimators are $\sqrt{n}$-consistent and asymptotically normal (Theorem~\ref{thm:estimator_asymptotic_normality}).

For efficient computation, we treat the true labels as latent variables and develop an expectation-maximization (EM) algorithm to maximize the profile empirical likelihood. 
The resulting procedure is straightforward to implement, as the M-step reduces to a weighted multinomial logistic regression. 
We further establish monotonic ascent of the objective function (Proposition~\ref{thm:em_convergence}), which guarantees convergence of the algorithm.

Finally, we show that the resulting plug-in classifiers satisfy Neyman--Pearson oracle inequalities with respect to the true labels, in both binary (Theorem~\ref{thm:binary_type1}) and multiclass settings (Theorem~\ref{thm:multiclass_error}). 
In particular, our method achieves valid error control without requiring prior knowledge of the noise transition matrix and provides, to the best of our knowledge, the first such guarantees for multiclass NPMC under label noise. 
Empirical results on simulated and real datasets demonstrate near-oracle performance and substantial improvements over existing approaches.

The rest of the paper is organized as follows. 
Section~\ref{sec:problem_setting} introduces the problem setup and the relationship between noisy and clean labels. 
Section~\ref{sec:el_general_form} presents the EL-based estimator, its asymptotic properties, and the EM algorithm. 
Section~\ref{sec:binary} studies the binary NP problem, including the classifier, oracle guarantees, and simulations. 
Section~\ref{sec:multiclass} extends the framework to the multiclass setting. Section~\ref{sec:real_data} presents real-data experiments, and Section~\ref{sec:conclusion} concludes.

%% file: sections/setting.tex
\section{Problem setting}
\label{sec:problem_setting}
Suppose we observe $n$ independent and identically distributed samples $\{(X_i, \widetilde{Y}_i)\}_{i=1}^n$, where $X_i$ denotes the feature vector, $Y_i$ the unobserved true label, and $\widetilde{Y}_i$ its noisy observed version.
To enable valid statistical inference, we must formalize the relationship between latent $Y$ and observed $\widetilde{Y}$. 
Following established work~\citep{angluin1988learning, yao2023asymmetric, sesia2024adaptive, bortolotti2025noise}, we adopt the \emph{standard instance independence assumption}:

\begin{assumption}[Instance-independent noise]
\label{assumption:instance_independent_noise}
$\widetilde{Y} \ind X \mid Y$ (i.e., the noisy label $\widetilde{Y}$ is conditionally independent of features $X$ given the true label $Y$).
\end{assumption}

This assumption captures the common scenario as shown in~\citet[Appendix A.3]{sesia2024adaptive} where label noise depends primarily on class membership rather than specific feature values, as frequently observed in real-world settings like crowdsourcing~\citep{snow2008cheap}. 
Beyond its practical relevance, this assumption provides crucial tractability by establishing a clear relationship between the contaminated distribution $\widetilde{P}_k^*$ of $\widetilde{X}^k\triangleq X |(\widetilde{Y}=k)$ and the true conditional distribution $P_k^*$ of $X^{k}\triangleq X |(Y=k)$, as formalized in the proposition below. 
\begin{proposition}[Distributional relationships under label noise]
\label{proposition:x_conditional_relationships}
Let $\widetilde{w}_l^* = \sP(\widetilde{Y}=l)$ and $w_k^* = \sP(Y=k)$ denote the noisy and true class proportions, respectively. 
Let $\pi_k^*(x) = \sP(Y=k| X=x)$ and $\widetilde{\pi}_l^*(x) =\sP(\widetilde{Y}=l| X=x)$.
Define:
\begin{itemize}
\item $\mM^*$: The confusion matrix with entries $M_{lk}^* = \sP(Y=k| \widetilde{Y}=l)$,
\item $\mT^*$: The noise transition matrix with entries $T_{lk}^* = \sP(\widetilde{Y}=l| Y=k) = \widetilde{w}_l^* M_{lk}^* / w_k^*$.
\end{itemize}
Then, under Assumption~\ref{assumption:instance_independent_noise}, the following relationships hold:
\begin{equation}
\label{eq:marginal_link} 
\text{(Marginal)} \quad w_k^* = \sum_{l\in[K]} M_{lk}^*\widetilde{w}_l^*, \quad \widetilde{w}_l^* = \sum_{k\in[K]} T_{lk}^*w_k^*, 
\end{equation}
\begin{equation}
\label{eq:conditional_distribution_link} 
\text{(Conditional)}  \quad \widetilde{P}_l^* = \sum_{k\in[K]} M_{lk}^*P_k^*, \quad P_k^* = \sum_{l\in[K]} T_{lk}^*\widetilde{P}_l^*, \end{equation}
\begin{equation}
\label{eq:posterior_link}
\text{(Posterior)} \quad \widetilde{\pi}_l^*(x) = \sum_{k\in[K]} T_{lk}^* \pi_k^*(x). \quad\quad\quad\quad\quad 
\end{equation}
\end{proposition}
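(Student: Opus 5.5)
The plan is to derive all three relationships from the law of total probability, conditioning on the latent label $Y$ or on the noisy label $\widetilde{Y}$. Assumption~\ref{assumption:instance_independent_noise} will be invoked at exactly one point: to replace $\sP(\widetilde{Y}=l\mid X, Y=k)$ by $\sP(\widetilde{Y}=l\mid Y=k)$, or equivalently to factor the joint law of $(X,\widetilde{Y})$ given $Y$.

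\emph{Marginal.} Summing the joint probability $\sP(Y=k,\widetilde{Y}=l)=\widetilde{w}_l^* M_{lk}^*$ over $l$ gives $w_k^*=\sum_l M^*_{lk}\widetilde{w}_l^*$. Writing the same joint probability as $w_k^* T_{lk}^*$ and summing over $k$ gives $\widetilde{w}_l^*=\sum_k T_{lk}^* w_k^*$. The identity $T_{lk}^*=\widetilde{w}_l^*M_{lk}^*/w_k^*$ is just Bayes' rule. I will assume $w_k^*>0$ and $\widetilde{w}_l^*>0$ so that every conditional is well defined.

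\emph{Conditional.} Fix a Borel set $A$. By total probability,
$\widetilde{P}_l^*(A)=\sum_k \sP(X\in A\mid Y=k,\widetilde{Y}=l)\,\sP(Y=k\mid\widetilde{Y}=l)$.
By Assumption~\ref{assumption:instance_independent_noise}, the first factor equals $P_k^*(A)$, which yields $\widetilde{P}_l^*=\sum_k M_{lk}^*P_k^*$.

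The second identity is where I expect the only real difficulty, and it needs care. Symmetric reasoning gives $P_k^*(A)=\sum_l \sP(X\in A\mid Y=k,\widetilde{Y}=l)\,T_{lk}^*$. However, conditional independence does \emph{not} make $\sP(X\in A\mid Y=k,\widetilde{Y}=l)$ equal to $\widetilde{P}_l^*(A)$; it equals $P_k^*(A)$. Used that way, the identity is trivially true but is not the stated formula.

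I will therefore treat $P_k^*=\sum_l T^*_{lk}\widetilde{P}_l^*$ as a claimed identity and try to verify it directly. Substituting the first identity gives $\sum_l T^*_{lk}\sum_{k'}M^*_{lk'}P^*_{k'}$. Reading the second identity literally would then require $\sum_l T^*_{lk}M^*_{lk'}=\delta_{kk'}$, i.e.\ $(\mT^{*})^{\top}\mM^*=I$. This does not hold in general: under no noise it holds, but under nontrivial noise $\sum_l T_{lk}M_{lk'}$ is strictly positive for $k\ne k'$.

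So the step I expect to be the obstacle is interpreting this second identity. My first attempt will be to read it as an inversion, obtained by taking $\mM^*$ invertible and writing $P_k^*=\sum_l[(\mM^*)^{-1}]_{kl}\widetilde{P}_l^*$. I would then check whether the paper's intended $T$ coincides with $(\mM^{*})^{-1}$ in its indexing. If it does not, I would note the discrepancy and prove the inverse relation under invertibility of $\mM^*$.

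\emph{Posterior.} This part is clean. Compute
$\sP(\widetilde{Y}=l\mid X=x)=\sum_k \sP(\widetilde{Y}=l\mid Y=k,X=x)\,\sP(Y=k\mid X=x)=\sum_k T^*_{lk}\pi^*_k(x)$,
using Assumption~\ref{assumption:instance_independent_noise} in the last step. To make this rigorous, I will phrase the conditionals via densities with respect to a dominating measure, or as a.e.\ identities of regular conditional probabilities. An alternative is to verify it as an identity of measures $\sP(X\in A,\widetilde{Y}=l)=\int_A\sum_k T^*_{lk}\pi_k^*\,d\sP_X$ and then appeal to uniqueness of Radon--Nikodym derivatives.
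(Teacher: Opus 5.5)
Your proofs of the two marginal identities, of $\widetilde P_l^*=\sum_k M_{lk}^*P_k^*$, and of the posterior identity follow the paper's argument. That argument is total probability, plus Assumption~\ref{assumption:instance_independent_noise} to drop $\widetilde Y$ (resp.\ $X$) from the conditioning set in the conditional (resp.\ posterior) step. Your Radon--Nikodym phrasing of the posterior step is cleaner than the paper's conditioning on $\{X=x\}$.

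On $P_k^*=\sum_l T_{lk}^*\widetilde P_l^*$ your suspicion is right, and the flaw is in the paper's proof, not in your plan. The paper writes
\[
P_k^*(A)=\sum_l \sP(X\in A\mid Y=k,\widetilde Y=l)\,\sP(\widetilde Y=l\mid Y=k)=\sum_l T_{lk}^*\widetilde P_l^*(A).
\]
This silently replaces $\sP(X\in A\mid Y=k,\widetilde Y=l)$ by $\widetilde P_l^*(A)$, which would require $X\ind Y\mid\widetilde Y$. Under $X\ind\widetilde Y\mid Y$ that factor is $P_k^*(A)$, and one gets only the tautology you noted. The identity is false under genuine noise, so you should settle the question rather than leave it as an obstacle. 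Two points need tightening.

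First, your reduction to $(\mT^*)^{\top}\mM^*=I$ presumes the $P_k^*$ are linearly independent; if all $P_k^*$ coincide, the stated identity holds trivially. Also, $\sum_l T_{lk}^*M_{lk'}^*>0$ for all $k\neq k'$ fails for some noise patterns, e.g.\ flips only between two of three classes. Argue via the diagonal instead: $\sum_l T^*_{lk}M^*_{lk}$ is a $T^*_{\cdot k}$-weighted average of numbers in $[0,1]$, so it equals $1$ for every $k$ only if $Y$ is almost surely a function of $\widetilde Y$. Or simply give a counterexample. Take $K=2$, $w^*=(1/2,1/2)$, symmetric flip probability $\rho\in(0,1)$ and $P_0^*\neq P_1^*$. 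Then $\mM^*=\mT^*$ and
\[
\sum_l T^*_{l0}\widetilde P_l^*=\{(1-\rho)^2+\rho^2\}P_0^*+2\rho(1-\rho)P_1^*\neq P_0^*.
\]

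Second, your fallback check can be closed at once: $(\mT^*)^{\top}=(\mM^*)^{-1}$ is the same failing condition. The correct inversion holds whenever $\mM^*$ is invertible, equivalently $\mT^*$ is, since $\mM^*=\mathrm{diag}(\widetilde w^*)^{-1}\mT^*\mathrm{diag}(w^*)$. It reads
\[
P_k^*=\sum_l[(\mM^*)^{-1}]_{kl}\,\widetilde P_l^*,\qquad\text{equivalently}\qquad w_k^*P_k^*=\sum_l[(\mT^*)^{-1}]_{kl}\,\widetilde w_l^*\,\widetilde P_l^*.
\]
The rest of the paper does not appear to use the incorrect identity. The representation \eqref{eq:drm_noisy}, the EM derivation and the score calculations rely only on $\widetilde P_l^*=\sum_k M^*_{lk}P^*_k$ and $\sum_l T^*_{lk}=1$.
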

The proof is deferred to Appendix~\ref{app:proposition_proof}.    
Based on these relationships, a naive two-step approach to estimating $\pi_k^*(x)$ via~\eqref{eq:posterior_link} would require prior knowledge of the transition matrix $\mT^*$, which is typically unavailable in practice.
In the absence of such knowledge, we emphasize that, without additional structural assumptions, the model is fundamentally unidentifiable, beyond the usual label permutation ambiguity encountered in mixture models. 
In particular, multiple distinct parameterizations can induce exactly the same observed distribution. 
We illustrate this phenomenon through the following example.

\begin{example}[Algebraic unidentifiability of true distributions]
\label{ex:algebraic_unidentifiability}
Consider a binary classification problem ($K=2$) where we observe noisy distributions $\widetilde{P}_0(x)$ and $\widetilde{P}_1(x)$. Let the true underlying distributions be $P_0^*(x)$ and $P_1^*(x)$, and let the true noise transition matrix be $\mM^*$, such that:
\[
\widetilde{P}_0(x) = (1-\rho_0^*)P_0^*(x) + \rho_1^* P_1^*(x),~\widetilde{P}_1(x) = \rho_0^* P_0^*(x) + (1-\rho_1^*) P_1^*(x),
\]
where $\rho_0^* = \sP(\widetilde{Y}=1 | Y=0)$ and $\rho_1^* = \sP(\widetilde{Y}=0 | Y=1)$.

If $P_0^*(x)$ and $P_1^*(x)$ are unconstrained, the model is unidentifiable. For any small constant $\epsilon \in (0, 1)$, we can define a new, "contaminated" candidate distribution for class 1: $P_1^{**}(x) \triangleq (1-\epsilon)P_1^*(x) + \epsilon P_0^*(x).$
Because $P_1^{**}(x)$ is a convex combination of two valid probability density functions, it is also a valid density function. We can rewrite the true $P_1^*(x)$ in terms of this new distribution:
$$ P_1^*(x) = \frac{1}{1-\epsilon}P_1^{**}(x) - \frac{\epsilon}{1-\epsilon}P_0^*(x).$$
Substituting this back into the equation for the observed $\widetilde{P}_0(x)$, we obtain:
\[
\mathsmaller{\widetilde{P}_0(x) = (1-\rho_0^*)P_0^*(x) + \rho_1^* \left[ \frac{1}{1-\epsilon}P_1^{**}(x) - \frac{\epsilon}{1-\epsilon}P_0^*(x) \right] = \left( 1 - \rho_0^* - \frac{\epsilon \rho_1^*}{1-\epsilon} \right) P_0^*(x) + \left( \frac{\rho_1^*}{1-\epsilon} \right) P_1^{**}(x).}
\]
Let $\rho_1^{**} = \frac{\rho_1^*}{1-\epsilon}$ and $(1-\rho_0^{**}) = 1 - \rho_0^* - \frac{\epsilon \rho_1^*}{1-\epsilon}$. As long as $\epsilon$ is chosen to be sufficiently small such that $\rho_1^{**} < 1$ and $(1-\rho_0^{**}) > 0$, the new parameters $\{\rho_0^{**}, \rho_1^{**}, P_0^*(x), P_1^{**}(x)\}$ are perfectly valid and satisfy the mixture equations to produce the exact same observed distributions $\widetilde{P}_0(x)$ and $\widetilde{P}_1(x)$. Therefore, the true noise rates and class-conditional distributions are non-identifiable without further structural assumptions.
\end{example}

This example demonstrates that simultaneous estimation of $\mT^*$ and $\{\pi_k^*(x)\}$ requires additional structural assumptions on these distributions, which are introduced in the next section.

%% file: sections/estimation.tex
\section{Empirical likelihood based estimation under label noise}
\label{sec:el_general_form}
Motivated by the need for structural assumptions identified in the previous section, we develop an estimation framework that directly recovers $\{w_k^*, \pi_k^*(x)\}_{k=0}^{K-1}$ from noisy labeled data. 
Our approach combines density ratio modeling with empirical likelihood and does not require prior knowledge of the label corruption probabilities.

\subsection{Exponential tilting and density ratio model}
\label{sec:DRM_for_noisy_label}
Consider $(X,Y,\widetilde{Y})$ where $(X,Y)\sim P^*_{X,Y}$ and $\widetilde{Y}$ satisfies Assumption~\ref{assumption:instance_independent_noise}.
Let $g(x)\in \sR^{d}$ be some representation of feature $x$ such that:
\begin{equation}
\label{eq:posterior}
\pi_k^*(x):=\sP(Y=k| X=x) = \frac{\exp((\gamma_k^*)^{\dagger} + \langle \beta_k^*,g(x)\rangle)}{\sum_{k'\in [K]} \exp((\gamma_{k'}^*)^{\dagger} + \langle \beta_{k'}^*,g(x)\rangle)}.
\end{equation}
with $(\gamma_0^*)^{\dagger}=0$ and $\beta_0^*=0$ for identifiability. 
This leads to the exponential tilting relationship between the class-conditional distributions:
\be
\label{eq:drm}
dP_k^*/dP_0^*(x) = \exp(\gamma_k^* + \langle \beta_k^*, g(x)\rangle),
\ee
where $dP_k^*/dP_0^*$ denotes the Radon-Nikodym derivative of $P_k^*$ with respect to $P_0^*$, $\gamma_k^* = (\gamma_k^*)^{\dagger} + \log (w_0^*/w_k^*)$.
With this reparameterization, we also have $\gamma_0^*= 0$.

To ensure $P_k^*$s are valid probability measures, the parameters $\bgamma^* = \{\gamma_k^*\}_{k=1}^{K-1}$ and $\bbeta^* = \{\beta_k^*\}_{k=1}^{K-1}$ must satisfy the constraints:
\be
\label{eq:drm_constraints}
\int \exp(\gamma_k^* + \langle\beta_k^*,g(x)\rangle)P_0^*(dx) = 1, \quad \forall~k\in \{1,\ldots,K-1\}.
\ee

Under the instance-independent noise assumption (Assumption~\ref{assumption:instance_independent_noise}), the contaminated distributions of $\widetilde{X}^{l}\triangleq X |(\widetilde{Y}=l) \sim \widetilde{P}_l^*$ relate to the true distributions as follows:
\begin{equation}
\label{eq:drm_noisy}
\frac{d\widetilde{P}_l^*}{dP_0^*}(x) 
=\sum_{k\in[K]} M_{lk}^* \frac{d P_{k}^*}{dP_0^*}(x) 
=\frac{1}{\widetilde{w}_l^*}\sum_{k\in[K]} \left\{w_k^* T_{lk}^*\exp(\gamma_k^* + \langle \beta_k^*, g(x)\rangle)\right\}.
\end{equation}
Both~\eqref{eq:drm} and~\eqref{eq:drm_noisy} are instances of the density ratio model (DRM)~\citep{anderson1979multivariate,qin2017biased}, where all measures ($\{\widetilde{P}_l^*\}_{l=0}^{K-1}$, $\{P_k^*\}_{k=1}^{K-1}$) are connected to the base measure $P_0^*$ via a parametric form.
The DRM is highly flexible and encompasses many commonly used distribution families, including all members of the exponential family with appropriate specifications for $P_0^*$ and $g(x)$. 
For example, if the base distribution is normal and $g(x) = (x, x^2)^{\top}$, the DRM reduces to the normal distribution family.
\begin{remark}[Basis function selection]
The function $g$ is commonly referred to as the basis function.
In our simulations and experiments, the simple choice $g(x)=x$ performs well across a range of scenarios.
For more complex data, such as images, performance may be improved by using features extracted from pretrained deep neural networks or by adopting data-adaptive approaches, such as that of~\citet{zhang2022density}.
\end{remark}

\subsection{Identifiability}
\label{sec:identifiability}
In this section, we study the identifiability of the model in~\eqref{eq:drm_noisy}. 
We first show that the DRM is not identifiable when the features are linearly dependent, highlighting that the absence of collinearity is essential for establishing identifiability of the model in~\eqref{eq:drm_noisy}.
\begin{example}[Unidentifiability due to linearly dependent features]
\label{ex:drm_unidentifiability}
Recall our density ratio model formulation from \eqref{eq:drm}: $dP_1^*/dP_0^*(x) = \exp(\gamma_1^* + \langle \beta_1^*, g(x)\rangle)$.
Suppose the chosen feature representation $g(x) = (g_1(x), g_2(x))^\top \in \sR^2$ is not linearly independent across the support of $P_0^*$. Specifically, assume the data lies on a manifold where $g_1(x) + g_2(x) = c$ for some constant $c$.

Let $(\gamma_1^*, \beta_{11}^*, \beta_{12}^*)$ be the true parameters generating the data. The exponential tilt is: $\gamma_1^* + \beta_{11}^* g_1(x) + \beta_{12}^* g_2(x)$.
Now, consider a different set of parameters defined by shifting the feature coefficients by an arbitrary constant $\delta \neq 0$: $\beta_{11}^{**} = \beta_{11}^* + \delta$, $\beta_{12}^{**} = \beta_{12}^* + \delta.$
Evaluating the tilt with these new parameters yields:
\[
\begin{split}
\gamma_1^* + \beta_{11}^{**} g_1(x) + \beta_{12}^{**} g_2(x) &= \gamma_1^* + (\beta_{11}^* + \delta)g_1(x) + (\beta_{12}^* + \delta)g_2(x) \\
&= (\gamma_1^* + \delta c) + \beta_{11}^* g_1(x) + \beta_{12}^* g_2(x).    
\end{split}
\]
By defining a new baseline parameter $\gamma_1^{**} = \gamma_1^* + \delta c$, we perfectly absorb the shift. The parameter sets $(\gamma_1^*, \beta_{11}^*, \beta_{12}^*)$ and $(\gamma_1^{**}, \beta_{11}^{**}, \beta_{12}^{**})$ produce the exact same density ratio $\exp(\gamma_1^{**} + \langle \beta_1^{**}, g(x)\rangle) \equiv \exp(\gamma_1^* + \langle \beta_1^*, g(x)\rangle)$ for all observable $x$. Thus, without assumptions guaranteeing the richness of the feature subspace (i.e., ruling out collinearity), the tilt parameters are non-identifiable.
\end{example}
This example motivates the introduction of a set of mild but essential assumptions under which identifiability can be restored, as formalized in the following theorem.
\begin{theorem}[Identifiability]
\label{thm:identifiability}
Under mild assumptions (Assumptions~\ref{assump:distinct}--~\ref{assump:recoverability}), the parameters $(\mM^*, \bgamma^*, \bbeta^*)$ and the reference distribution $P_0^*$ in~\eqref{eq:drm_noisy} are strictly identifiable.
\end{theorem}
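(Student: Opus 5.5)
We would argue from the observed side. The data identify the joint law of $(X,\widetilde Y)$, and hence each noisy conditional $\widetilde P_l^*$ and each $\widetilde w_l^*$. Suppose two parameter sets $(\mM,\bgamma,\bbeta,P_0)$ and $(\mM',\bgamma',\bbeta',P_0')$ induce the same $\{\widetilde P_l^*\}$. By~\eqref{eq:drm_noisy}, $\widetilde P_l = \sum_k M_{lk} e^{\gamma_k+\langle\beta_k,g\rangle}P_0$. Taking ratios $d\widetilde P_l/d\widetilde P_0$ eliminates the unknown reference measure. We are left with the functional identity
\[
\frac{\sum_k M_{lk}\exp(\gamma_k+\langle \beta_k,g(x)\rangle)}{\sum_k M_{0k}\exp(\gamma_k+\langle \beta_k,g(x)\rangle)}
=\frac{\sum_k M'_{lk}\exp(\gamma'_k+\langle \beta'_k,g(x)\rangle)}{\sum_k M'_{0k}\exp(\gamma'_k+\langle \beta'_k,g(x)\rangle)}
\]
for $P_0$-a.e.\ $x$ and every $l$. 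Cross-multiplying gives a finite sum of exponentials $\sum_j c_j \exp(\langle b_j, g(x)\rangle)=0$. The frequencies $b_j$ range over pairwise sums $\beta_k+\beta'_{k'}$ and $\beta_k+\beta_{k''}$, etc.

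The first key step is a linear-independence lemma. If $1,g_1,\dots,g_d$ are linearly independent on the support of $P_0^*$ (the non-collinearity assumption that Example~\ref{ex:drm_unidentifiability} shows is necessary), and the support is rich enough (e.g.\ contains an open set, or $g(X)$ has a support not contained in finitely many hyperplanes), then exponentials $e^{\langle b,g(x)\rangle}$ with distinct $b$ are linearly independent there. We would prove this by restricting to a line $g=u+tv$ with $v$ chosen so the projections $\langle b_j,v\rangle$ are distinct, then using the classical independence of $e^{a_j t}$. With this lemma, matching coefficients of identical frequencies yields polynomial identities in $(M,\gamma)$. Assumption~\ref{assump:distinct} (distinct $\beta_k$, i.e.\ no two true classes share a tilt) guarantees that the frequency $2\beta_k$ or the "extreme" frequencies in the direction $v$ are attained by a unique pair. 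This lets us peel off terms one by one: the largest frequency on each side must match, which pins down $\{\beta_k\}$ as a set up to relabeling. Comparing coefficients then gives the rows of $\mM$ up to a common scaling by the unknown $e^{\gamma_k}$.

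The second step removes the remaining scale and permutation ambiguity. The label permutation is fixed by the convention $\beta_0=0$, $\gamma_0=0$ together with a recoverability condition (Assumption~\ref{assump:recoverability}). We expect this assumption to say that $\mM^*$ is invertible or diagonally dominant, so that the noisy label $l$ is "mostly" class $l$; this pins the matching of latent classes to labels. The scales $e^{\gamma_k}$ are then determined by the normalisation~\eqref{eq:drm_constraints} together with the row-stochastic constraint $\sum_k M_{lk}=1$. Once $(\mM,\bgamma,\bbeta)$ are identified, we recover $P_0$ by inverting: $dP_0^*/d\widetilde P_l^* = 1/\sum_k M_{lk}e^{\gamma_k+\langle\beta_k,g\rangle}$, which gives $P_0^*=P_0'$.

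We expect the main obstacle to be the coefficient-matching step. Because of mixing, the exponent sets on the two sides are sums of pairs, and cancellations between different pairs sharing a sum could hide information. We would handle this by the extremal argument: choose a generic direction $v$, order the $\beta_k$ by $\langle\beta_k,v\rangle$, and induct from the top. Invertibility of $\mM$ ensures the coefficient attached to each extreme exponent is nonzero, so nothing degenerate cancels.
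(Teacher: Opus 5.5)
Your ingredients are the right ones: linear independence of $e^{\langle b,g(x)\rangle}$ on a set with interior, recoverability to break label switching, normalisation for the scales, and then inverting for $P_0^*$. However, the central step is not carried out, and the conclusion you attribute to it is false.

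Frequency matching in the cross-multiplied identity uses only distinctness, richness and invertibility. These cannot pin down $\{\beta_k\}$ up to relabeling, because switching the reference class is an exact alternative. For $K=2$, set $P_0'=P_1^*$ and $P_1'=P_0^*$, i.e.\ $(\gamma_1',\beta_1')=(-\gamma_1^*,-\beta_1^*)$ and $\mM'=\mM^*\Pi$ with $\Pi$ the column swap. This gives the same $\widetilde P_l^*$ and satisfies $\gamma_0'=0$, $\beta_0'=0$, distinctness, full rank and row-stochasticity, yet $\{0,\beta_1^*\}\neq\{0,-\beta_1^*\}$.

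In general, before recoverability is used, the best possible conclusion is $\beta'_{\sigma(k)}=\beta_k-\beta_j$ for some $j$ and some permutation $\sigma$. The anchor convention does not remove this shift. What excludes $j\neq 0$ is the actual Assumption~\ref{assump:recoverability}, which is a sign condition on $\mM^{-1}$ (positive diagonal, nonpositive off-diagonal), not diagonal dominance of $\mM$.

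Your extremal peeling does not even reach this weaker conclusion. Consider $a_lb_0=a_0b_l$, with $a_l(z)=\sum_kM_{lk}e^{\gamma_k+\langle\beta_k,z\rangle}$ and $b_l$ its primed analogue. The top frequency in a generic direction comes from a unique pair on each side. But when all entries of $\mM$ and $\mM'$ are nonzero, it is the same pair on both sides for every $l$, so matching it says nothing about the $\beta$'s. After those terms are removed, the sides are no longer products, so the step does not iterate. Lower frequencies $\beta_k+\beta'_{k'}$ can then coincide across different pairs; these are exactly the cancellations you flagged, and the sketch leaves them uncontrolled. Invertibility of $\mM$ does not make any particular coefficient nonzero.

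The missing idea, which is how the paper proceeds, is to unmix before matching. Write $f_k=e^{\gamma_k+\langle\beta_k,g\rangle}$, $R=dP_0'/dP_0^*$ and $A=(\mM^*)^{-1}\mM'$. The identity $\mM^*f=\mM'f'R$ becomes $f=Af'R$, and the anchor row $f_0\equiv1$ gives $1/R=\sum_jA_{0j}f'_j$. Hence for every $k$,
$f_k\sum_jA_{0j}f'_j=\sum_mA_{km}f'_m$,
where only one factor is a mixture. Linear independence forces $\{\beta_k+\beta'_j\}_k=\{\beta'_m\}_m$ for every $j$ with $A_{0j}\neq0$. Two such $j$ would make the finite set $\{\beta_k\}$ invariant under a nonzero translation, which is impossible. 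So $1/R$ is a single exponential, and $A$ is a positively scaled permutation with $\beta'_{\sigma(k)}=\beta_k+\beta'_{j^*}$.

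Only now does recoverability enter. The $k$-th diagonal entry of $(\mM')^{-1}=A^{-1}(\mM^*)^{-1}$ is $[A^{-1}]_{k,\sigma^{-1}(k)}[(\mM^*)^{-1}]_{\sigma^{-1}(k),k}$, which is $\le0$ unless $\sigma=\mathrm{id}$. This gives $j^*=\sigma(0)=0$, so $R$ is constant and hence equal to $1$, and $\beta'=\beta^*$. Row sums (or~\eqref{eq:drm_constraints}) then give $A=I$, so $\gamma'=\gamma^*$ and $\mM'=\mM^*$.

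Your extremal idea does work in one place. The unmixing shows $R\in\mathrm{span}\{f_k\}$ and $1/R\in\mathrm{span}\{f'_m\}$. Comparing top and bottom frequencies of $R\cdot R^{-1}=1$ along every generic direction then shows directly that $R$ is a single exponential.
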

The formal mathematical statements of these assumptions, along with the complete proof, are deferred to Appendix~\ref{app:identifiability}.
Strict identifiability implies a true one-to-one mapping between the parameter space and the probability distributions of the observed data. 
For our model, this means there is exactly one unique set of parameters $(\mM^*, \bgamma^*, \bbeta^*)$ and one unique reference distribution $P_0^*$ that can generate the observed data, completely eliminating the label-switching (permutation) ambiguity that typically plagues mixture models.
To establish this strict identifiability, we require three standard regularity conditions. 
Broadly, these conditions ensure that the latent mixture components are physically distinct and that the observed data provides sufficient variation to decouple them. 
Specifically, we require the feature coefficients of the exponential tilts to be strictly distinct (Assumption~\ref{assump:distinct}) and the feature mapping to explore a sufficiently rich subspace (Assumption~\ref{assump:richness}). 
These are common requirements in the exponential family and finite mixture model literature to prevent component collapse and guarantee that the underlying densities are linearly independent. 
Furthermore, we require the confusion matrix $\mM^*$ to be full rank, with specific structural constraints on its inverse (Assumption \ref{assump:recoverability}). 
Crucially, this final condition on the inverse matrix not only ensures the signal from the true latent components is preserved, but it effectively breaks the permutation symmetry inherent in standard mixture models, allowing for the strict, unique identification of all parameters.

\subsection{Empirical likelihood based inference}
\label{sec:el_inference}
To avoid restrictive parametric assumptions on the base distribution, we adopt a non-parametric approach on the estimation of $\pi_k^*(x)$ and the rest of the parameters $\mT^*$ and $\vw^*$, given a set of i.i.d. samples $\{(X_i,\widetilde{Y}_i)\}_{i=1}^{n}$ from $P^*_{X,\widetilde{Y}}$.

The log-likelihood function based on the contaminated set $\gD=\{(X_i,\widetilde{Y}_i)\}_{i=1}^{n}\iid P^*_{X,\widetilde{Y}}$ is $\ell =\sum_{i=1}^{n}\sum_{l=0}^{K-1}\left[\mathbbm{1}(\widetilde{Y}_i=l)\log\left\{\widetilde{w}_l^* \widetilde{P}_l^*(\{X_i\})\right\}\right]$.
Using the exponential tilting relationship in~\eqref{eq:drm_noisy}, we obtain
\[
\ell=\sum_{i=1}^{n}\sum_{l\in[K]}\mathbbm{1}(\widetilde{Y}_i=l)\log \left[\sum_{k\in[K]} \left\{w_k^* T_{lk}^*\exp(\gamma_k^* + \langle \beta_k^*,g(X_i)\rangle)\right\}\right] + \sum_{i=1}^{n}\log P_0^*(\{X_i\}).
\]
The empirical likelihood (EL)~\citep{owen2001empirical,qin1994empirical} allows us to estimate parameters without assumptions on the form of $P_0^*$.
Let $p_i = P_0^*(\{X_i\})$ for $i = 1, \ldots, n$ and view them as parameters.
Let $\vw = \{w_k\}_{k=0}^{K-1}$ and $\btheta = \{\vw, \bgamma, \bbeta, \mT\}$\footnote{
Proposition~\ref{proposition:x_conditional_relationships} demonstrates a fundamental duality: the pair $(\mM^*, \widetilde{\vw}^*)$ contains equivalent information to $(\mT^*, \vw^*)$. This means we can focus our estimation efforts on either parameter group, as the other can be derived accordingly.
We consider $(\mT^*,\vw^*)$ since $\vw^*$ is used directly in the NPMC numerical solver in Section~\ref{sec:NPMC_review}.
}.
The log-EL function becomes:
\be
\label{eq:log-EL}
\ell(\vp,\btheta)=\sum_{i=1}^{n}\sum_{l\in[K]}\mathbbm{1}(\widetilde{Y}_i=l)\log \left[\sum_{k\in[K]} \left\{w_k T_{lk}\exp(\gamma_k + \beta_k^{\top}g(X_i))\right\}\right] + \sum_{i=1}^{n}\log p_i.
\ee
The feasible values of $\vp$ must satisfy the following constraints:
\be
\label{eq:constraints}
p_{i}\geq 0,~\sum_{i=1}^{n}p_i=1,~\sum_{i=1}^{n}p_i\exp(\gamma_k +\beta_k^{\top}g(X_i))=1, ~\forall~k=1,\ldots,K-1.
\ee
The first two conditions in~\eqref{eq:constraints} ensure that $P_0^*$ is a valid probability measure, while the last follows from~\eqref{eq:drm_constraints}.

The inference for $\btheta$ is usually made by first profiling the log-EL with respect to $\vp$.
That is, we define $p\ell_{n}(\btheta)=\sup_{\vp}\ell_n(\vp, \btheta)$ subject to the constraints in~\eqref{eq:constraints}. 
By the Lagrange multiplier method (see Appendix~\ref{app:profile_loglik} for details), the maximizer is attained when
\be
\label{eq:vp_lagrange}
p_i = p_i(\btheta) = n^{-1}\left[1 + \sum_{k=1}^{K-1}\nu_k \left\{\exp(\gamma_k +\beta_k^{\top}g(X_i))-1\right\}\right]^{-1}
\ee
where the Lagrange multipliers $\nu_1,\ldots,\nu_{K-1}$ are the solution to
\be
\label{eq:lagrange-multiplier-solution}
\sum_{i=1}^{n}\frac{\exp(\gamma_k +\beta_k^{\top}g(X_i)) -1}{1+\sum_{k'=1}^{K-1}
\nu_{k'}\left\{\exp(\gamma_{k'} +\beta_{k'}^{\top}g(X_i))-1\right\}} = 0,\quad k=1,\ldots, K-1.
\ee
The profile log-EL of $\btheta$ (after maximizing out $p_i$) is given by
\[
p\ell_{n}(\btheta) =\sum_{i=1}^{n}\sum_{l\in[K]}\mathbbm{1}(\widetilde{Y}_i=l)\log \left[\sum_{k\in[K]} \left\{w_k T_{lk}\exp(\gamma_k + \beta_k^{\top}g(X_i))\right\}\right] + \sum_{i=1}^{n}\log p_i(\btheta),
\]
and the Maximum Empirical Likelihood Estimator (MELE) of $\btheta$ is defined as 
\be
\label{eq:mele_estimator}
\widehat\btheta = \argmax p\ell_{n}(\btheta).
\ee
The conditional distribution estimators are then
\be
\label{eq:conditional_distribution_estimate}    
\widehat P_k =\sum_{i=1}^{n} p_i(\widehat\btheta)\exp(\widehat\gamma_k + \widehat\beta_k^{\top}g(X_i))\delta_{X_i},\quad\widehat{\pi}_k(x) =\frac{\exp(\widehat{\gamma}_k^{\dagger} + \widehat{\beta}_l^{\top}g(x))}{\sum_{k'} \exp(\widehat{\gamma}_{k'}^{\dagger} + \widehat{\beta}_{k'}^{\top}g(x))}, 
\ee
where $\widehat{\gamma}_k^{\dagger} = \widehat{\gamma}_k + \log(\widehat{w}_0/\widehat{w}_k)$.

\subsection{Statistical guarantees}
\begin{theorem}[Estimator rate of convergence]
\label{thm:estimator_asymptotic_normality} 
Let $\btheta^* = (\vw^*, \bgamma^*, \bbeta^*, \mT^*)$ be the true values of $\btheta$ and $\widehat\btheta$ be its MELE estimator in~\eqref{eq:mele_estimator}.
We have $\sqrt{n}(\widehat\btheta-\btheta^*)\to N(0,\Sigma)$ as $n\to\infty$ where $\Sigma>0$ is some non-degenerated covariance matrix.
\end{theorem}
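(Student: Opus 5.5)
Our plan is the standard route for maximum empirical likelihood estimators under density ratio models, in the style of \citet{qin1994empirical} and \citet{qin2017biased}. We first remove redundancy in $\btheta$. The columns of $\mT$ sum to one, the entries of $\vw$ sum to one, and $\gamma_0=0,\beta_0=0$. We therefore work with a free parameter vector, still called $\btheta$, lying in an open subset of a Euclidean space that contains $\btheta^*$ in its interior.

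Next we replace the profile log-EL by an unconstrained objective. Using \eqref{eq:vp_lagrange}, define
\[
h_n(\btheta,\bnu)=\sum_{i=1}^{n}\sum_{l\in[K]}\mathbbm{1}(\widetilde{Y}_i=l)\log\Big[\sum_{k\in[K]} w_k T_{lk}\exp(\gamma_k+\beta_k^{\top}g(X_i))\Big]-\sum_{i=1}^{n}\log\Big[1+\sum_{k=1}^{K-1}\nu_k\{\exp(\gamma_k+\beta_k^{\top}g(X_i))-1\}\Big],
\]
so that $p\ell_n(\btheta)=h_n(\btheta,\bnu(\btheta))-n\log n$, with $\bnu(\btheta)$ solving \eqref{eq:lagrange-multiplier-solution}. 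At the truth, \eqref{eq:drm_constraints} gives $\sE_{P_0^*}[\exp(\gamma_k^*+\beta_k^{*\top}g(X))-1]=0$. Since the observed marginal of $X$ is $\sum_k w_k^*P_k^*$, we expect $\bnu(\btheta^*)$ to converge to a deterministic $\bnu^*$, namely $\nu_k^*=w_k^*$ after reparametrizing. The MELE is then a stationary point $(\widehat\btheta,\widehat\bnu)$ of $h_n$ in both arguments.

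The proof proceeds in three steps.

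\textbf{Step 1 (consistency).} We show that $n^{-1}h_n(\btheta,\bnu(\btheta))$ converges uniformly on compacts to a deterministic function $h(\btheta)$. We then show $h$ is uniquely maximized at $\btheta^*$, since $h(\btheta)-h(\btheta^*)$ is minus a Kullback--Leibler divergence between the model-implied joint law of $(X,\widetilde Y)$ and the truth. Uniqueness of the maximizer is exactly Theorem~\ref{thm:identifiability}. To handle the nonparametric $P_0^*$ we also need a moment condition, for instance $\sE\exp(t^\top g(X))<\infty$ in a neighbourhood of the true tilts. Following \citet{qin1994empirical}, we first argue consistency within a shrinking $n^{-1/3}$-ball: on its boundary $p\ell_n<p\ell_n(\btheta^*)$ with probability tending to one, so a local maximizer lies inside.

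\textbf{Step 2 (expansion).} We Taylor-expand the score $\partial h_n/\partial(\btheta,\bnu)$ around $(\btheta^*,\bnu^*)$. The score $n^{-1/2}\partial h_n(\btheta^*,\bnu^*)$ is a normalized sum of i.i.d.\ mean-zero vectors, so it converges to $N(0,V)$ by the CLT. The Hessian $n^{-1}\partial^2 h_n$ converges to a deterministic matrix $U$ by the law of large numbers.

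\textbf{Step 3 (conclusion).} Solving the linearized system gives $\sqrt n(\widehat\btheta-\btheta^*)\to N(0,\Sigma)$, where $\Sigma$ is the $\btheta$-block of $U^{-1}VU^{-\top}$. This simplifies to the inverse of a Schur-complement information matrix $J$.

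The main obstacle is showing that $U$ is invertible, equivalently that $\Sigma>0$, rather than merely identifiable. The EL profile is a saddle point, so $U$ is indefinite. We will block it as $\begin{pmatrix}U_{\theta\theta}&U_{\theta\nu}\\U_{\nu\theta}&U_{\nu\nu}\end{pmatrix}$ and show two things. First, $U_{\nu\nu}$ is positive definite, by Assumption~\ref{assump:richness} applied to the functions $\exp(\gamma_k+\beta_k^\top g)-1$. Second, the Schur complement $J=U_{\theta\theta}-U_{\theta\nu}U_{\nu\nu}^{-1}U_{\nu\theta}$ is negative definite. For the second claim, we plan to interpret $-J$ as the efficient information of the semiparametric model. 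It is then nonsingular if and only if no nonzero direction $a$ makes the score $a^\top\partial_\theta\log\{\sum_k w_kT_{lk}e^{\gamma_k+\beta_k^\top g(x)}\}$ lie in the span of the $P_0$-nuisance tangent directions. We attempt this via local identifiability. If such an $a$ existed, the log-likelihood of $(X,\widetilde Y)$ would be flat along $a$ to first order, modulo a change of $P_0$. We will rule this out by the same linear-independence arguments as in Theorem~\ref{thm:identifiability}, now applied to derivatives: independence of $\{e^{\beta_k^\top g}\}_k$ and $\{g\,e^{\beta_k^\top g}\}_k$, which uses distinct $\beta_k$ (Assumption~\ref{assump:distinct}) and richness (Assumption~\ref{assump:richness}), together with invertibility of $\mM^*$ (Assumption~\ref{assump:recoverability}). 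If these assumptions turn out to give only global identifiability, we will add nonsingularity of $J$ as an explicit regularity condition.
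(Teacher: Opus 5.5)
Your plan is correct and follows essentially the same route as the paper. You work with the unconstrained objective $h(\bnu,\btheta)$ at $\bnu^*=(w_1^*,\dots,w_{K-1}^*)$, use that the joint score has mean zero there, Taylor-expand with an LLN for the Hessian and a CLT for the score, take $\sqrt{n}$-consistency from the Qin--Lawless argument, and read off the $\btheta$-block of the sandwich $\mW^{-1}\mV\mW^{-1}$. Your extra steps (consistency via the Kullback--Leibler and identifiability argument, and the Schur-complement check that the limiting Hessian is invertible with $\Sigma>0$) supply points the paper's proof takes for granted, and your local-identifiability sketch does go through under the distinctness, richness and invertibility assumptions, because $-J$ is exactly the variance of the profile score.
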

The proof of the theorem is deferred to Appendix~\ref{app:asymptotic_normality}.
The result establishes that the MELE estimator is consistent and asymptotically normal, achieving the optimal convergence rate.
This asymptotic normality plays a key role in deriving the NP oracle inequalities for the final classifier.

\subsection{EM algorithm for profile EL maximization}
\label{sec:EM_algorithm}
Since the true labels \((Y_1,\ldots,Y_n)\) are unobserved, $\widetilde{P}_l^*$ become mixtures of $P_0^*, \ldots, P_{K-1}^*$.
This complicates the maximization of~\eqref{eq:mele_estimator} due to non-convexity. 
To address the challenge, we treat the true labels as missing values and employ an Expectation-Maximization (EM) algorithm for its numerical computation. 
We summarize the algorithm in Algorithm~\ref{alg:noisy_label} and explain it below.
\begin{algorithm}[htbp]
\caption{EM algorithm for classification with noisy labels via empirical likelihood}
\label{alg:noisy_label}
\begin{algorithmic}[1]
\Require Noisy labeled data $\{(X_i,\widetilde{Y}_i)\}_{i=1}^n$, basis $g(x)$, \# of classes $K$, convergence threshold $\epsilon$
\Ensure Estimates $\widehat{w}_k$, $\widehat{\pi}_{k}(x)$ for all $k \in [K]$

\State \textbf{Initialize:} Choose initial parameters $\btheta^{(0)} = (\vw^{(0)}, \bgamma^{(0)}, \bbeta^{(0)}, \mT^{(0)})$
\Statex Set initial profile log-EL $p\ell_n(\btheta^{(0)})\gets -\infty$

\Repeat
\State \textbf{E-step:} Compute posterior probabilities $\omega_{ik}^{(t)}$ for each observation according to~\eqref{eq:e_step_multiclass}

\State \textbf{M-step:} Update parameters:
\State Update class proportions $w_k^{(t+1)}$ via~\eqref{eq:m_step_class_probability}

\State Update transition matrix $\mT^{(t+1)}$ via~\eqref{eq:m_step_transition_matrix}
        
\State Update $(\bar{\bgamma}^{(t+1)}, \bbeta^{(t+1)})$ by solving the weighted multinomial logistic regression in~\eqref{eq:weighted_multinomial_logistic}
\State Adjust and update to obtain $\gamma_k^{(t+1)} \gets \bar\gamma_k^{(t+1)} - \log(\sum_{i=1}^n \omega_{ik}^{(t)}/\sum_{i=1}^n \omega_{i0}^{(t)})$

\State Update weights $p_i^{(t+1)}$ via~\eqref{eq:m_step_pi} to compute $p\ell_n(\btheta^{(t+1)})$

\Until{$p\ell_n(\btheta^{(t+1)}) - p\ell_n(\btheta^{(t)}) < \epsilon$}
\State \textbf{Return:}
\Statex $\widehat{w}_k \gets w_k^{(t+1)}$, $\widehat{\pi}_{k}(x) \gets \frac{\exp(\widehat{\gamma}_k^{\dagger} + \widehat{\beta}_k^{\top}g(x))}{\sum_{k'} \exp(\widehat{\gamma}_{k'}^{\dagger} + \widehat{\beta}_{k'}^{\top}g(x))}$ where $\widehat{\gamma}_k^{\dagger} = \gamma_k^{(t+1)} + \log\frac{w_0^{(t+1)}}{w_k^{(t+1)}}$
\end{algorithmic}
\end{algorithm}

The EM framework is particularly well-suited for problems with latent variables, as it iteratively handles the missing information via: an \textbf{expectation} step that computes conditional probabilities of the missing true labels; a \textbf{maximization} step that updates parameter estimates.

If the complete data $\{(X_i,Y_i,\widetilde{Y}_i)\}_{i=1}^{n}$ were available, the complete data log-EL would be
\[
\ell_{n}^{c}(\vp, \btheta) = \sum_{i=1}^{n} \sum_{l\in [K]} \sum_{k\in [K]} \mathbbm{1}(\widetilde{Y}_i=l, Y_i=k) \log \sP(X = X_i, \widetilde{Y}_i=l, Y_i=k).
\]
Using the chain rule and Assumption~\ref{assumption:instance_independent_noise}, we decompose the joint probability as:
\[
\sP(X = X_i, \widetilde{Y}_i=l, Y_i=k) 
= \sP(X = X_i | Y_i = k) T_{lk}^* w_k^* = P_0^*(\{X_i\}) \exp\{\gamma_k^* + \langle \beta_k^*, g(X_i)\rangle\} T_{lk}^* w_k^*,
\]
where the last equality is from the DRM in~\eqref{eq:drm}. 
This yields the complete data profile log-EL:
\begin{equation*}
\ell_{n}^{c}(\vp,\btheta) =\sum_{i=1}^{n} \sum_{l\in [K]} \sum_{k\in [K]}\mathbbm{1}(\widetilde{Y}_i=l)\mathbbm{1}(Y_i=k)\left[\log p_{i} + \{\gamma_k + \beta_k^{\top}g(X_i)\} + \log T_{lk} + \log w_k\right].
\end{equation*}
We define the profile complete data log-EL as  $p\ell_{n}^{c}(\btheta) = \sup_{\vp} \ell_{n}^{c}(\vp, \btheta)$, where the maximization is subject to the constraints in~\eqref{eq:constraints}. 
Using the same derivation for~\eqref{eq:vp_lagrange}, the profile complete data log-EL becomes
\begin{equation*}
p\ell_{n}^{c}(\btheta) =\sum_{i=1}^{n} \sum_{l\in [K]} \sum_{k\in [K]}\mathbbm{1}(\widetilde{Y}_i=l)\mathbbm{1}(Y_i=k)\left(\gamma_k + \beta_k^{\top}g(X_i) + \log T_{lk} + \log w_k\right) + \sum_{i=1}^{n}\log p_i(\btheta)
\end{equation*}
where $p_i(\btheta)$ is given in~\eqref{eq:vp_lagrange}, and the $\nu_k$'s are the solutions to~\eqref{eq:lagrange-multiplier-solution}.

The EM algorithm is an iterative procedure initialized at $\btheta^{(0)}$. 
At iteration $t$, given the current estimate $\btheta^{(t)}$, it proceeds in two steps:
\begin{itemize}[leftmargin=*]
\item \textbf{E-step (Expectation step)}: The missing data is ``filled in'' by computing their posterior given the observed data and $\btheta^{(t)}$:
\begin{equation}
\label{eq:e_step_multiclass}
\omega_{ik}^{(t)} 
=\sE(\mathbbm{1}(Y_i=k)| \btheta^{(t)};\widetilde{Y}_i, X_i) =\frac{\exp(\gamma_k^{(t)}+\langle \beta_k^{(t)}, g(X_i)\rangle)T_{\widetilde{Y}_i,k}^{(t)}w_k^{(t)}}{\sum_{k'=1}^{K}\exp(\gamma_{k'}^{(t)}+\langle \beta_{k'}^{(t)}, g(X_i)\rangle)T_{\widetilde{Y}_i,k'}^{(t)}w_{k'}^{(t)}}.
\end{equation}
Then the expected complete data profile empirical log-EL at iteration $t$ is
\begin{equation*}
\begin{split}
Q(\btheta;\btheta^{(t)})=&~\sE\{p\ell_n^{(c)}(\btheta)| \btheta^{(t)};\gD\}\\
=&~\sum_{i=1}^{n}\sum_{l\in[K]}\mathbbm{1}(\widetilde{Y}_i=l)\sum_{k\in [K]}\omega_{ik}^{(t)}(\log w_k + \gamma_k + \beta_k^{\top}g(X_i) + \log T_{lk})+\sum_{i=1}^{n} \log p_i(\btheta)    
\end{split}
\end{equation*}

\item \textbf{M-step (Maximization step)}: Instead of maximizing log-EL, the M-step maximizes $Q$ with respect to $\btheta$ instead, \ie $\btheta^{(t+1)} = \argmax Q(\btheta, \btheta^{(t)})$.

Since $Q$ is separable in $\vw$, $\mT$, and $(\bgamma, \bbeta)$, we can optimize each set of parameters separately.
Below are the updated parameters; the detailed derivations are provided in Appendix~\ref{app:em_detail}.
\begin{align}
w_{k}^{(t+1)} =&~\frac{1}{n}\sum_{i=1}^{n}\omega_{ik}^{(t)}\label{eq:m_step_class_probability}\\
T_{lk}^{(t+1)} =&~\frac{\sum_{i=1}^{n}\mathbbm{1}(\widetilde{Y}_i=l)\omega_{ik}^{(t)}}{\sum_{i=1}^{n}\omega_{ik}^{(t)}}.\label{eq:m_step_transition_matrix}
\end{align}

The update of $(\bgamma^{(t+1)}, \bbeta^{(t+1)})$ is more involved, as it maximizes
{\small{
\[
Q_3(\bgamma,\bbeta;\btheta^{(t)}) =\sum_{i=1}^{n}\sum_{k\in [K]}\omega_{ik}^{(t)}(\gamma_k + \beta_k^{\top}g(X_i)) -\sum_{i=1}^{n} \log\left\{1+\sum_{k=1}^{K-1}\nu_k \left(\exp(\gamma_k +\beta_k^{\top}g(X_i))-1\right)\right\}.
\]}}
At the optimal point, we show in Appendix~\ref{app:em_detail} that the Lagrange multipliers 
\[\nu_k = n^{-1} \sum_{i=1}^{n} \omega_{ik}^{(t)} := \omega_{\cdot k}^{(t)}.\] 
Thus, the stationary point of $Q_3(\bgamma, \bbeta;\btheta^{(t)})$ coincides with the stationary point of
\[
\widetilde{Q}(\bgamma,\bbeta;\btheta^{(t)}) =\sum_{i=1}^{n}\sum_{k\in [K]}\omega_{ik}^{(t)}(\gamma_k + \beta_k^{\top}g(X_i)) -\sum_{i=1}^{n} \log\left\{\sum_{k\in[K]}w_{\cdot k}\exp(\gamma_k +\beta_k^{\top}g(X_i))\right\}.
\]
Let $\bar\gamma_k = \gamma_k + \log (\omega_{\cdot k}^{(t)}/\omega_{\cdot 0}^{(t)})$, then the function $\widetilde{Q}$ becomes
\be
\label{eq:weighted_multinomial_logistic}
\widetilde{Q}(\bar\bgamma,\bbeta;\btheta^{(t)}) =\sum_{i=1}^{n}\sum_{k\in [K]}\omega_{ik}^{(t)}(\bar\gamma_k + \beta_k^{\top}g(X_i)) -\sum_{i=1}^{n} \log\left\{\sum_{k\in[K]}\exp(\bar\gamma_k +\beta_k^{\top}g(X_i))\right\},
\ee
which is the weighted log-likelihood of multinomial logistic regression model based on the dataset in Table~\ref{tab:weighted_glm} in Appendix~\ref{app:em_detail}.
Thus, $(\bar{\bgamma}^{(t+1)}, \bbeta^{(t+1)})$ maximizes the weighted log-likelihood.
We use standar packages in \texttt{R} or \texttt{Python} to find the numerical value.
Finally, we get 
\be
\gamma_k^{(t+1)} = \bar\gamma_k^{(t+1)} - \log (\omega_{\cdot k}^{(t)}/\omega_{\cdot 0}^{(t)}),~k\in[K].
\ee
The weights are 
\be
\label{eq:m_step_pi}
p_i^{(t+1)} = p_i(\btheta^{(t+1)}) = n^{-1}\left\{\sum_{k\in [K]} \omega_{\cdot k}^{(t)} \exp(\gamma_k^{(t+1)}+\langle \beta_k^{(t+1)}, g(X_i)\rangle)\right\}^{-1}
\ee
since $\nu_k = \omega_{\cdot k}^{(t)}$ at the optimal point.
\end{itemize}
The E- and M-steps are iterated until the change in the profile log-EL falls below a threshold.
This criterion is met since the EM algorithm produces estimates that monotonically increase the true objective—the profile log-EL, which we detail below.
    
\begin{proposition}[Convergence of EM algorithm]
\label{thm:em_convergence}
With the EM algorithm described above, we have for $t\geq 1$, $p\ell_n(\btheta^{(t+1)})\geq p\ell_n(\btheta^{(t)})$.
\end{proposition}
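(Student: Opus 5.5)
We will use the standard EM ascent argument, adapted to the profile empirical likelihood. The key device is to fix the weight vector $\vp$ through the profiling step and work with the unprofiled log-EL $\ell_n(\vp,\btheta)$ in \eqref{eq:log-EL}. Write $\mathcal{C}(\btheta)$ for the set of $\vp$ satisfying \eqref{eq:constraints}; note that this set depends on $(\bgamma,\bbeta)$.

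\textbf{Step 1: pointwise decomposition.} For any fixed $\btheta$ and any $\vp\in\mathcal{C}(\btheta)$, decompose $\ell_n(\vp,\btheta)$ pointwise in $i$ as
\[
\ell_n(\vp,\btheta)=\sum_{i=1}^n\sum_{k\in[K]}\omega_{ik}^{(t)}\bigl[\log p_i+\gamma_k+\beta_k^{\top}g(X_i)+\log T_{\widetilde{Y}_i k}+\log w_k\bigr]-\sum_{i=1}^n\sum_{k\in[K]}\omega_{ik}^{(t)}\log \omega_{ik}(\btheta).
\]
Here $\omega_{ik}(\btheta)$ is the E-step posterior \eqref{eq:e_step_multiclass} evaluated at $\btheta$. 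This follows because the $\log$-sum term in \eqref{eq:log-EL} equals $\log\{w_kT_{\widetilde{Y}_ik}\exp(\gamma_k+\beta_k^\top g(X_i))\}-\log\omega_{ik}(\btheta)$ for every $k$, and $\sum_k\omega_{ik}^{(t)}=1$. Denote the first double sum by $Q^c(\vp,\btheta;\btheta^{(t)})$ and the second (negative) term by $H(\btheta;\btheta^{(t)})$. Gibbs' inequality gives $H(\btheta;\btheta^{(t)})\ge H(\btheta^{(t)};\btheta^{(t)})$ for all $\btheta$.

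\textbf{Step 2: pass to the profile.} The $H$ term does not involve $\vp$. Taking the supremum over $\vp\in\mathcal{C}(\btheta)$ therefore yields
\[
p\ell_n(\btheta)=\sup_{\vp\in\mathcal{C}(\btheta)}Q^c(\vp,\btheta;\btheta^{(t)})+H(\btheta;\btheta^{(t)})=Q(\btheta;\btheta^{(t)})+H(\btheta;\btheta^{(t)}).
\]
The second equality holds because the $\vp$-dependent part of $Q^c$ is $\sum_i\log p_i$. Its constrained maximizer is $p_i(\btheta)$ from \eqref{eq:vp_lagrange}, by the same Lagrange argument used for \eqref{eq:vp_lagrange}; this is exactly how $Q$ was defined in the E-step. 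With this identity in hand,
\[
p\ell_n(\btheta^{(t+1)})-p\ell_n(\btheta^{(t)})=\{Q(\btheta^{(t+1)};\btheta^{(t)})-Q(\btheta^{(t)};\btheta^{(t)})\}+\{H(\btheta^{(t+1)};\btheta^{(t)})-H(\btheta^{(t)};\btheta^{(t)})\}.
\]
The second brace is nonnegative by Gibbs' inequality, and the first is nonnegative because the M-step chooses $\btheta^{(t+1)}$ to maximize $Q(\cdot;\btheta^{(t)})$.

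\textbf{Main obstacle.} The delicate point is justifying that the M-step output really satisfies $Q(\btheta^{(t+1)};\btheta^{(t)})\ge Q(\btheta^{(t)};\btheta^{(t)})$. The M-step is separable. The $\vw$ and $\mT$ updates \eqref{eq:m_step_class_probability}--\eqref{eq:m_step_transition_matrix} are exact multinomial maximizers under the simplex constraints. The $(\bgamma,\bbeta)$ block, however, is computed through the surrogate $\widetilde Q$ in \eqref{eq:weighted_multinomial_logistic}, which coincides with $Q_3$ only at stationary points where $\nu_k=\omega_{\cdot k}^{(t)}$.

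I plan to close this gap by establishing a dual representation: for every $(\bgamma,\bbeta)$,
\[
\sum_i\log p_i(\btheta)=\min_{\nu}\Bigl[-\sum_i\log\{1+\textstyle\sum_k\nu_k(e^{\gamma_k+\beta_k^\top g(X_i)}-1)\}\Bigr]-n\log n,
\]
so that $Q_3(\bgamma,\bbeta)\le\widetilde Q(\bgamma,\bbeta)+\text{const}$ after the shift $\nu=\omega^{(t)}_{\cdot}$. I will then check that equality holds at the maximizer of the concave weighted multinomial likelihood, whose score equations give $\sum_i\omega_{ik}^{(t)}=\sum_i\omega_{\cdot k}e^{\cdot}/\sum_{k'}\omega_{\cdot k'}e^{\cdot}$, which are precisely \eqref{eq:lagrange-multiplier-solution} with $\nu=\omega_{\cdot}^{(t)}$. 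It then follows that the $(\bgamma,\bbeta)$ update is a global maximizer of $Q_3$, completing the argument.
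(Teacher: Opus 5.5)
Your proposal is correct and is essentially the paper's proof. Your $Q+H$ decomposition with Gibbs' inequality on $H$ is the same inequality as the paper's Jensen step applied to $\log\sum_k\omega_{ik}^{(t)}(\cdot)$, and both arguments then conclude from $Q(\btheta^{(t+1)};\btheta^{(t)})\ge Q(\btheta^{(t)};\btheta^{(t)})$; the one thing you add is the weak-duality bound $Q_3\le\widetilde{Q}-n\log n$ (taking $\nu_k=\omega_{\cdot k}^{(t)}$), tight at the weighted multinomial logistic maximizer, which justifies the M-step ascent that the paper only asserts ``by construction'' even though its derivation merely matches stationary points of $Q_3$ and $\widetilde{Q}$.
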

The proof of Proposition~\ref{thm:em_convergence} is in Appendix~\ref{app:em_convergence}.
Note that $p\ell_n(\btheta^{(t+1)})$ is a sum of the logarithm of probabilities, implying that $p\ell_n(\btheta)\leq 0$ for any $\btheta$ that is feasible.
Proposition~\ref{thm:em_convergence} then guarantees convergence to at least a local maximum for the given initial value $\btheta^{(0)}$.
We recommend using multiple initial values to explore the likelihood
function to ensure that the algorithm reaches the global maximum. 
Second, in practice, we may stop the
algorithm when the increment in the log-EL after an iteration is no greater than, say, $10^{-6}$.

\begin{remark}[Incorporating prior knowledge]
When prior knowledge about the transition matrix is available (e.g., $T_{kk}^*\geq \xi_k >0$), the EM algorithm’s M-step can be modified to incorporate such prior information. 
The analytical solution for this constrained update is provided in~\ref{app:em_prior_knowledge_transition_matrix}.
To further improve numerical stability and convergence speed, penalties can be imposed on the diagonal elements $\{T_{kk}\}_{k=0}^{K-1}$.
The corresponding analytical solution for this penalized update is detailed in Appendix~\ref{app:emp_penalization}.
\end{remark}

%% file: sections/binary.tex
\section{Neyman-Pearson binary classification}
\label{sec:binary}
This section focuses on binary classification, a fundamental setting widely used in practice. 
We begin by reviewing the NP framework for clean-label binary classification, where the optimal classifier is derived from a likelihood ratio test. 
We then incorporate the EL-based estimator from the previous section into this framework to achieve type I error control under noisy labels. 
The resulting classifier provably controls the empirical type I error asymptotically.

\subsection{Revisiting NP binary classification with clean labels}
We begin by reviewing the NP binary classification framework without label noise. 
Consider random variables $(X,Y)\sim P^*_{X,Y}$, where $w^* = \mathbb{P}(Y=1)$, $ \pi^*(x) = \mathbb{P}(Y=1|X=x)$, and $X^{y} \triangleq X|(Y=y) \sim P_y^*$.
For a classifier $\phi: \mathcal{X} \to \{0,1\}$, the NP paradigm solves:
\begin{equation}
\label{eq:binary_np_clean}
\min_{\phi}~~P_1^*(\{\phi(X)\neq 1\}) \quad\quad \text{s.t.}~~P_0^*(\{\phi(X)\neq 0\}) \leq \alpha,   
\end{equation}
where $\alpha \in (0,1)$ is the target type I error, and the objective corresponds to the type II error. This is a special case of~\eqref{eq:NPMC} with $\rho_0=0$, $\rho_1=1$, $\gS=\{0\}$, and $\alpha_0=\alpha$.

When the joint distribution $P_{X,Y}^*$ is known, the NP lemma~\citep[Lemma 1]{sadinle2019least} establishes that the optimal classifier takes the form:
\begin{equation}
\label{eq:binary_optimal_classifier}
\phi_{\lambda}^{*}(x) = \mathbbm{1}\left(\frac{dP_1^*}{dP_0^*}(x) \geq \lambda \right)
= \mathbbm{1}\left(\lambda \leq r(x, \pi^*, w^*)\right)
\end{equation}
where $r(x,\pi, w) = \{(1-w) \pi(x)\}/\{w(1- \pi(x))\}$.
The value of $\lambda$ controls the trade-off between type I and type II errors: a smaller $\lambda$ increases the type I error while reducing the type II error.  
Using Bayes' theorem and the law of total expectation, it can be shown that the optimal $\lambda^*$ in $\phi^*_{\lambda}$ is determined by solving:
\be
\label{eq:type1_error}
\alpha = P_0^*(\{\phi_{\lambda}^*(X) \neq 0\}) = (1-w^*)^{-1}\sE_{X}\left[\{1- \pi^*(X)\}\mathbbm{1}\left(\lambda \leq r(X, \pi^*, w^*)\right)\right].
\ee

At the sample level, let $\gD=\{(X_i, Y_i)\}_{i=1}^{n} \overset{\text{i.i.d.}}{\sim} P_{X,Y}$.
Given estimators $\widehat{w}$ and $\widehat{\pi}(x)$ for $w$ and $\pi(x)$ based on $\gD$, respectively, we estimate $\lambda^*$ by solving the empirical estimate of~\eqref{eq:type1_error}:
\begin{equation}
\label{eq:binary_npc_clean_lambda_sample}
\frac{1}{n}\sum_{i=1}^n \{1-\widehat{\pi}(X_i)\}\mathbbm{1}\left(\lambda \leq r(X_i,\widehat{\pi}, \widehat{w})\right) = \alpha(1-\widehat w).
\end{equation}
The left-hand side of~\eqref{eq:binary_npc_clean_lambda_sample} is piecewise constant in $\lambda$, with jumps at $\{r(X_i,\widehat{\pi}, \widehat{w})\}_{i=1}^{n}$. 
The estimator $\widehat{\lambda}$ corresponds to the smallest jump point where the constraint is satisfied.

Once $\widehat{\lambda}$ is obtained, we substitute it, along with the estimators for $\pi(x)$ and $w$, into~\eqref{eq:binary_optimal_classifier} to obtain the error-controlled classifier, denoted as $\widehat{\phi}_{\widehat{\lambda}}$.

\subsection{NP binary classification with noisy labels with statistical guarantees}
Since our EL-based estimation allows us to directly estimate $\widehat{w}$ and $\widehat{\pi}$ from noisy labeled observations, we can simply plug in these estimators, as discussed in the clean label case, to obtain the final classifier with controlled type I error.

Specifically, let $\widehat{w}, \widehat{\gamma}, \widehat{\beta}$ be the MELE estimators defined in~\eqref{eq:mele_estimator} with $\widehat{\gamma}^{\dagger} = \widehat{\gamma} + \log\{(1-\widehat{w})/\widehat{w}\}$.
We define
\be
\label{eq:binary_conditional}
\widehat{\pi}(x) = \left[1 + \exp(-\{\widehat{\gamma}^{\dagger} + \widehat{\beta}^{\top}g(x)\})\right]^{-1}.
\ee
Let $\widehat{\lambda}$ be the solution to~\eqref{eq:binary_npc_clean_lambda_sample} with the estimators $\widehat{\pi}$ and $\widehat{w}$ as defined above. 
Our final binary classifier with type I error control is given by
\be
\label{eq:binary_el_classifier}
\widehat{\phi}_{\widehat{\lambda}}(x) = \mathbbm{1}(\widehat{\lambda} < r(x,\widehat{\pi}, \widehat{w})) = \mathbbm{1}\left(\widehat{\pi}(x) \geq \frac{\widehat{\lambda}\widehat{w}}{1-\widehat{w}+\widehat{\lambda}\widehat{w}}\right),
\ee
where the complete expression for the threshold is implied by the context.

We now show that the type I error of this estimator is asymptotically controlled at the target level. 
The following assumptions are required for our theoretical results:
\begin{assumption}[Compactness]
\label{assumption:compactness}
The parameter space {\small{$\Theta=\{(\gamma, \beta): \gamma\in \sR, \gamma\in\sR^{d}\}$}} is compact.
\end{assumption}

\begin{assumption}[Bounded second moment]
\label{assumption:bounded_second_moment}
There exists a constant $R>0$ s.t. $\sE\|g(X)\|_2^2\leq R^2$.
\end{assumption}

\begin{theorem}[Error control guarantee]
\label{thm:binary_type1}
Let $\widehat\phi_{\widehat\lambda}(x)$ be the binary classifier defined in~\eqref{eq:binary_el_classifier}. 
Under Assumptions~\ref{assumption:compactness} and~\ref{assumption:bounded_second_moment}, the type I error is controlled as: $P_{0}^*\left(\left\{\widehat\phi_{\widehat\lambda}(X)\neq 0\right\}\right)\leq \alpha + O_{p}(n^{-1/2})$.
\end{theorem}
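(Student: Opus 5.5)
The plan is to write the true type I error of $\widehat\phi_{\widehat\lambda}$ as a population functional of the estimated parameters, and compare it with the empirical constraint that defines $\widehat\lambda$. By Bayes' rule, as in~\eqref{eq:type1_error}, for any fixed $(\lambda,\gamma,\beta,w)$,
\[
P_0^*(\{\phi_{\lambda}(X)\neq 0\}) = (1-w^*)^{-1}\sE_X\left[\{1-\pi^*(X)\}\mathbbm{1}\left(\lambda \leq r(X,\pi,w)\right)\right],
\]
where $\pi$ and $w$ are the quantities entering the classifier.

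Note that $r(x,\widehat\pi,\widehat w)=\exp(\widehat\gamma+\widehat\beta^\top g(x))$ depends only on $(\widehat\gamma,\widehat\beta)$. Hence the classifier is an indicator of a half-space in $(1,g(x))$, namely $\{\widehat\beta^\top g(x)\geq \log\widehat\lambda-\widehat\gamma\}$. I will decompose
\[
P_0^*(\{\widehat\phi_{\widehat\lambda}\neq 0\}) - \alpha = \mathrm{I} + \mathrm{II} + \mathrm{III}.
\]
Here $\mathrm{I}$ replaces $(1-w^*)^{-1}\{1-\pi^*\}$ by $(1-\widehat w)^{-1}\{1-\widehat\pi\}$ inside the population expectation. $\mathrm{II}$ is the deviation between the population expectation $\sE_X$ and the empirical average $n^{-1}\sum_i$ of $\{1-\widehat\pi(X)\}\mathbbm{1}(\widehat\lambda\leq r(X,\widehat\pi,\widehat w))$, scaled by $(1-\widehat w)^{-1}$. $\mathrm{III}$ is the empirical left side of~\eqref{eq:binary_npc_clean_lambda_sample} divided by $(1-\widehat w)$, minus $\alpha$. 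Since $\widehat\lambda$ is chosen as the smallest jump point at which the constraint holds, $\mathrm{III}\leq 0$ by construction (I read the constraint as ``$\leq\alpha(1-\widehat w)$'').

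For $\mathrm{I}$, I use the Lipschitz property of the logistic map: $|\widehat\pi(x)-\pi^*(x)|\leq |\widehat\gamma^\dagger-\gamma^{*\dagger}|+\|\widehat\beta-\beta^*\|_2\|g(x)\|_2$. Since the indicator is at most one, $|\mathrm{I}|\lesssim |\widehat w-w^*| + \sE|\widehat\pi(X)-\pi^*(X)|$. By Cauchy--Schwarz and Assumption~\ref{assumption:bounded_second_moment}, this is bounded by $C(\|\widehat\btheta-\btheta^*\|)(1+R)$. Theorem~\ref{thm:estimator_asymptotic_normality} gives $\|\widehat\btheta-\btheta^*\|=O_p(n^{-1/2})$, so $\mathrm{I}=O_p(n^{-1/2})$. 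The factor $(1-\widehat w)^{-1}$ is bounded in probability because $w^*\in(0,1)$ and $\widehat w$ is consistent.

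The main obstacle is $\mathrm{II}$, because $\widehat\lambda$ and $(\widehat\gamma,\widehat\beta)$ are data-dependent and are computed on the same sample. My first attempt is a uniform bound. Consider the class
\[
\gF=\{x\mapsto [1+\exp(a+b^\top g(x))]^{-1}\mathbbm{1}(c+b^\top g(x)\geq 0): (a,b)\in\Theta, c\in\sR\}.
\]
Its members are products of a bounded monotone function of a linear form and a half-space indicator in $g(x)\in\sR^d$. The half-spaces form a VC class of dimension $d+2$, and the monotone transforms of the finite-dimensional linear class also form a VC-subgraph class; compactness (Assumption~\ref{assumption:compactness}) keeps the parameters bounded. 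So $\gF$ is a uniformly bounded VC-type class with constant envelope. Standard empirical process bounds then give $\sup_{f\in\gF}|(\mathbb{P}_n-P)f|=O_p(n^{-1/2})$, and hence $\mathrm{II}=O_p(n^{-1/2})$. Should the paper intend a sample-splitting version, the same bound follows more simply by conditioning. Combining $\mathrm{I}$, $\mathrm{II}$ and $\mathrm{III}\leq 0$ yields the claim.
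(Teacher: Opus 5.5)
Your proposal is correct and follows essentially the same route as the paper: your term $\mathrm{I}$ merges the paper's $T_1$ (Lipschitz replacement of $\pi^*$ by $\widehat\pi$) and $T_2$ (the $\widehat w$ versus $w^*$ discrepancy), your $\mathrm{II}$ is the paper's $T_3$, and your VC-type uniform-entropy bound plays the role of the paper's Dudley/covering-number bound on the Rademacher complexity of $\{(1-\pi_{\gamma,\beta})\mathbbm{1}(\pi_{\gamma,\beta}\geq t)\}$. Your treatment of $\mathrm{III}\leq 0$, reading the piecewise-constant equation as an inequality attained at the smallest feasible jump point, is slightly more careful than the paper, which simply assumes the empirical constraint holds with exact equality.
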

The proof is deferred to Appendix~\ref{app:binary_proof}. 
This theorem establishes that our classifier asymptotically controls the type I error at the target level $\alpha$ under Assumptions~\ref{assumption:compactness} and~\ref{assumption:bounded_second_moment}, which are standard in the literature (e.g.,~\citealt{tian2024neyman}). 
These conditions are realistic in practice: Assumption~\ref{assumption:compactness} requires only that the parameter space be closed and bounded, which can be enforced by imposing $|\gamma| + \|\beta\|_2 \leq B$ for a sufficiently large $B$, while Assumption~\ref{assumption:bounded_second_moment} demands only finite second moments of the feature map $g(X)$. 
Under these mild conditions, the Rademacher complexity of the class $\{\pi(x): (\gamma,\beta)\in \Theta\}$ is of order $n^{-1/2}$. 
Combined with the fact that our EL-based estimators $(\widehat{\gamma},\widehat{\beta},\widehat{w})$ converge at the usual $n^{-1/2}$ rate, this ensures that the final classifier $\widehat{\phi}_{\widehat{\lambda}}$ preserves the Neyman--Pearson guarantee up to a vanishing $O_p(n^{-1/2})$ term.

\begin{remark}[NP umbrella algorithm type control]
The NP umbrella algorithm was developed by~\citet{tong2018neyman} to control errors in binary classification with uncontaminated data, and later extended to the noisy-label setting by~\citet{yao2023asymmetric}. 
When domain knowledge of the noise transition matrix $\mM^{*}$ is available, they estimate the discrepancy between true and corrupted type I errors and propose a label-noise-adjusted NP umbrella algorithm with theoretical guarantees. 
Specifically, for any $\delta > 0$, they show that their classifier $\phi_{\delta}$ satisfies
\be
\label{eq:np_control}
\sP(P_0^*(\{\phi_{\delta}(X)\neq 0\})\leq \alpha) > 1-\delta + O(\varepsilon(n)),
\ee
where $\varepsilon(n)$ vanishes as $n$ increases.
To our knowledge, this is the only work addressing error control under noisy labels. 
It is therefore natural to compare our error control guarantee (Theorem~\ref{thm:binary_type1}) with their result. 
At first glance, our guarantee appears weaker than that of \citet{yao2023asymmetric}. 
In particular, we show that $\forall \epsilon > 0$, there exist constants $C$ and $N$ such that for all $n \geq N$,
\[\sP(P_0^*(\{\widehat\phi_{\widehat{\lambda}}(X)\neq 0\})\leq \alpha + Cn^{-1/2}) > 1-\epsilon.\]
Comparing this with~\eqref{eq:np_control}, the distinction lies in the quantities being controlled.
Our analysis bounds the deviation of the empirical type I error from the target level, which introduces a stochastic fluctuation of order $n^{-1/2}$. 
In contrast, the NP umbrella framework provides a stricter guarantee, ensuring with high probability that the population type I error never exceeds the target level.
Importantly, our EL-based estimator is flexible and can be paired with different error control methods: when combined with the NP umbrella algorithm, it yields the same guarantee as in \citet{yao2023asymmetric}. 
Moreover, our method does not require prior knowledge of $\mM^*$. 
Further discussion and empirical results are provided in Appendix~\ref{app:npumbrella}.
\end{remark}

\subsection{Simulation study}
\label{sec:binary_exp}
In this section, we evaluate the performance of our proposed classifier~\eqref{eq:binary_el_classifier} on simulated data and compare it with existing methods. 
Following the experimental settings of \citet{yao2023asymmetric}, we consider three distinct distributions for the feature variables $X^{y}$:
\begin{itemize}[leftmargin=*]
\item \textbf{Case A: Gaussian distribution}. 
Let $X^0 \sim N(\mu_0^*, \Sigma^*)$ and $X^1 \sim N(\mu_1^*, \Sigma^*)$, where $\mu_0^* = (0,0,0)^\top$, $\mu_1^* = (1,1,1)^\top$, and $\Sigma^*$ is a Toeplitz matrix with diagonal elements being $2$, first off-diagonal elements being $-1$, and remaining elements $0$.

\item \textbf{Case B: Uniform distribution within circles}. 
Let $X^0$ and $X^1$ be uniformly distributed within unit circles centered at $(0,0)^\top$ and $(1,1)^\top$, respectively. 

\item \textbf{Case C: $t$-distribution}. 
Let $X^0$ and $X^1$ follow multivariate non-central $t$-distributions with location parameters $\mu_0^*$ and $\mu_1^*$, shape matrix $\Sigma^*$, and 15 degrees of freedom. The values of $\mu_0^*$, $\mu_1^*$, and $\Sigma^*$ match those in Case A. 
\end{itemize}
Using the relationship $\widetilde{P}_0^* = m_0^* P_0^* + (1-m_0^*) P_1^*$ and $\widetilde{P}_1^* = m_1^* P_0^* + (1-m_1^*) P_1^*$,
where $m_0^* = \sP(Y=0|\widetilde{Y}=0)$ and $m_1^* = \sP(Y=0|\widetilde{Y}=1)$ represent the label corruption probabilities, we generate samples with corrupted labels. 
For each setting, we create a training sample of $2n$ observations, equally split between the corrupted class $0$ and corrupted class $1$. 
We vary the sample size $n$ from $1{,}000$ to $5{,}000$ to examine the effect of sample size on performance. 
To accurately approximate the true type I and type II errors, we generate a large evaluation set consisting of $20{,}000$ true class $0$ observations and $20{,}000$ true class $1$ observations. 
For each combination of distribution and sample size, we repeat the experiment $R=500$ times and report the average type I and type II errors across all repetitions.

\begin{table}[!ht]
\centering
\caption{Type I and II error for binary classification based on $500$ repetitions. 
NPC$^{*}$ represents the oracle method with known $m_0^*$ and $m_1^*$. For each setting, the method achieving type I error closest to the nominal level is highlighted in boldface.}
\label{tab:binary}
\resizebox{\textwidth}{!}{
\begin{tabular}{ccccc>{\bfseries}c@{\hspace{2em}}ccc>{\bfseries}c@{\hspace{2em}}ccc>{\bfseries}c@{\hspace{2em}}ccc>{\bfseries}c}
\toprule
\multirow{3}{*}{Case} &\multirow{3}{*}{$n$} & \multicolumn{4}{c}{$m_0^*=0.95,m_1^*=0.05$} & \multicolumn{4}{c}{$m_0^*=0.95,m_1^*=0.05$} & \multicolumn{4}{c}{$m_0^*=0.9,m_1^*=0.1$} & \multicolumn{4}{c}{$m_0^*=0.9,m_1^*=0.1$} \\
&& \multicolumn{4}{c}{$\alpha=0.05,\delta=0.05$} & \multicolumn{4}{c}{$\alpha=0.1,\delta=0.1$} & \multicolumn{4}{c}{$\alpha=0.05,\delta=0.05$} & \multicolumn{4}{c}{$\alpha=0.1,\delta=0.1$} \\ 
\cmidrule(lr{1.5em}){3-6}
\cmidrule(lr{1.5em}){7-10}
\cmidrule(lr{1.5em}){11-14}
\cmidrule(lr){15-18}
&&Vanilla & NPC & NPC$^{*}$ & Ours & Vanilla& NPC & NPC$^{*}$ & Ours & Vanilla & NPC & NPC$^{*}$ & Ours & Vanilla & NPC & NPC$^{*}$ & Ours\\ 
\midrule
&\multicolumn{16}{c}{Type I error}\\
\cmidrule(lr){3-18} 
\multirow{3}{*}{A} & 1000 & 0.011 & 0.010 & 0.027 & \textbf{0.051} & 0.049 & 0.045 & 0.076 & \textbf{0.101} & 0.004 & 0.003 & 0.024 & \textbf{0.052} & 0.025 & 0.022 & 0.073 & \textbf{0.102} \\
 & 2000 & 0.014 & 0.012 & 0.033 & \textbf{0.050} & 0.054 & 0.052 & 0.083 & \textbf{0.100} & 0.005 & 0.004 & 0.031 & \textbf{0.050} & 0.028 & 0.027 & 0.081 & \textbf{0.100} \\
 & 5000 & 0.017 & 0.016 & 0.039 & \textbf{0.050} & 0.058 & 0.056 & 0.089 & \textbf{0.100} & 0.006 & 0.005 & 0.038 & \textbf{0.050} & 0.031 & 0.030 & 0.088 & \textbf{0.100} \\
\midrule
\multirow{3}{*}{B} & 1000 & 0.001 & 0.001 & 0.026 & \textbf{0.051} & 0.042 & 0.038 & 0.077 & \textbf{0.105} & 0.000 & 0.000 & 0.021 & \textbf{0.048} & 0.007 & 0.006 & 0.074 & \textbf{0.101} \\
 & 2000 & 0.002 & 0.002 & 0.033 & \textbf{0.051} & 0.047 & 0.046 & 0.083 & \textbf{0.104} & 0.000 & 0.000 & 0.029 & \textbf{0.047} & 0.010 & 0.009 & 0.081 & \textbf{0.101} \\
 & 5000 & 0.004 & 0.004 & 0.039 & \textbf{0.051} & 0.052 & 0.051 & 0.090 & \textbf{0.104} & 0.000 & 0.000 & 0.037 & \textbf{0.047} & 0.013 & 0.013 & 0.088 & \textbf{0.101} \\
\midrule
\multirow{3}{*}{C} & 1000 & 0.014 & 0.012 & 0.028 & \textbf{0.062} & 0.051 & 0.048 & 0.077 & \textbf{0.112} & 0.007 & 0.006 & 0.026 & \textbf{0.063} & 0.029 & 0.027 & 0.075 & \textbf{0.114} \\
 & 2000 & 0.017 & 0.015 & 0.034 & \textbf{0.061} & 0.055 & 0.053 & 0.083 & \textbf{0.111} & 0.008 & 0.007 & 0.032 & \textbf{0.061} & 0.032 & 0.031 & 0.081 & \textbf{0.111} \\
 & 5000 & 0.019 & 0.018 & 0.039 & \textbf{0.060} & 0.059 & 0.058 & 0.089 & \textbf{0.111} & 0.009 & 0.009 & 0.038 & \textbf{0.061} & 0.035 & 0.034 & 0.088 & \textbf{0.111} \\
\midrule
&\multicolumn{16}{c}{Type II error}\\
\cmidrule(lr){3-18} 
\multirow{3}{*}{A} & 1000 & 0.534 & 0.567 & 0.398 & \textbf{0.278} & 0.287 & 0.305 & 0.218 & \textbf{0.171} & 0.689 & 0.730 & 0.427 & \textbf{0.277} & 0.403 & 0.427 & 0.227 & \textbf{0.171} \\
 & 2000 & 0.489 & 0.515 & 0.352 & \textbf{0.279} & 0.269 & 0.276 & 0.201 & \textbf{0.171} & 0.653 & 0.674 & 0.368 & \textbf{0.279} & 0.376 & 0.388 & 0.205 & \textbf{0.172} \\
 & 5000 & 0.459 & 0.470 & 0.319 & \textbf{0.278} & 0.254 & 0.260 & 0.188 & \textbf{0.171} & 0.622 & 0.630 & 0.326 & \textbf{0.279} & 0.357 & 0.363 & 0.190 & \textbf{0.171} \\
\midrule
\multirow{3}{*}{B} & 1000 & 0.352 & 0.418 & 0.180 & \textbf{0.137} & 0.151 & 0.158 & 0.108 & \textbf{0.078} & 0.661 & 0.700 & 0.205 & \textbf{0.142} & 0.244 & 0.274 & 0.113 & \textbf{0.082} \\
 & 2000 & 0.268 & 0.310 & 0.164 & \textbf{0.137} & 0.143 & 0.145 & 0.100 & \textbf{0.079} & 0.610 & 0.637 & 0.172 & \textbf{0.143} & 0.212 & 0.221 & 0.103 & \textbf{0.082} \\
 & 5000 & 0.230 & 0.240 & 0.154 & \textbf{0.137} & 0.136 & 0.138 & 0.092 & \textbf{0.078} & 0.571 & 0.584 & 0.157 & \textbf{0.143} & 0.200 & 0.203 & 0.094 & \textbf{0.082} \\
\midrule
\multirow{3}{*}{C} & 1000 & 0.588 & 0.629 & 0.461 & \textbf{0.282} & 0.321 & 0.341 & 0.247 & \textbf{0.176} & 0.722 & 0.755 & 0.487 & \textbf{0.280} & 0.440 & 0.462 & 0.255 & \textbf{0.176} \\
 & 2000 & 0.545 & 0.571 & 0.406 & \textbf{0.283} & 0.301 & 0.310 & 0.227 & \textbf{0.177} & 0.684 & 0.704 & 0.421 & \textbf{0.284} & 0.413 & 0.422 & 0.233 & \textbf{0.178} \\
 & 5000 & 0.514 & 0.525 & 0.369 & \textbf{0.283} & 0.286 & 0.291 & 0.213 & \textbf{0.176} & 0.656 & 0.666 & 0.378 & \textbf{0.282} & 0.392 & 0.399 & 0.214 & \textbf{0.176} \\
\bottomrule
\end{tabular}}
\end{table}
For our method, we use linear basis functions in all experiments. 
These three cases are designed to evaluate both efficiency and robustness: the DRM assumption holds in Case A, whereas the model is misspecified in Cases B and C.
We compare our method with the Vanilla and NPC algorithms proposed in~\citet{yao2023asymmetric}. 
\begin{itemize}
    \item \textbf{Vanilla}: Ignores label noise and directly applies the NP paradigm to the noisy labeled data.
    \item \textbf{NPC/NPC$^*$}: Both methods account for label noise. NPC$^*$ \textbf{knows} the true $m_0^*$ and $m_1^*$ values, while NPC assumes that the bounds $m_0^{\#} \geq m_0^*$ and $m_1^{\#} \leq m_1^*$ are known. 
\end{itemize}
Following the original method, NPC includes a parameter $\delta$ such that the type I error is controlled with probability at least $1-\delta$.
We set $m_0^{\#} = \max\{m_0^* + \delta/3,1\}$ and $m_1^{\#} = \min\{m_1^* - \delta/3,0\}$, and examine various combinations of $(m_0^*, m_1^*, \delta, \alpha)$. 
Throughout all experiments, we fix $g(x) = x$ for our algorithm.
For the Vanilla and NPC algorithms, we employ different base classifiers to estimate $\widetilde\pi(x)^*$: LDA for Case A and logistic regression for Cases B and C.
The complete experimental results are presented in Table~\ref{tab:binary}.

The results demonstrate three key findings.
First, the Vanilla method proves overly conservative in controlling type I error, resulting in substantially inflated type II error rates.
Second, while both NPC and NPC$^*$ successfully adjust for label noise, they remain more conservative than our proposed approach.
Notably, NPC$^*$--which leverages the known values of $m_0$ and $m_1$-- outperforms NPC that relies on their bounds.
In fact, when only bounds are available, the NPC method can be as conservative as, or even more so than, the Vanilla approach.
Third, our method achieves near-optimal performance: though exhibiting slightly elevated type I error (well within acceptable limits), it most closely approximates the true error rates among all compared approaches, even under cases B and C where the DRM assumption is violated.

%% file: sections/multiclass.tex
\section{Neyman-Pearson multiclass classification}
\label{sec:multiclass}
In this section, we study the more general NPMC problem.
Extending NP binary classification to the multiclass setting is nontrivial for two reasons~\citep{tian2024neyman}: (i) the NPMC problem may be infeasible, and (ii) solving it requires a more involved dual formulation.
We adopt their NPMC solver and combine it with our EL-based estimators of $\{w_k^*, \pi_k^*(x)\}_{k=0}^{K-1}$ to handle noisy labels.
When feasible, we show that the resulting classifiers control class-specific type I errors up to a vanishing term.
Empirically, our method matches the clean-label benchmark with sufficiently large samples.
For completeness, we first review their solver under clean data, then introduce our estimator with theoretical guarantees, and finally present empirical results.

\subsection{Revisiting NPMC with clean labels}
\label{sec:NPMC_review}
\citet{tian2024neyman} solves~\eqref{eq:NPMC} via its dual formulation.
The Lagrangian associated with~\eqref{eq:NPMC} is
\begin{align}
L(\phi, \blambda) =&~\sum_{k\in [K]}\rho_k P_k^*(\{\phi(X)\neq k\}) -\sum_{k\in \gS}\lambda_k \{\alpha_k - P_k^*(\{\phi(X)\neq k\})\}\label{eq:lagrangian}
\\
=&~\sum_{k\in [K]} \rho_k + \sum_{k\in \gS}\lambda_k(1-\alpha_k)-\sum_{k\not\in \gS}\rho_k P_k^*(\{\phi(X)=k\}) - \sum_{k\in \gS}(\rho_k +\lambda_k) P_k^*(\{\phi(X)=k\}), \nonumber    
\end{align}
where $\blambda = \{\lambda_k\}_{k\in\gS}\in \sR_{+}^{|\gS|}$ are the Lagrange multipliers.
For any fixed $\blambda$, the optimal classifier that minimizes the Lagrangian has the closed form~\citep[Lemma 2]{tian2024neyman}:
\begin{equation}
\label{eq:optimal_classifier}
\phi^*_{\blambda}(x) = \argmax_{k}\{c_k(\blambda, \vw^*)\pi^*_{k}(x)\},
\end{equation}
with $c_k(\blambda, \vw) = \{\rho_k + \lambda_k \mathbbm{1}(k\in \gS)\}/w_k$.

If strong duality holds, the optimal $\blambda$ can be obtained by solving the dual problem:
\be
\label{eq:population_optimal_lambda}
\blambda^* = \argmax_{\blambda\in \sR_{+}^{\text{Card}(\gS)}} G(\blambda)
\ee
where the Lagrange dual function admits equivalent expressions:
\begin{equation}
\label{eq:cx_pyx}
\begin{split}
G(\blambda)=&\min_{\phi}L(\phi, \blambda) = L(\phi^*_{\blambda}, \blambda)\nonumber\\
=&-\sE_{X}\left\{c_{\phi^*_{\blambda}(X)}(\blambda, \vw^*)\sP(Y=\phi^*_{\blambda}(X)| X)\right\} + \sum_{k\in [K]} \rho_k + \sum_{k\in \gS}\lambda_k(1-\alpha_k).    
\end{split}
\end{equation}
The key advantage of the dual formulation is that the dual objective $G(\blambda)$ is concave, regardless of whether the primal problem is convex. 
This allows us to solve for $\blambda^*$ via standard convex optimization, and then substitute it into~\eqref{eq:optimal_classifier} to obtain the optimal classifier $\phi^*_{\blambda^*}$.

Given a finite training set $\gD^{\text{train}} = \{(X_i, Y_i)\}_{i=1}^{n}$,~\citet{tian2024neyman} proposes the \textbf{NPMC-CX} (ConveX) estimator for the dual objective $G(\blambda)$:
\be
\label{eq:obj-cx}
\widehat{G}(\blambda) = -\frac{1}{n}\sum_{i=1}^{n}c_{\widehat\phi_{\blambda}(x_i)}(\blambda, \widehat{\vw})\widehat{\sP}(Y=\widehat\phi_{\blambda}(X_i)| X=X_i) + \sum_{k\in[K]} \rho_k + \sum_{k\in \gS}\lambda_k(1-\alpha_k),
\ee
where 
\be
\label{eq:classifier_sample}
\widehat\phi_{\blambda}(x) = \argmax_{k}\{c_k(\blambda, \widehat\vw)\widehat\pi_{k}(x)\},
\ee
$\widehat{\vw}$ and $\{\widehat{\pi}_k(x)\}$ are estimated using any suitable classifier trained on $\gD^{\text{train}}$.
Let $\widehat{\blambda} = \argmax_{\lambda} \widehat{G}(\blambda)$; then, plugging $\widehat{\blambda}$ into~\eqref{eq:classifier_sample} yields the final classifier $\widehat\phi_{\widehat\blambda}$.

Under assumptions such as the Rademacher complexity of the function class ${\pi_k(x): \pi_k \in \gF}$ vanishing as $n \to \infty$, they show that $\widehat\phi_{\widehat\blambda}$ satisfies the NP oracle inequalities when the primal problem is feasible. 
When the primal problem is infeasible, they further show that, with high probability, the dual optimal value is smaller than $1$.

\begin{remark}[Dual problem, weak and strong duality]
For readers less familiar with optimization, we briefly explain the motivation for solving the dual problem and its connection to the primal problem. 
The primal problem~\eqref{eq:NPMC} involves constraints on class-specific error for certain classes, which can make direct optimization challenging. 
By introducing Lagrange multipliers $\blambda$, we form the \emph{Lagrangian}~\eqref{eq:lagrangian}, and define the \emph{dual function} $G(\blambda)$ as the minimum of the Lagrangian over all classifiers. 
These two optimization problems are connected via the \textbf{weak duality} property, which states that for any feasible $\blambda\ge 0$, the dual objective $G(\blambda)$ provides a lower bound on the primal objective. 
This ensures that maximizing the dual gives a value that never exceeds the optimal value of the primal.  
If \textbf{strong duality} holds, meaning that the maximum of the dual objective equals the minimum of the primal, solving the dual problem yields the optimal multipliers $\blambda^*$, which can be substituted into~\eqref{eq:optimal_classifier} to recover the primal optimal classifier.  
The key advantage of working with the dual is that $G(\blambda)$ is concave regardless of the convexity of the primal, making the problem easier to solve with standard convex optimization techniques.
We refer interested readers to~\citet[Chapter 5]{boyd2004convex} for more details on duality.
\end{remark}

\subsection{NPMC with noisy labels with statistical guarantees}
\label{sec:NPMC_noisy_label}
Since our EL-based procedure allows us to directly estimate $\widehat{\vw}$ and $\widehat{\pi}_k$ from noisy labeled observations, we can plug in these estimators—just as in the clean-label setting—to construct a classifier with statistical guarantees.

Let $\widehat{\vw}$, $\widehat{\bgamma}$, and $\widehat{\bbeta}$ denote the MELE estimators in~\eqref{eq:mele_estimator} for $\vw^*$, $\bgamma^*$, and $\bbeta^*$, respectively.
Define
\[\widehat{\pi}_k(x) = \frac{\exp(\widehat{\gamma}_k^{\dagger} + \widehat{\beta}_k^{\top}g(x))}{\sum_{k'}\exp(\widehat{\gamma}_{k'}^{\dagger} + \widehat{\beta}_{k'}^{\top}g(x))}.\]
Let $\widehat{\blambda}$ be the minimizer of~\eqref{eq:obj-cx} with $\widehat{\pi}_k$ and $\widehat{\vw}$ plugged in as above.
Our final \textbf{EL-CX} classifier is then defined as
\be
\label{eq:el-cx}
\widehat{\phi}_{\widehat{\blambda}}(x) = \argmax_{k}\{c_k(\widehat\blambda, \widehat\vw)\widehat\pi_{k}(x)\}.
\ee

\begin{remark}[Numerical maximization of the dual problem]
To construct the final EL-CX classifier, in addition to computing $\widehat\vw$ and $\widehat\pi_k(x)$ via the EM algorithm, one also needs to maximize $\widehat{G}(\blambda)$ to obtain $\widehat{\blambda}$.
We perform this maximization using the Hooke–Jeeves algorithm proposed in~\citet{tian2024neyman} (implemented in \texttt{pymoo}~\citealt{blank2020pymoo}).
We adopt the default initialization strategy in \texttt{pymoo} to generate the initial search points.
The optimization is constrained to the hypercube $[0,200]^{|\gS|}$.
\end{remark}

We now formally establish the theoretical properties of our EL-CX classifier below.
The following assumptions are required for our theoretical results.
\begin{assumption}[Compactness]
\label{assump:compactness_multiclass}
The parameter space $\Theta=\{(\bgamma, \bbeta): \bgamma\in \sR^{K-1}, \bbeta\in \sR^{d(K-1)}\}$ is compact, and $\min_{k} w_k>0$.
\end{assumption}

\begin{assumption}[Local strong concavity]
\label{assump:local_strong_concavity}
Let $\blambda^*$ and $G(\blambda)$ as defined in~\eqref{eq:population_optimal_lambda} and~\eqref{eq:cx_pyx} respectively. 
We assume that $G(\blambda)$ is twice continuously differentiable and $\nabla^2 G(\blambda^*) \prec 0$.
\end{assumption}

\begin{theorem}
\label{thm:multiclass_error}
For any classifier $\phi$, denote the $k$th class-specific misclassification probability as $R_k(\phi) = \sP(\phi(X) \neq k | Y=k)$.
Let $\widehat{\phi}_{\widehat{\blambda}}$ be our EL-CX based classifier in~\eqref{eq:el-cx}. When the NPMC problem~\eqref{eq:NPMC} is feasible, under Assumptions~\ref{assumption:bounded_second_moment},~\ref{assump:compactness_multiclass}, and~\ref{assump:local_strong_concavity}, we have that for any $k\in \gS$: $R_k(\widehat\phi_{\widehat{\blambda}})\leq \alpha_k + O_p(\varepsilon(n))$, where $\varepsilon(n)\to 0$ as $n\to\infty$.
\end{theorem}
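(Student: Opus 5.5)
We follow the strategy of \citet{tian2024neyman} for NPMC-CX, with our EL estimators playing the role of the clean-label plug-in estimators. The main new input is Theorem~\ref{thm:estimator_asymptotic_normality}. Since $\widehat{\pi}_k$ and $\widehat w_k$ are smooth functions of $\widehat\btheta$ on the compact $\Theta$ (Assumption~\ref{assump:compactness_multiclass}), with $\min_k w_k>0$ keeping $\log(\widehat w_0/\widehat w_k)$ bounded, it gives
\[
\max_k|\widehat w_k-w_k^*|=O_p(n^{-1/2}),\qquad \sE_X\max_k|\widehat\pi_k(X)-\pi_k^*(X)|=O_p(n^{-1/2}).
\]
For the second bound we use the Lipschitz property of the softmax in $(\gamma^\dagger,\beta)$, namely $|\widehat\pi_k(x)-\pi_k^*(x)|\lesssim \|\widehat\btheta-\btheta^*\|(1+\|g(x)\|)$, together with Assumption~\ref{assumption:bounded_second_moment}.

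\textbf{Step 1 (uniform convergence of the dual).} We show $\sup_{\blambda\in\Lambda}|\widehat G(\blambda)-G(\blambda)|=O_p(\varepsilon(n))$ over a compact set $\Lambda$ containing $\blambda^*$. Write $\widehat G-G$ as three pieces. The first is an empirical-process term: replace the population expectation by the sample mean for the fixed plug-in class $\{x\mapsto \max_k c_k(\blambda,\vw)\pi_k(x)\}$. Its Rademacher complexity is $O(n^{-1/2})$ because it is a max of $K$ Lipschitz compositions of linear-in-$g$ softmax functions, and we bound it by contraction plus Assumption~\ref{assumption:bounded_second_moment}. The other two are estimation terms from $\widehat\vw$ and $\widehat\pi_k$. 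The key simplification is that $\widehat G(\blambda)$ equals $-\frac1n\sum_i\max_k c_k(\blambda,\widehat\vw)\widehat\pi_k(X_i)$ plus constants. Since $|\max_k a_k-\max_k b_k|\le\max_k|a_k-b_k|$, the estimation error is bounded by the rates above uniformly in $\blambda\in\Lambda$.

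\textbf{Step 2 (convergence of $\widehat\blambda$).} Concavity of $G$ and $\widehat G$, together with Assumption~\ref{assump:local_strong_concavity}, turns uniform closeness into $\|\widehat\blambda-\blambda^*\|=O_p(\varepsilon(n)^{1/2})$ via the standard argmax argument. We also need $\widehat\blambda\in\Lambda$ with probability tending to one; concavity and the unique maximizer give this by restricting to a ball and noting that a concave function greater inside than on the boundary attains its max inside. Where $G$ is differentiable, $\partial G/\partial\lambda_k=(1-\alpha_k)-P_k^*(\phi^*_{\blambda}(X)=k)=R_k(\phi^*_{\blambda})-\alpha_k$. The KKT conditions at $\blambda^*$ then give $R_k(\phi^*_{\blambda^*})\le\alpha_k$ for $k\in\gS$, using feasibility and the strong duality established in \citet{tian2024neyman}.

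\textbf{Step 3 (transfer to $\widehat\phi_{\widehat\blambda}$).} We bound $R_k(\widehat\phi_{\widehat\blambda})-R_k(\phi^*_{\blambda^*})$. Misclassification sets differ only where the ordering of $\{c_j\widehat\pi_j\}$ and $\{c_j\pi^*_j\}$ differ, and there $|c_j\pi_j^*(X)-c_{j'}\pi^*_{j'}(X)|\le\eta_n$ with $\eta_n$ the sup-type perturbation size. So the difference is bounded by $P_k^*\bigl(|c_j\pi_j^*-c_{j'}\pi_{j'}^*|\le \eta\bigr)$ on a high-probability event, plus $P_k^*(\text{perturbation}>\eta)$, the latter handled by Markov's inequality with the $L^1$ rate. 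This is the main obstacle: it needs a margin-type control, i.e.\ that the probability of near-ties tends to $0$ as $\eta\to0$. My first attempt is to obtain it from Assumption~\ref{assump:local_strong_concavity}: twice differentiability of $G$ forces the gradient $R_k(\phi^*_{\blambda})$ to be continuous in $\blambda$, which rules out atoms of ties and yields continuity of $\eta\mapsto P(\text{tie within }\eta)$ at $0$. With that, combining with Steps 1 and 2 gives $R_k(\widehat\phi_{\widehat\blambda})\le\alpha_k+O_p(\varepsilon(n))$ for some $\varepsilon(n)\to0$, without pursuing a sharp rate.
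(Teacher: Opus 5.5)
Your proposal is correct. Its structure (uniform convergence of $\widehat G$, then $\widehat\blambda\to\blambda^*$, then transfer) matches the paper's, but two ingredients are done differently.

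\textbf{How $R_k(\phi^*_{\blambda^*})\le\alpha_k$ is obtained.} The paper gets this from strong duality. You get it from the KKT conditions of the dual, which need only differentiability of $G$ at $\blambda^*$.

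\textbf{The transfer step.} The paper's argument is qualitative. For fixed $X$ it shows $\sP_{\gD^{\text{train}}}(\widehat\phi_{\widehat\blambda}(X)\neq\phi^*_{\blambda^*}(X))\to0$. This uses $\widehat Z_j\to Z_j$ in probability (Lemma~\ref{lemma:lagrange_multiplier} plus Theorem~\ref{thm:estimator_asymptotic_normality}) and a positive gap between the two largest $Z_j$. It then applies dominated convergence over $X$ and Markov's inequality over the training data. Your argument is quantitative. Write $Z_j=c_j(\blambda^*,\vw^*)\pi_j^*(X)$, $\widehat Z_j=c_j(\widehat\blambda,\widehat\vw)\widehat\pi_j(X)$ and $\Delta(X)=\max_j|\widehat Z_j-Z_j|$. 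Then your bound is $R_k(\widehat\phi_{\widehat\blambda})-R_k(\phi^*_{\blambda^*})\le P_k^*(|Z_k-\max_{j\ne k}Z_j|\le2\eta)+\sE_{P_k^*}[\Delta(X)]/\eta$. Combined with your rate for $\widehat\blambda$, letting $\eta=\eta_n\to0$ slowly gives an explicit $\varepsilon(n)$, which the paper does not provide.

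\textbf{The shared no-tie condition.} Both routes need it. The paper simply asserts that the $Z_j$ have almost surely no ties. You try to derive it, and that derivation needs two corrections.

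First, the operative fact is that $\partial G/\partial\lambda_k$ exists at $\blambda^*$, not that the gradient is continuous. For $k\in\gS$, the one-sided partial derivatives in $\lambda_k$ of $\sE_X\max_j c_j(\blambda,\vw^*)\pi_j^*(X)$ are $\sE_X[\{\pi_k^*(X)/w_k^*\}\mathbbm{1}(Z_k\ge\max_{j\ne k}Z_j)]$ and the same expression with strict inequality. They therefore differ by exactly $P_k^*(Z_k=\max_{j\ne k}Z_j)$. Differentiability, which is part of Assumption~\ref{assump:local_strong_concavity}, forces this mass to vanish. Continuity of $\eta\mapsto P_k^*(|Z_k-\max_{j\ne k}Z_j|\le\eta)$ at $0$ then follows from continuity of measure from above.

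Second, define the near-tie event only for class $k$ against its strongest competitor, not over all pairs $(j,j')$ as your wording suggests. This is all you need, because $\{\widehat\phi_{\widehat\blambda}=k\}\triangle\{\phi^*_{\blambda^*}=k\}\subseteq\{|Z_k-\max_{j\ne k}Z_j|\le2\Delta(X)\}$. Ties between two classes outside $\gS$ do not affect $G$ and can have positive probability, for example if $g(X)$ has atoms. The differentiability argument cannot rule those out.

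With these corrections your argument is complete. It also supplies, for the classes $k\in\gS$ the theorem concerns, a justification of the no-tie step that the paper leaves unproved.
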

The proof is in Appendix~\ref{app:multiclass_proof}.
This theorem establishes that when the primal problem is feasible, our classifier asymptotically controls the class-specific errors of the target classes at the target level. 
Both Assumptions~\ref{assump:compactness_multiclass} and~\ref{assump:local_strong_concavity} are the same as that in~\citet{tian2024neyman}.

\begin{remark}[NPMC-ER approach]
\citet{tian2024neyman} also proposes an alternative estimator for $G(\blambda)$ based on a sample-splitting strategy.
While this estimator relaxes the assumption on the Rademacher complexity of the function class $\{\pi_k(x)\}$, it has lower efficiency due to sample splitting.
As shown in the proof of Theorem~\ref{thm:multiclass_error}, our model satisfies the Rademacher complexity condition, so we do not adopt this alternative estimator in our work.
\end{remark}

\subsection{Simulation study}
\label{sec:simulation_multiclass}
In this section, we conduct empirical experiments to compare our proposed method with other approaches.
Following~\citet{tian2024neyman}, we consider three typical settings to generate data:
\begin{enumerate}[label=(\alph*),leftmargin=*]
\item \textbf{Independent covariates}.  
We consider three-class independent Gaussian conditional distributions $X^{k} \sim N(\mu_k^*, I_5)$ where $\mu_0^* = (-1, 2, 1, 1, 1)^\top$, $\mu_1^* = (0, 1, 0, 1, 0)^\top$, $\mu_2 = (1, 1, -1, 0, 1)^\top$, and $I_5$ is the $5$-dimensional identity matrix. 
The marginal distribution of $ Y $ is $\sP(Y=0) = \sP(Y=1) = 0.3$ and $\sP(Y=2) = 0.4$.

\item \textbf{Dependent covariates}.  
In contrast to (a), where all five variables are independent Gaussian, we consider a four-class correlated Gaussian conditional distribution: $X^{k} \sim N(\eta_k, \Sigma)$, where $\eta_0^* = (1, -2, 0, -1, 1)^\top$, $\eta_1^* = (-1, 1, -2, -1, 1)^\top$, $\eta_2^* = (2, 0, -1, 1, -1)^\top$, $\eta_3^* = (1, 0, 1, 2, -2)^\top$, and $\Sigma_{ij}^* = 0.1^{|i-j|}$. The marginal distribution of $ Y $ is $\sP(Y = k) = 0.1(k+1)$ for $ k = 0, \ldots, 3$.

\item \textbf{Unequal covariance}. In contrast to previous cases where the covariance matrices are the same, we consider the following distribution $X^{k} \sim N(\mu_k^*, (k+1)I_5)$, where $\mu_k^*$s and the marginal distribution of $Y$ are the same as in (a). 
\end{enumerate}
Under each case, we generate a training set of size $n$ and a test set of fixed size $ m = 20{,}000 $. 
Since we study the NPMC problem under noisy labeled data, we \emph{generate noisy labels for the training set} using the following transition matrix: 
$T_{lk}^* = \eta/K$ if $k \neq l$ and $1 - (K-1)\eta/K$ otherwise.
We vary $n$ from $5000$ to $25,000$ with a step size of $5,000$, and consider $\eta\in \{0.05,0.1,0.15,0.2\}$. 
Each setting is repeated $R=500$ times.

For each data setting, we consider the following NPMC problems:
\begin{enumerate}[label=(\alph*),leftmargin=*]
\item \textbf{Independent covariates}. We aim to minimize $P_2^*(\{\phi(X) \neq 2\})$ subject to $P_0^*(\{\phi(X) \neq 0\}) \leq 0.1$ and $P_1^*(\{\phi(X) \neq 1\}) \leq 0.15$.

\item \textbf{Dependent covariates}. The goal is to minimize the overall misclassification error subject to the constraints:
$P_0^*(\{\phi(X) \neq 0\}) \leq 0.1$ and $P_3^*(\{\phi(X) \neq 3\}) \leq 0.05$.

\item \textbf{Unequal covariance}. We aim to minimize $\sum_{k=0}^{2}P_k^*(\{\phi(X) \neq k\})$ subject to $P_0^*(\{\phi(X) \neq 0\}) \leq 0.1$ and $P_1^*(\{\phi(X) \neq 1\}) \leq 0.12$.
\end{enumerate}
We compare EL-CX with the following baselines: (1) \textbf{Oracle}, which applies NPMC-CX~\citep{tian2024neyman} to \emph{clean} training labels; and (2) \textbf{Vanilla}, which applies NPMC-CX directly to noisy labels \emph{without accounting for label noise}.

As shown in Appendix~\ref{app:DRM}, the DRM assumption~\eqref{eq:drm} holds in all cases with $g(x)=x$ for the first two cases and $g(x)=(x^\top,x^\top x)^\top$ for the third case. 
We use the same covariates in the multinomial logistic regression model for all methods. 
The MSE of our MELE estimator converges at the root-$n$ rate, visualized in Figure~\ref{fig:parameter_estimation} in Appendix~\ref{app:model_fitting}.

We evaluate the classifiers using two metrics on the test set: class-specific errors and the NPMC objective in~\eqref{eq:NPMC}.
To assess error control, we define the excessive risk for each class as the difference between its empirical misclassification rate and the target level.
Table~\ref{tab:npmc} reports the mean and standard deviation of this quantity at $\eta=0.1$ over $R$ repetitions.
Our method achieves objective values comparable to the oracle, with class-specific excessive risks nearly identical and improving as the sample size increases.
In contrast, the Vanilla method, which ignores label noise, is markedly conservative, leading to larger excessive risks (especially for class 0 in cases (b) and (c)) and consequently suboptimal objective values that worsen with increasing sample size.

\begin{table}[!htbp]
\centering
\caption{Mean excessive risk (with standard deviation in parentheses) across target classes and objective function values, grouped by sample size and method, for the case $\eta = 0.1$.}
\label{tab:npmc}
\resizebox{\textwidth}{!}{
\begin{tabular}{@{}cc@{\hspace{2em}}cc>{\columncolor{RoyalBlue!5}}c@{\hspace{2em}}cc>{\columncolor{RoyalBlue!5}}c@{\hspace{2em}}cc>{\columncolor{RoyalBlue!5}}c@{}}
\toprule
\multirow{2}{*}{$n$} & \multirow{2}{*}{Method} & \multicolumn{3}{c}{(a)}&\multicolumn{3}{c}{(b)}&\multicolumn{3}{c}{(c)}\\ 
\cmidrule(lr{2em}){3-5}
\cmidrule(lr{2em}){6-8}
\cmidrule(lr{-0.5em}){9-11}
&&Class 0 & Class 1 & Obj.&Class 0 & Class 3 & Obj. &Class 0 & Class 1 & Obj. \\
\midrule
\multirow{4}{*}{5K} 
& Vanilla & \negred{-0.023}\,\bracket{0.005} & \negred{-0.050}\,\bracket{0.006} & 0.982\,\bracket{0.004} & \negred{-0.097}\,\bracket{0.001} & \negred{-0.019}\,\bracket{0.005} & 0.498\,\bracket{0.027} & \negred{-0.028}\,\bracket{0.007} & 0.040\,\bracket{0.006} & 1.226\,\bracket{0.008} \\
& \textbf{Ours}   & 0.001\,\bracket{0.010} & 0.002\,\bracket{0.012} & 0.036\,\bracket{0.007} & \negred{-0.032}\,\bracket{0.008} & 0.001\,\bracket{0.007} & 0.086\,\bracket{0.003} & 0.005\,\bracket{0.013} & \negred{-0.012}\,\bracket{0.007} & 0.830\,\bracket{0.147} \\
& Oracle  & 0.001\,\bracket{0.008} & 0.002\,\bracket{0.010} & 0.035\,\bracket{0.005} & \negred{-0.033}\,\bracket{0.008} & 0.001\,\bracket{0.005} & 0.085\,\bracket{0.002} & 0.004\,\bracket{0.009} & \negred{-0.013}\,\bracket{0.006} & 0.800\,\bracket{0.078} \\
\midrule
\multirow{4}{*}{10K} 
& Vanilla & \negred{-0.023}\,\bracket{0.005} & \negred{-0.050}\,\bracket{0.005} & 0.982\,\bracket{0.003} & \negred{-0.099}\,\bracket{0.001} & \negred{-0.018}\,\bracket{0.003} & 0.512\,\bracket{0.004} & \negred{-0.029}\,\bracket{0.006} & 0.040\,\bracket{0.006} & 1.225\,\bracket{0.007} \\
& \textbf{Ours}   & 0.000\,\bracket{0.008} & 0.001\,\bracket{0.010} & 0.036\,\bracket{0.005} & \negred{-0.033}\,\bracket{0.007} & 0.000\,\bracket{0.005} & 0.086\,\bracket{0.002} & 0.003\,\bracket{0.009} & \negred{-0.013}\,\bracket{0.006} & 0.814\,\bracket{0.114} \\& Oracle  & 0.000\,\bracket{0.006} & 0.001\,\bracket{0.008} & 0.035\,\bracket{0.004} & \negred{-0.034}\,\bracket{0.007} & 0.000\,\bracket{0.004} & 0.086\,\bracket{0.002} & 0.003\,\bracket{0.007} & \negred{-0.013}\,\bracket{0.005} & 0.791\,\bracket{0.049} \\\midrule
\multirow{4}{*}{20K} 
& Vanilla & \negred{-0.023}\,\bracket{0.004} & \negred{-0.050}\,\bracket{0.005} & 0.984\,\bracket{0.002} & \negred{-0.099}\,\bracket{0.001} & \negred{-0.018}\,\bracket{0.003} & 0.512\,\bracket{0.004} & \negred{-0.029}\,\bracket{0.005} & 0.040\,\bracket{0.006} & 1.227\,\bracket{0.007} \\
& \textbf{Ours}   & 0.001\,\bracket{0.006} & 0.001\,\bracket{0.008} & 0.035\,\bracket{0.004} & \negred{-0.034}\,\bracket{0.006} & 0.000\,\bracket{0.004} & 0.086\,\bracket{0.002} & 0.004\,\bracket{0.007} & \negred{-0.013}\,\bracket{0.005} & 0.797\,\bracket{0.064} \\
& Oracle  & 0.000\,\bracket{0.005} & 0.001\,\bracket{0.006} & 0.035\,\bracket{0.003} & \negred{-0.034}\,\bracket{0.006} & 0.000\,\bracket{0.003} & 0.086\,\bracket{0.002} & 0.004\,\bracket{0.006} & \negred{-0.013}\,\bracket{0.005} & 0.789\,\bracket{0.031} \\
\midrule
\multirow{4}{*}{25K} 
& Vanilla & \negred{-0.023}\,\bracket{0.004} & \negred{-0.050}\,\bracket{0.004} & 0.984\,\bracket{0.002} & \negred{-0.099}\,\bracket{0.001} & \negred{-0.018}\,\bracket{0.003} & 0.512\,\bracket{0.004} & \negred{-0.029}\,\bracket{0.004} & 0.040\,\bracket{0.005} & 1.226\,\bracket{0.006} \\
& \textbf{Ours}   & \negred{-0.000}\,\bracket{0.006} & 0.000\,\bracket{0.007} & 0.035\,\bracket{0.004} & \negred{-0.034}\,\bracket{0.006} & 0.000\,\bracket{0.004} & 0.086\,\bracket{0.002} & 0.003\,\bracket{0.007} & \negred{-0.013}\,\bracket{0.005} & 0.800\,\bracket{0.062} \\
& Oracle  & 0.000\,\bracket{0.005} & \negred{-0.000}\,\bracket{0.006} & 0.035\,\bracket{0.003} & \negred{-0.034}\,\bracket{0.006} & 0.000\,\bracket{0.003} & 0.086\,\bracket{0.002} & 0.003\,\bracket{0.005} & \negred{-0.014}\,\bracket{0.004} & 0.790\,\bracket{0.031} \\
\bottomrule
\end{tabular}}
\end{table}

We also provide violin plots of the class-specific error and the objective value under different noise levels $\eta$ for case (a) in Figure~\ref{fig:violin_plots}. 
The corresponding plots for the other two cases are included in Appendix~\ref{app:violin_plots}. 
\begin{figure}[!ht]
\centering
\includegraphics[width=0.8\linewidth]{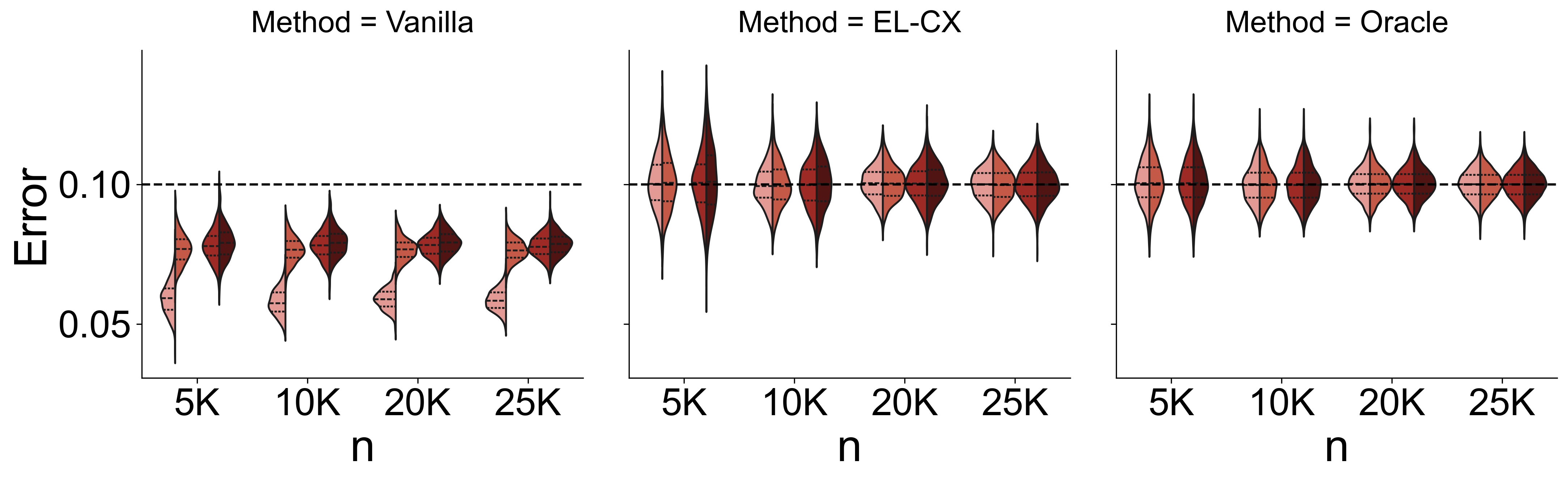}
\includegraphics[width=0.8\linewidth]{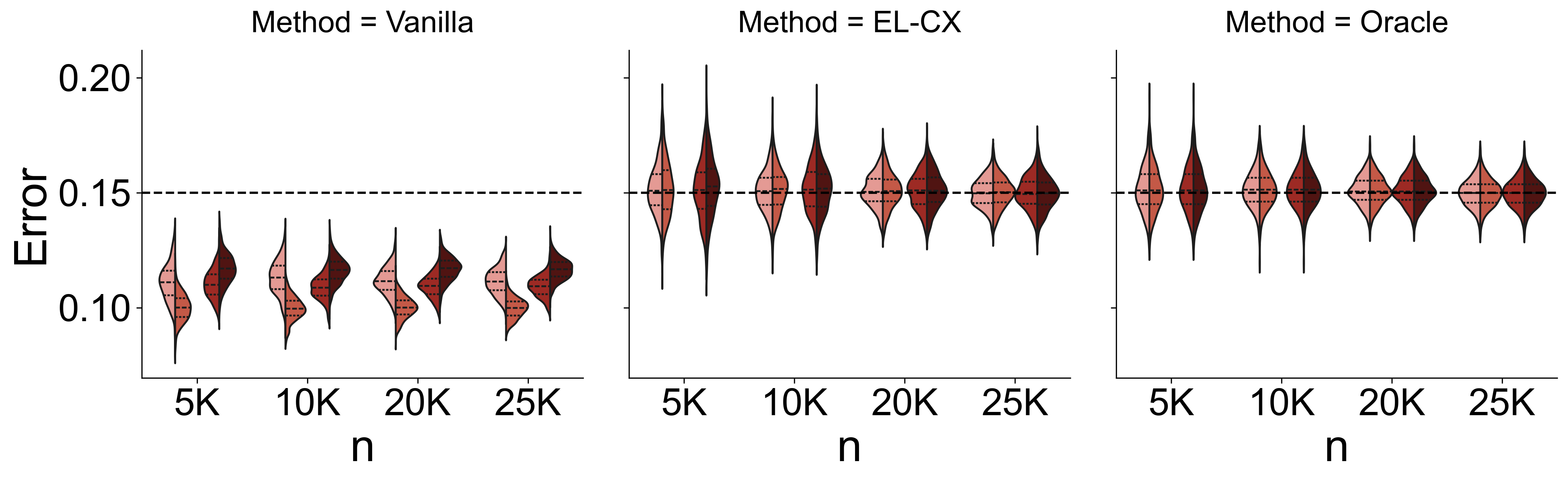}
\includegraphics[width=0.8\linewidth]{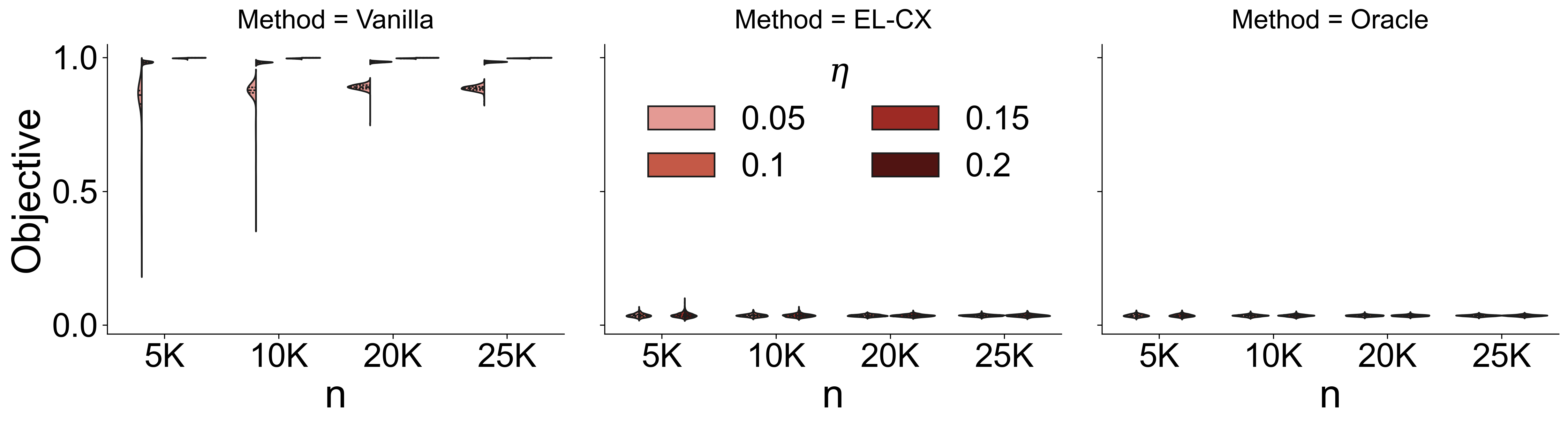}
\caption{Violin plots (top to bottom) show the misclassification errors for Class 0, Class 1, and the objective function value under Case (a), computed over $R = 500$ repetitions. Results compare different methods across varying sample sizes ($n$) and noise levels ($\eta$). The dashed black lines in the first two rows mark the target misclassification errors.}
\label{fig:violin_plots}
\end{figure}
These visualizations further demonstrate that our EL-based method performs nearly as well as the oracle estimator when the sample size $n$ is large, regardless of $\eta$, which confirms the effectiveness of the proposed method.

%% file: sections/real_data.tex
\section{Real data experiments}
\label{sec:real_data}
Following~\citet{tian2024neyman}, we conduct experiments on two real datasets: the Dry Bean dataset and the Landsat Satellite dataset. 
Both datasets are used for multi-class classification. 
Due to space constraints, we present the results for the Landsat Satellite dataset here, and defer the results for the Dry Bean dataset to Appendix~\ref{app:dry_bean}.

The dataset used in this experiment is the Statlog (Landsat Satellite) dataset obtained from the UCI Machine Learning Repository at this~\href{https://archive.ics.uci.edu/dataset/146/statlog+landsat+satellite}{link}. 
It contains multi-spectral satellite imagery captured by NASA’s Landsat satellite, with $6,435$ instances divided into a training set ($4,435$ instances) and a test set ($2,000$ instances). 
Each instance consists of $36$ numerical attributes representing pixel values from four spectral bands over a $3\times 3$ pixel neighborhood. 
The goal is to predict the class of the central pixel, which belongs to one of six land cover types: red soil ($1,533$), cotton crop ($703$), grey soil ($1,358$), damp grey soil ($626$), soil with vegetation stubble ($707$), and very damp grey soil ($1,508$). 
This dataset is commonly used for benchmarking classification methods and is known to be challenging due to the similarity between some land cover classes.

We recode the six classes into labels 0 through 5, and consider minimizing the sum of class-specific errors subject to $P_3^{*}(\phi(X)\neq 3) \leq 0.2$ and $P_4^{*}(\phi(X)\neq 4) \leq 0.1$. 
We use the train-test split provided in the UCI repository and generate noisy labels for the training set using the contamination model described in Section~\ref{sec:simulation_multiclass}. 
This procedure is repeated 100 times. 
We fit a multinomial logistic regression model with $g(x)=x$ in all cases, using $L_2$ regularization with penalty strength $10$ wherever multinomial or weighted multinomial logistic regression applies.

\begin{table}[!ht]
\centering
\caption{Mean excessive risk across target classes and objective function values, grouped by noise level $\eta$ and method.}
\label{tab:satlog}
\resizebox{0.95\textwidth}{!}{
\begin{tabular}{@{}c@{\hspace{2em}}cc>{\columncolor{RoyalBlue!5}}c@{\hspace{2em}}cc>{\columncolor{RoyalBlue!5}}c@{\hspace{2em}}cc>{\columncolor{RoyalBlue!5}}c@{\hspace{2em}}cc>{\columncolor{RoyalBlue!5}}c@{}}
\toprule
\multirow{2}{*}{Method} & \multicolumn{3}{c}{$\eta=0.05$}&\multicolumn{3}{c}{$\eta=0.10$}&\multicolumn{3}{c}{$\eta=0.15$}&\multicolumn{3}{c}{$\eta=0.20$}\\ 
\cmidrule(lr{2em}){2-4}
\cmidrule(lr{2em}){5-7}
\cmidrule(lr{2em}){8-10}
\cmidrule(lr{-0.5em}){11-13}
&Class 3 & Class 4 & Obj. &Class 3 & Class 4 & Obj. &Class 3 & Class 4 & Obj. &Class 3 & Class 4 & Obj.\\
\midrule
Na\"ive& 0.620&0.282&0.243&0.669&0.317&0.254&0.711&0.351&0.265&0.730&0.388&0.274\\
Vanilla&\negred{-0.133}&\negred{-0.031}&0.684&\negred{-0.167}&0.013&0.691&\negred{-0.162}&0.030&0.695&\negred{-0.147}&0.034&0.698\\
\textbf{Ours}&\negred{-0.089}&\negred{-0.019}&0.249&\negred{-0.097}&\negred{-0.007}&0.268&\negred{-0.107}&0.001&0.284&\negred{-0.114}&0.000&0.310\\
Oracle&\negred{-0.096}&\negred{-0.016}&0.233&\negred{-0.096}&\negred{-0.016}&0.233&\negred{-0.096}&\negred{-0.016}&0.233&\negred{-0.096}&\negred{-0.016}&0.233\\
\bottomrule
\end{tabular}}
\end{table}
The performance of the methods, based on 100 repetitions, is summarized in Table~\ref{tab:satlog}. 
We use the same performance metrics as in Section~\ref{sec:simulation_multiclass}: the mean excessive risk and objective value. 
We also include a new baseline \textbf{Na\"ive}, which uses standard multinomial logistic regression without any error control. 
The results show the necessity of error control, as the naive approach fails to meet the target errors.
Among the remaining methods, even though the multinomial logistic regression model with $g(x)=x$ can be misspecified on this real dataset, our method stays closer to the target error levels compared to the vanilla approach. 
While our method becomes slightly worse when $\eta$ increases, it still outperforms the vanilla method. When the noise is small (\ie, $\eta=0.05$), our method is comparable to the oracle approach.

%% file: sections/conclusion.tex
\section{Discussion and concluding remarks}
\label{sec:conclusion}
This paper addresses a critical challenge in safety-critical machine learning applications: maintaining reliable error control guarantees when the training data contain label noise. In domains like medical diagnosis or autonomous systems, label noise can systematically distort NPMC, leading to violated safety constraints (e.g., elevated false negatives in disease screening) with potentially severe consequences. 

By extending the Neyman–Pearson classification framework to accommodate noisy labels, we have developed an empirical likelihood–based approach that preserves class-specific error guarantees without requiring prior knowledge of the noise transition matrix.
Theoretically, our method satisfies NP oracle inequalities under mild conditions. Empirically, it achieves near-optimal performance on both synthetic and real-world datasets, closely approximating the behavior of classifiers trained on clean labels.

Incorporating data-adaptive selection of the basis function $g$ as in~\citet{zhang2022density}, or even integrating it with deep neural networks, remains an important direction for future work.
In addition, extending our framework to handle more complex instance-dependent label noise~\citep{xia2020part} is another promising yet challenging avenue.

%% file: sections/appendix.tex
\input{sections/appendices/EM}
\input{sections/appendices/more_exp}

\section{Theoretical proofs}
\input{sections/appendices/el}

\subsection{Identifiability}
\label{app:identifiability}
\input{sections/appendices/identifiability}

\subsection{Preliminaries on Rademacher complexity}
In this section, we introduce key definitions on Rademacher complexity and essential lemmas for our class-specific error control analysis.
\input{sections/appendices/def_lemma}

\subsection{Guaranteed type I error control under binary classification}
\label{app:binary_proof}
This section is organized as follows: we first prove Theorem~\ref{thm:binary_type1}, then present the supporting lemmas at the end.
\input{sections/appendices/error_control_binary}

\subsection{Guaranteed class specific error control under multiclass classification}
\label{app:multiclass_proof}
\input{sections/appendices/error_control_multiclass}

%% file: sections/appendices/EM.tex
\section{EM algorithm details and convergence analysis}
We present the complete derivation of the EM algorithm for the EL based inference under the general multiclass case with noisy labels. 
\subsection{EM algorithm implementation}
\label{app:em_detail}
Consider the complete data $\{(X_i,Y_i,\widetilde{Y}_i)\}_{i=1}^n$ with the same notation as Section~\ref{sec:EM_algorithm}. 
The complete data log empirical likelihood (EL) is:
\[
\ell_{n}^{c}(\vp, \btheta) = \sum_{i=1}^{n} \sum_{l\in [K]} \sum_{k\in [K]} \mathbbm{1}(\widetilde{Y}_i=l, Y_i=k) \log \sP(X = X_i, \widetilde{Y}_i=l, Y_i=k).
\]
Using the chain rule and the instance-independent assumption in~\eqref{assumption:instance_independent_noise}, we decompose the joint probability into:
\[
\begin{split}
\sP(X = X_i, \widetilde{Y}_i=l, Y_i=k) &= \sP(X = X_i | \widetilde{Y}_i = l, Y_i = k) \sP(\widetilde{Y}_i = l | Y_i = k) \sP(Y_i = k) \\
&= \sP(X = X_i | Y_i = k) T_{lk}^* w_k^* \\
&= P_0^*(\{X_i\}) \exp\{\gamma_k + \beta_k^{\top} g(X_i)\} T_{lk}^* w_k^*,
\end{split}
\]
where the last equality follows from the DRM in~\eqref{eq:drm}. 
This yields the complete data profile log-EL:
\begin{equation*}
\ell_{n}^{c}(\vp,\btheta) =\sum_{i=1}^{n} \sum_{l\in [K]} \sum_{k\in [K]}\mathbbm{1}(\widetilde{Y}_i=l)\mathbbm{1}(Y_i=k)\left[\log p_{i} + \{\gamma_k + \beta_k^{\top}g(X_i)\} + \log T_{lk} + \log w_k\right].
\end{equation*}
We define the profile complete data log-EL as 
\[p\ell_{n}^{c}(\btheta) = \sup_{\vp} \ell_{n}^{c}(\vp, \btheta)\]
where the maximization is subject to the constraints in~\eqref{eq:constraints}. 
Using the same derivation from Appendix~\ref{app:profile_loglik}, the profile complete data log-EL becomes
\begin{equation*}
p\ell_{n}^{c}(\btheta) =\sum_{i=1}^{n} \sum_{l\in [K]} \sum_{k\in [K]}\mathbbm{1}(\widetilde{Y}_i=l)\mathbbm{1}(Y_i=k)\left(\gamma_k + \beta_k^{\top}g(X_i) + \log T_{lk} + \log w_k\right) + \sum_{i=1}^{n}\log p_i(\btheta)
\end{equation*}
where $p_i(\btheta)$ is given in~\eqref{eq:vp_lagrange}, and the $\nu_k$'s are the solutions to~\eqref{eq:lagrange-multiplier-solution}.

The EM algorithm is an iterative procedure that starts with an initial guess $\btheta^{(0)}$. 
At each iteration $t$, given the current parameter estimate $\btheta^{(t)}$, the algorithm proceeds in two steps:
\begin{itemize}[leftmargin=*]
\item \textbf{E-step (Expectation step)}: The missing data is ``filled in'' by computing an educated guess based on the current parameter estimate $\btheta^{(t)}$. 
This involves calculating the posterior distribution of the missing data given the observed data and the current parameters.
\begin{equation}
\label{eq:e-step}
\begin{split}
\omega_{ik}^{(t)} 
=&~\sE(\mathbbm{1}(Y_i=k)| \btheta^{(t)};\widetilde{Y}_i, X_i) =\sP(Y_{i}=k| \btheta^{(t)};\widetilde{Y}_i, X_i)\\
=&~\frac{\sP(Y_i=k,\widetilde{Y}=\widetilde{Y}_i, X=X_i;\btheta^{(t)})}{\sum_{k'}\sP(Y_i=k',\widetilde{Y}=\widetilde{Y}_i, X=X_i;\btheta^{(t)})}\\
=&~\frac{\sP(X=X_i| Y_i=k,\widetilde{Y}=\widetilde{Y}_i;\btheta^{(t)})\sP(\widetilde{Y}=\widetilde{Y}_i| Y_i=k;\btheta^{(t)})\sP(Y_i=k;\btheta^{(t)})}{\sum_{k'}\sP(X=X_i| Y_i=k',\widetilde{Y}=\widetilde{Y}_i;\btheta^{(t)})\sP(\widetilde{Y}=\widetilde{Y}_i| Y_i=k';\btheta^{(t)})\sP(Y_i=k';\btheta^{(t)})}\\
=&~\frac{\sP(X=X_i| Y_i=k;\btheta^{(t)})T^{(t)}_{\widetilde{Y}_i,k}w_k^{(t)}}{\sum_{k'}\sP(X=X_i| Y_i=k';\btheta^{(t)})T^{(t)}_{\widetilde{Y}_i,k'}w_{k'}^{(t)}}=\frac{\exp(\gamma_k^{(t)}+\langle \beta_k^{(t)}, g(X_i)\rangle)T_{\widetilde{Y}_i,k}^{(t)}w_k^{(t)}}{\sum_{k'=1}^{K}\exp(\gamma_{k'}^{(t)}+\langle \beta_{k'}^{(t)}, g(X_i)\rangle)T_{\widetilde{Y}_i,k'}^{(t)}w_{k'}^{(t)}}.
\end{split}
\end{equation}
Then the expected complete data profile empirical log-EL at iteration $t$ is
\begin{equation*}
\begin{split}
Q(\btheta;\btheta^{(t)})=&~\sE\{p\ell_n^{(c)}(\btheta)| \btheta^{(t)};\gD\}\\
=&~\sum_{i=1}^{n}\sum_{l\in[K]}\mathbbm{1}(\widetilde{Y}_i=l)\sum_{k\in [K]}\omega_{ik}^{(t)}(\log w_k + \gamma_k + \beta_k^{\top}g(X_i) + \log T_{lk})+\sum_{i=1}^{n} \log p_i(\btheta)    
\end{split}
\end{equation*}

\item \textbf{M-step (Maximization step)}: Instead of maximizing log-EL, the M-step maximizes $Q$ with respect to $\btheta$ instead, \ie
\[\btheta^{(t+1)} = \argmax Q(\btheta, \btheta^{(t)}).\]
Note that $Q$ is separable in $\vw$, $\mT$, and $(\bgamma, \bbeta)$. 
Since it can be written as an additive sum:
\[Q(\btheta;\btheta^{(t)}) = Q_1(\vw;\btheta^{(t)})+Q_2(\mT;\btheta^{(t)})+Q_3(\bgamma,\bbeta;\btheta^{(t)})\]
where 
\[
\begin{split}
Q_1(\vw;\btheta^{(t)}) =&~\sum_{i=1}^{n}\sum_{k\in [K]}\omega_{ik}^{(t)}\log w_k\\
Q_2(\mT;\btheta^{(t)}) =&~\sum_{i=1}^{n}\sum_{l\in [K]}\mathbbm{1}(\widetilde{Y}_i=l)\sum_{k\in [K]}\omega_{ik}^{(t)} \log T_{lk}\\
Q_3(\bgamma,\bbeta;\btheta^{(t)}) =&~\sum_{i=1}^{n}\sum_{l\in [K]}\mathbbm{1}(\widetilde{Y}_i=l)\sum_{k\in [K]}\omega_{ik}^{(t)}(\gamma_k + \beta_k^{\top}g(X_i))\\
&-\sum_{i=1}^{n} \log \left\{1+\sum_{k=1}^{K-1}\nu_k\left(\exp(\gamma_k+\beta_k^{\top}g(X_i))-1\right)\right\}
\end{split}
\]
We can therefore optimize each set of parameters separately as follows:
\begin{itemize}[leftmargin=*]
\item \textbf{Maximize over $\vw$.} 
From the decomposition of $Q$, we have
\[\vw^{(t+1)} = \argmax_{\vw} Q_1(\vw;\btheta^{(t)})\]
subject to the constraint that $w_k \in [0, 1]$ and $\sum_{k\in [K]} w_k = 1$. 
Using the method of Lagrange multipliers, the Lagrangian is
\[\gL_{\vw} = \sum_{i=1}^{n}\sum_{k\in [K]}\omega_{ik}^{(t)}\log w_k - \zeta \left(\sum_{k\in [K]}w_k -1\right).\]
At the optimum, we require
\[\frac{\partial \gL_{\vw}}{\partial \zeta}=0, \frac{\partial \gL_{\vw}}{\partial w_k}=0,~k\in [K].\]
Solving these $K+1$ equations gives
\[w_{k}^{(t+1)} =\frac{1}{n}\sum_{i=1}^{n}\omega_{ik}^{(t)}.\]

\item \textbf{Maximize over $\mT$.}  
From the decomposition of $Q$, we have
\[\mT^{(t+1)} = \argmax_{\mT} Q_2(\mT;\btheta^{(t)})\]
subject to the constraint that $\sum_{l\in [K]} T_{lk} = 1$ for all $k \in [K]$. 
Using the method of Lagrange multipliers, the Lagrangian is
\[\gL_{\mT} = \sum_{i=1}^{n}\sum_{l\in [K]}\mathbbm{1}(\widetilde{Y}_i=l)\sum_{k\in [K]}\omega_{ik}^{(t)}\log T_{lk} - \sum_{j\in [K]}\zeta_j \left(\sum_{l\in [K]} T_{lk} -1\right).\]
At the optimum, we require
\[
\begin{split}
0 =&~\frac{\partial \gL_{\mT}}{\partial \zeta_k} = \sum_{l\in [K]} T_{lk} -1,~k\in [K] \\
0 =&~\frac{\partial \gL_{\mT}}{\partial T_{lk}} = \frac{\sum_{i=1}^{n}\mathbbm{1}(\widetilde{Y}_i=l)\omega_{ik}^{(t)}}{T_{lk}} - \zeta_k,~l\in[K],~k\in [K].
\end{split}
\]
Solving these $K^2 + K$ equations yields
\[T_{lk}^{(t+1)} =\frac{\sum_{i=1}^{n}\mathbbm{1}(\widetilde{Y}_i=l)\omega_{ik}^{(t)}}{\sum_{i=1}^{n}\omega_{ik}^{(t)}}.\]

\item \textbf{Maximize over $(\bgamma,\bbeta)$.} 
Note that $(\bgamma^{(t+1)}, \bbeta^{(t+1)})$ maximizes the function
{\small{
\[
Q_3(\bgamma,\bbeta;\btheta^{(t)}) =\sum_{i=1}^{n}\sum_{k\in [K]}\omega_{ik}^{(t)}(\gamma_k + \beta_k^{\top}g(X_i)) -\sum_{i=1}^{n} \log\left\{1+\sum_{k=1}^{K-1}\nu_k \left(\exp(\gamma_k +\beta_k^{\top}g(X_i))-1\right)\right\}
\]}}
up to a constant that does not depend on $(\bgamma, \bbeta)$. 
The maximizers $\bgamma^{(t+1)}$ and $\bbeta^{(t+1)}$ are the solutions to the following system:
\[
\frac{\partial Q_3}{\partial \gamma_k} = \sum_{i=1}^{n}\omega_{ik}^{(t)} -\sum_{i=1}^{n}\frac{\nu_k \exp(\gamma_k +\beta_k^{\top}g(X_i)) + (\partial\nu_k/\partial \gamma_k )\left\{\exp(\gamma_k +\beta_k^{\top}g(X_i))-1\right\}}{1+\sum_{k'=1}^{K-1}\nu_{k'} \left\{\exp(\gamma_{k'} +\beta_{k'}^{\top}g(X_i))-1\right\}} =0. 
\]
Since the Lagrange multipliers $\nu_k$ satisfy the equation in~\eqref{eq:lagrange-multiplier-solution}, this simplifies to
\[\sum_{i=1}^{n}\omega_{ik}^{(t)} = \nu_k \sum_{i=1}^{n}\frac{\exp(\gamma_k +\beta_k^{\top}g(X_i))}{\sum_{k\in[K]}\nu_k \exp(\gamma_k +\beta_k^{\top}g(X_i))} = n \nu_k\]
and at the optimum, we have $\nu_k = n^{-1} \sum_{i=1}^{n} \omega_{ik}^{(t)} := \omega_{\cdot k}^{(t)}$. 
Thus, the stationary point of $Q_3(\bgamma, \bbeta;\btheta^{(t)})$ coincides with the stationary point of
\[
\widetilde{Q}(\bgamma,\bbeta;\btheta^{(t)}) =\sum_{i=1}^{n}\sum_{k\in [K]}\omega_{ik}^{(t)}(\gamma_k + \beta_k^{\top}g(X_i)) -\sum_{i=1}^{n} \log\left\{\sum_{k\in[K]}w_{\cdot k}\exp(\gamma_k +\beta_k^{\top}g(X_i))\right\}.
\]
Let $\bar\gamma_k = \gamma_k + \log (\omega_{\cdot k}^{(t)}/\omega_{\cdot 0}^{(t)})$, then the function $\widetilde{Q}$ becomes
\[
\widetilde{Q}(\bar\bgamma,\bbeta;\btheta^{(t)}) =\sum_{i=1}^{n}\sum_{k\in [K]}\omega_{ik}^{(t)}(\bar\gamma_k + \beta_k^{\top}g(X_i)) -\sum_{i=1}^{n} \log\left\{\sum_{k\in[K]}\exp(\bar\gamma_k +\beta_k^{\top}g(X_i))\right\}.
\]
Thus, we have 
\[\bar\bgamma^{(t+1)}, \bbeta^{(t+1)} = \argmax_{\bar\bgamma,\bbeta}\widetilde{Q}(\bar\bgamma,\bbeta;\btheta^{(t)}).\]
Now, we show that $(\bar{\bgamma}^{(t+1)}, \bbeta^{(t+1)})$ is the maximum weighted log-likelihood estimator under the multinomial logistic regression model based on the dataset in Table~\ref{tab:weighted_glm}.
\begin{table}[hbt]
\centering
\caption{The dataset for weighted multinomial logistic regression in the M step.}
\label{tab:weighted_glm}
\begin{tabular}{ccc}
\toprule
Response & Covariates & weight\\
\midrule
$0$ & $(1, g^{\top}(X_1))^{\top}$ & $\omega_{10}^{(t)}$\\
\vdots & \vdots & \vdots \\
$0$ & $(1,g^{\top}(X_{n}))^{\top}$ & $\omega_{n0}^{(t)}$\\
$1$ & $(1,g^{\top}(X_1))^{\top}$ & $\omega_{11}^{(t)}$\\
\vdots & \vdots & \vdots \\
$1$ & $(1,g^{\top}(X_{n}))^{\top}$ & $\omega_{n1}^{(t)}$\\
\vdots & \vdots & \vdots \\
$K-1$ & $(1,g^{\top}(X_1))^{\top}$ & $\omega_{1,K-1}^{(t)}$\\
\vdots & \vdots & \vdots \\
$K-1$ & $(1,g^{\top}(X_{n}))^{\top}$ & $\omega_{n,K-1}^{(t)}$\\
\bottomrule
\end{tabular}
\end{table}
To see this, consider the weighted multinomial logistic regression. 
Given the multinomial logistic regression model
\[\sP(Y=k| X=x) = \frac{\exp(x^{\top}\beta_k)}{\sum_{j=0}^{K-1}\exp(x^{\top}\beta_j)},~k\in [K]\]
and the dataset $\{(x_i, y_i)\}_{i=1}^{n}$, the weighted log-likelihood function with weights $\{c_i\}_{i=1}^{n}$ becomes
\be
\label{eq:loss_weighted_multinom}
\begin{split}
\ell(\bbeta) =&~\sum_{i=1}^{n} c_i\log \left(\frac{\exp(x_i^{\top}\beta_{y_i})}{\sum_{j=1}^{K}\exp(x_i^{\top}\beta_j)} \right)\\
=&\sum_{i=1}^{n}\sum_{k\in [K]}c_ix_i^{\top}\beta_k\mathbbm{1}(y_i=k) - \sum_{i=1}^{n}c_i\log \left(\sum_{j=1}^{K}\exp(x_i^{\top}\beta_j)\right)
\end{split}
\ee
By substituting the values from Table~\ref{tab:weighted_glm}, the weighted log-likelihood function becomes $\widetilde{Q}$.
Thus, $(\bar{\bgamma}^{(t+1)}, \bbeta^{(t+1)})$ maximizes the weighted log-likelihood.
We can use any existing packages in \texttt{R} or \texttt{Python} to find the numerical value.
Finally, we compute 
\be
\gamma_k^{(t+1)} = \bar\gamma_k^{(t+1)} - \log (\omega_{\cdot k}^{(t)}/\omega_{\cdot 0}^{(t)}),~k\in[K].
\ee
\end{itemize}
The weights are 
\[
p_i^{(t+1)} = p_i(\btheta^{(t+1)}) = n^{-1}\left\{\sum_{k\in [K]} \omega_{\cdot k}^{(t)} \exp(\gamma_k^{(t+1)}+\langle \beta_k^{(t+1)}, g(X_i)\rangle)\right\}^{-1}
\]
since $\nu_k = \omega_{\cdot k}^{(t)}$ at the optimal point.
\end{itemize}
The E-step and M-step are repeated iteratively until the change in the profile log-EL falls below a predefined threshold. 
This stopping criterion is guaranteed to be met because the EM algorithm produces a sequence of estimates that monotonically increase the true objective--the profile log-EL. 
We provide a proof of this convergence property in the next subsection.


\subsection{Proof of the convergence of EM algorithm}
\label{app:em_convergence}
We establish the monotonic increase of the profile log-EL at each EM iteration.
\begin{proof}
Consider the difference in profile log-EL between consecutive iterations:
\[
\begin{split}
&p\ell_n(\btheta^{(t+1)})-p\ell_n(\btheta^{(t)})\\
=& \sum_{i=1}^{n} \log\left(\frac{p_i(\btheta^{(t+1)})}{p_i(\btheta^{(t)})}\right) +  \sum_{i=1}^{n}\log\left[\frac{\sum_{k\in [K]} \left\{w_k^{(t+1)}T^{(t+1)}_{\widetilde{Y}_i,k}\exp(\gamma^{(t+1)}_k + \langle\beta^{(t+1)}_k,g(X_i)\rangle)\right\}}{\sum_{k\in [K]} \left\{w_l^{(t)}T^{(t)}_{\widetilde{Y}_i,k}\exp(\gamma^{(t)}_k + \langle\beta^{(t)}_k,g(X_i)\rangle)\right\}} \right].
\end{split}
\]
Using the weights $\omega_{ik}^{(t)}$ defined in~\eqref{eq:e-step}, we can rewrite this as:
\[
\begin{split}
&p\ell_n(\btheta^{(t+1)})-p\ell_n(\btheta^{(t)})\\
=& \sum_{i=1}^{n} \log\left(\frac{p_i(\btheta^{(t+1)})}{p_i(\btheta^{(t)})}\right) +  \sum_{i=1}^{n}\log\left\{\sum_{k\in [K]}\omega_{ik}^{(t)}\frac{w_k^{(t+1)}T^{(t+1)}_{\widetilde{Y}_i,k}\exp(\gamma^{(t+1)}_k + \langle\beta^{(t+1)}_k,g(X_i)\rangle)}{w_k^{(t)}T^{(t)}_{\widetilde{Y}_i,k}\exp(\gamma^{(t)}_k + \langle\beta^{(t)}_k,g(X_i)\rangle)} \right\}.
\end{split}
\]
By Jensen's inequality and the concavity of $\log(\cdot)$, we obtain:
\[
\begin{split}
&p\ell_n(\btheta^{(t+1)})-p\ell_n(\btheta^{(t)})\\
\geq& \sum_{i=1}^{n} \log\left(\frac{p_i(\btheta^{(t+1)})}{p_i(\btheta^{(t)})}\right) +  \sum_{i=1}^{n}\sum_{k\in [K]}\omega_{ik}^{(t)}\log\left(\frac{w_k^{(t+1)}T^{(t+1)}_{\widetilde{Y}_i,k}\exp(\gamma^{(t+1)}_k + \langle\beta^{(t+1)}_k,g(X_i)\rangle)}{w_k^{(t)}T^{(t)}_{\widetilde{Y}_i,k}\exp(\gamma^{(t)}_k + \langle\beta^{(t)}_k,g(X_i)\rangle)} \right)\\
=&Q(\btheta^{(t+1)};\btheta^{(t)})-Q(\btheta^{(t)};\btheta^{(t)}).
\end{split}
\]
Since $\btheta^{(t+1)}$ maximizes $Q(\btheta;\btheta^{(t)})$ by construction, we have:
\[Q(\btheta^{(t+1)};\btheta^{(t)})-Q(\btheta^{(t)};\btheta^{(t)})\geq 0.\]
Consequently:
\[p\ell_n(\btheta^{(t+1)})-p\ell_n(\btheta^{(t)})\geq 0,\] 
which completes the proof of monotonic convergence.    
\end{proof}

\subsection{Incorporating prior knowledge on transition matrix}
\label{app:em_prior_knowledge_transition_matrix}
We present modifications to the M-step when prior knowledge about the transition matrix $\mT^*$ is available. 
Specifically, consider known lower bounds for the diagonal elements:
\[
T_{kk}^* \geq \xi_k, \quad \xi_k \in [0,1], \quad k = 1,\ldots,K
\]
where typical values might be $\xi_k = 0.8$. 
The updates for $\vw$ and $(\bgamma, \bbeta)$ remain unchanged, while the update for $\mT$ becomes a constrained optimization problem:
\begin{eqnarray*}
&\min_{\mT, \vs} &\left\{-\sum_{i=1}^{n}\sum_{l\in [K]}\mathbbm{1}(\widetilde{Y}_i=l)\sum_{k\in [K]}\omega_{ik}^{(t)}\log T_{lk}\right\}\\
&\text{subject to } & \sum_{l\in [K]} T_{lk} = 1,~k\in[K]\\
&& T_{kk} - \xi_k -s_k^2 = 0,~k\in [K]
\end{eqnarray*}
where $\vs = \{s_k\}_{k=1}^K$ are slack variables. 
The Lagrangian is:
{\small{\[\gL_{\mT, \vs} = -\sum_{i=1}^{n}\sum_{l\in [K]}\mathbbm{1}(\widetilde{Y}_i=l)\sum_{k\in [K]}\omega_{ik}^{(t)}\log T_{lk} - \sum_{k\in [K]}\zeta_k \left(\sum_{l\in [K]} T_{lk} -1\right) - \sum_{k\in [K]}\kappa_k \left(T_{kk} - \xi_k-s_k^2\right).\]}}
The KKT conditions yield:
\begin{align}
0 =&~\frac{\partial \gL_{\mT,\vs}}{\partial \zeta_k} = \sum_{l\in [K]} T_{lk} -1,~k\in[K] \label{eq:lambda_grad}\\
0 =&~\frac{\partial \gL_{\mT,\vs}}{\partial \kappa_k} = T_{kk} -\xi_k - s_k^2,~k\in [K] \label{eq:eta_grad}\\
0 =&~\frac{\partial \gL_{\mT,\vs}}{\partial s_k} = 2\kappa_k s_k,~k\in [K] \label{eq:supplementary_slackness}\\
0 =&~\frac{\partial \gL_{\mT,\vs}}{\partial T_{lk}} = -\frac{\sum_{i=1}^{n}\mathbbm{1}(\widetilde{Y}_i=l)\omega_{ik}^{(t)}}{T_{lk}} - \zeta_k-\kappa_k \mathbbm{1}(k=l),~l\in[K],~k\in [K].\label{eq:T_grad}
\end{align}
We analyze two cases based on condition~\eqref{eq:supplementary_slackness}:
\begin{itemize}[leftmargin=*]
\item Case I: Unconstrained solution ($\kappa_k=0$). 
In this case,~\eqref{eq:T_grad} implies that 
$T_{lk} = -\sum_{i=1}^{n}\mathbbm{1}(\widetilde{Y}_i=l)\omega_{ik}^{(t)}/\zeta_k$.
Summing over $k$ and using \eqref{eq:lambda_grad}, we obtain
\[T_{lk} = \frac{\sum_{i=1}^{n}\mathbbm{1}(\widetilde{Y}_i=l)\omega_{ik}^{(t)}}{\sum_{i=1}^{n}\omega_{ik}^{(t)}}.\]
From~\eqref{eq:eta_grad}, we require that
\[T_{kk} = -\frac{\sum_{i=1}^{n}\mathbbm{1}(\widetilde{Y}_i=k)\omega_{ik}^{(t)}}{\sum_{i=1}^{n}\omega_{ik}^{(t)}} \geq \xi_k.\]
If this condition is not met, this case is infeasible.

\item Case II: Active constraint ($s_k=0$).
When $s_k=0$,~\eqref{eq:eta_grad} implies that $T_{kk}=\xi_k$ and~\eqref{eq:T_grad} implies that 
\[-\sum_{i=1}^{n}\mathbbm{1}(\widetilde{Y}_i=l)\omega_{ik}^{(t)} = \zeta_k T_{lk}  + \kappa_k \mathbbm{1}(k=l) T_{lk}\]
Summing over $l$ on both sides of this equation, we get
\[-\sum_{i=1}^{n}\omega_{ik}^{(t)} = \zeta_k  + \kappa_k T_{kk} = \zeta_k + \kappa_k \xi_k\]
Plugging this into~\eqref{eq:T_grad}, we obtain
\be
\label{eq:T_penalty}
T_{lk} = \frac{\sum_{i=1}^{n}\mathbbm{1}(\widetilde{Y}_i=l)\omega_{ik}^{(t)}}{\sum_{i=1}^{n}\omega_{ik}^{(t)} + \kappa_k \xi_k -\kappa_k \mathbbm{1}(k=l)}.
\ee
For the case when $l = k$, we have
\[\xi_k = T_{kk} = \frac{\sum_{i=1}^{n}\mathbbm{1}(\widetilde{Y}_i=k)\omega_{ik}^{(t)}}{\sum_{i=1}^{n}\omega_{ik}^{(t)} + \kappa_k (\xi_k -1)}\]
which implies
\[\kappa_k = \frac{\sum_{i=1}^{n}\{\mathbbm{1}(\widetilde{Y}_i=k)-\xi_k\}\omega_{ik}^{(t)}}{  \xi_k(\xi_k -1)}.\]
Substituting this expression for $\kappa_k$ into the previous equation gives the closed-form update for $\mT$.

\end{itemize}
To summarize, the complete update combines both cases:
\[
T_{lk}^{(t+1)}=
\begin{cases}
\frac{\sum_{i=1}^{n}\mathbbm{1}(\widetilde{Y}_i=l)\omega_{ik}^{(t)}}{\sum_{i=1}^{n}\omega_{ik}^{(t)}} & \text{if } \frac{\sum_{i=1}^{n}\mathbbm{1}(\widetilde{Y}_i=k)\omega_{ik}^{(t)}}{\sum_{i=1}^{n}\omega_{ik}^{(t)}}\geq \xi_k\\    
\frac{\sum_{i=1}^{n}\mathbbm{1}(\widetilde{Y}_i=l)\omega_{ik}^{(t)}}{\sum_{i=1}^{n}\omega_{ik}^{(t)} + \kappa_k^{(t+1)} \{\xi_k - \mathbbm{1}(k=l)\}} &\text{otherwise}
\end{cases}
\]
where 
\[\kappa_k^{(t+1)} = \frac{\sum_{i=1}^{n}\{\mathbbm{1}(\widetilde{Y}_i=k)-\xi_k\}\omega_{ik}^{(t)}}{\xi_k(\xi_k -1)}.\]

\subsection{Penalized EM algorithm}
\label{app:emp_penalization}
To enhance numerical stability and accelerate convergence, we introduce penalties on the diagonal elements $\{T_{kk}\}_{k=0}^{K-1}$ of the transition matrix. 
Let $\eta_k \geq 0$ denote prespecified penalty parameters. 
The penalized profile log empirical likelihood becomes:
\[
p\ell_{n}(\btheta) =\sum_{i=1}^{n}\sum_{l\in [K]}\mathbbm{1}(\widetilde{Y}_i=l)\log \sum_{k\in [K]} \left\{w_k T_{lk}\exp(\gamma_k + \beta_k^{\top}g(x))\right\} + \sum_{i=1}^{n}\log p_i(\btheta) + \sum_{k\in [K]}\eta_k \log T_{kk}.
\]
This penalty structure prevents the diagonal elements from approaching zero while maintaining the validity of the EM algorithm framework. 
Notably, the updates for $\vw$ and $(\bgamma, \bbeta)$ remain unchanged from the unpenalized version. 

For the transition matrix $\mT$, we solve the constrained optimization problem:
\begin{eqnarray*}
&\max_{\mT} &\left\{\sum_{i=1}^{n}\sum_{l\in [K]}\mathbbm{1}(\widetilde{Y}_i=l)\sum_{k\in [K]}\omega_{ik}^{(t)}\log T_{lk} + \sum_{k\in [K]}\gamma_k \log T_{kk}\right\}\\
&\text{subject to } & \sum_{l\in [K]} T_{lk} = 1,~k\in [K].
\end{eqnarray*}
The corresponding Lagrangian function is:
\[\gL_{\mT} = \sum_{i=1}^{n}\sum_{l\in [K]}\mathbbm{1}(\widetilde{Y}_i=l)\sum_{k\in [K]}\omega_{ik}^{(t)}\log T_{lk} - \sum_{k\in [K]}\zeta_k \left(\sum_{l\in [K]} T_{lk} -1\right) + \sum_{k\in [K]}\eta_k \log T_{kk}.\]
The optimality conditions yield the following system of equations:
\begin{align*}
0 =&~\frac{\partial \gL_{\mT}}{\partial \zeta_k} = \sum_{l\in [K]} T_{lk} -1,~k\in [K]  \\
0 =&~\frac{\partial \gL_{\mT}}{\partial T_{lk}} = \frac{\sum_{i=1}^{n}\mathbbm{1}(\widetilde{Y}_i=l)\omega_{ik}^{(t)}}{T_{lk}} - \zeta_k+\frac{\eta_k \mathbbm{1}(k=l)}{
T_{kk}},~k\in[K],~l\in [K].
\end{align*}
Solving for the equalities above, we then have
\[T_{lk}^{(t+1)} = \frac{\sum_{i=1}^{n}\mathbbm{1}(\widetilde{Y}_i=l)\omega_{ik}^{(t)} + \eta_k \mathbbm{1}(k=l)}{\sum_{i=1}^{n}\omega_{ik}^{(t)} + \eta_k}.\]

%% file: sections/appendices/more_exp.tex
\section{Experimental details and more results}
\subsection{Density ratio model under normal model}
\label{app:DRM}
In two of our simulation settings, we generate the training set from the following model:
\[X| Y=k\sim N(\mu_k^*, \Sigma^*).\]
Then the density ratios of $X| Y=k$ and $X| Y=0$ are:
\[
\begin{split}
\frac{dP_k^*}{dP_0^*}(x) =&~\frac{\exp(-(x-\mu_k^*)^{\top}(\Sigma^*)^{-1}(x-\mu_k^*)/2)}{\exp(-(x-\mu_0^*)^{\top}(\Sigma^*)^{-1}(x-\mu_0^*)/2)}\\
=&~\exp\left((\mu_k^*-\mu_0^*)^{\top}(\Sigma^*)^{-1}x - \frac{1}{2}\left\{(\mu_k^*)^{\top}(\Sigma^*)^{-1}\mu_k^* - (\mu_0^*)^{\top}(\Sigma^*)^{-1}\mu_0^*\right\}\right)
\end{split}
\]
Therefore, we have $g(x) = (1,x)^{\top}$ with 
\[\gamma_k^* = - \frac{1}{2}\left\{(\mu_k^*)^{\top}(\Sigma^*)^{-1}\mu_k^* - (\mu_0^*)^{\top}(\Sigma^*)^{-1}\mu_0^*\right\},~\beta_k^* = (\mu_k^*-\mu_0^*)^{\top}(\Sigma^*)^{-1}\]
and $(\gamma_k^*)^{\dagger} = \gamma_k^* + \log\left(\pi_k^*/\pi_0^*\right)$.
In the special case where $\Sigma^*= (\sigma^*)^2I_d$, the parameters in the DRM reduce to
\[\gamma_k^* = -\frac{(\mu_k^*)^{\top}\mu_k^*-(\mu_0^*)^{\top}\mu_0^*}{2(\sigma^*)^2},~\beta_k^* = \frac{\mu_k^*-\mu_0^*}{(\sigma^*)^2}.\]

When the covariance matrices are different, 
\[X| Y=k\sim N(\mu_k^*, (\sigma_k^*)^2 I_{d})\]
We then have the density ratios of $X| Y=k$ and $X| Y=1$ as:
\[
\begin{split}
\frac{dP_k^*}{dP_0^*}(x) =&~\frac{(\sigma_k^*)^{-d}\exp(-(x-\mu_k^*)^{\top}(\Sigma_k^*)^{-1}(x-\mu_k^*)/2)}{(\sigma_0^*)^{-d}\exp(-(x-\mu_0^*)^{\top}(\Sigma_0^*)^{-1}(x-\mu_0^*)/2)}\\
=&~\exp\left(\frac{x^{\top}\{(\Sigma_0^*)^{-1}-(\Sigma_k^*)^{-1}\}x}{2} + \{(\Sigma_k^*)^{-1}\mu_k^*-(\Sigma_0^*)^{-1}\mu_0^*\}^{\top}x\right) \\
&\times \exp\left(\frac{(\mu_0^*)^{\top}(\Sigma_0^*)^{-1}\mu_0^* - (\mu_k^*)^{\top}(\Sigma_k^*)^{-1}\mu_k^*}{2} + d\log \frac{\sigma_0^*}{\sigma_k^*}\right)
\end{split}
\]
Under the special case where $\Sigma_k^* = (\sigma_k^*)^2 I_{d}$, we have $g(x) = (x^{\top}, x^{\top}x)^{\top}$ with $\gamma_k^* = \|\mu_0^*\|^2/\{2(\sigma_0^*)^2\}-\|\mu_k^*\|^2/\{2(\sigma_k^*)^2\} + d\log(\sigma_0^*/\sigma_k^*)$ and $\beta_k^* = ((\mu_k^*/(\sigma_k^*)^2-\mu_0^*/(\sigma_0^*)^2)^{\top},((\sigma_0^*)^{-2}-(\sigma_k^*)^{-2})/2)^{\top}$.

\subsection{More experiment results}
\label{app:more_exp}

\subsubsection{Model fitting under noisy labeled data}
\label{app:model_fitting}
To show the effectiveness of our method in estimating the true conditional probability of $Y|X=x$ with noisy labeled data, we compare our EL-based estimator in~\eqref{eq:conditional_distribution_estimate} with two baselines:
\begin{itemize}
    \item The \textbf{Vanilla} estimator, which naively fits a multinomial logistic regression model to the noisy labeled training data;
    \item The \textbf{Oracle} estimator, which fits the same model to the clean labeled dataset.
\end{itemize}
The data generating approach, and the multinomial logistic regression model are the same as that described in Section~\ref{sec:simulation_multiclass}. 
We evaluate their performance by computing the average MSE of the estimated regression coefficients in~\eqref{eq:posterior} in each setting.
Figure~\ref{fig:parameter_estimation} summarizes the results across different sample sizes and noise levels $\eta$.
\begin{figure}[htbp]
\centering
\includegraphics[width=0.3\textwidth]{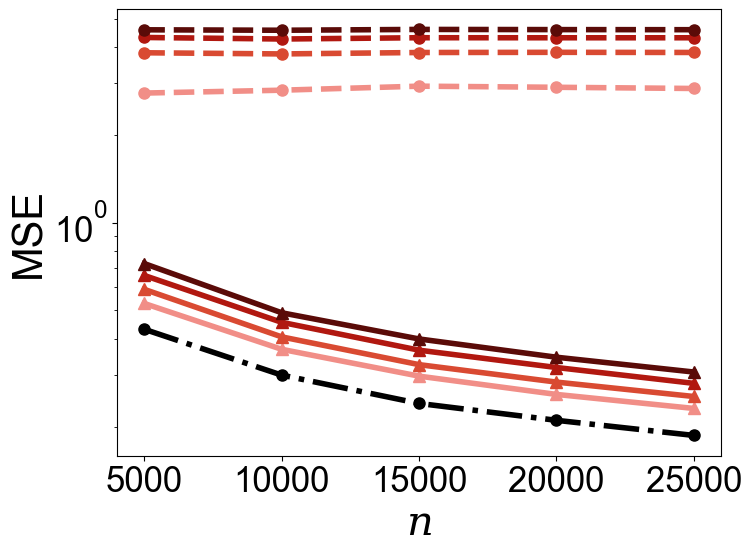}
\includegraphics[width=0.3\textwidth]{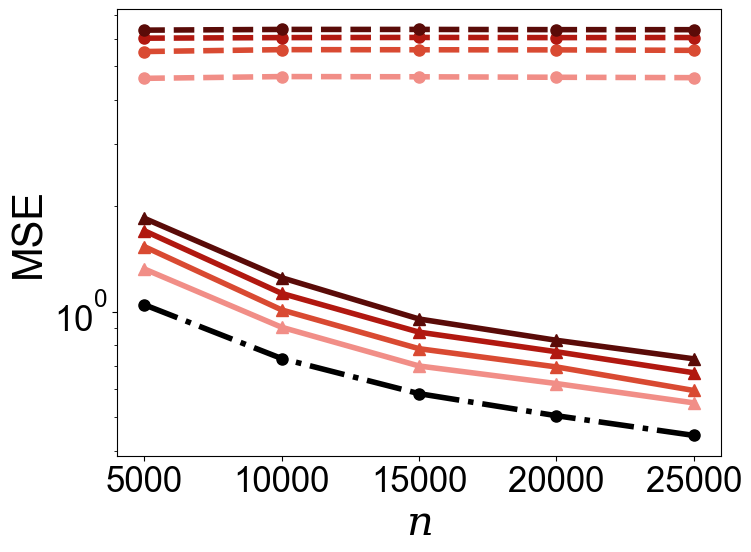}
\includegraphics[width=0.3\textwidth]{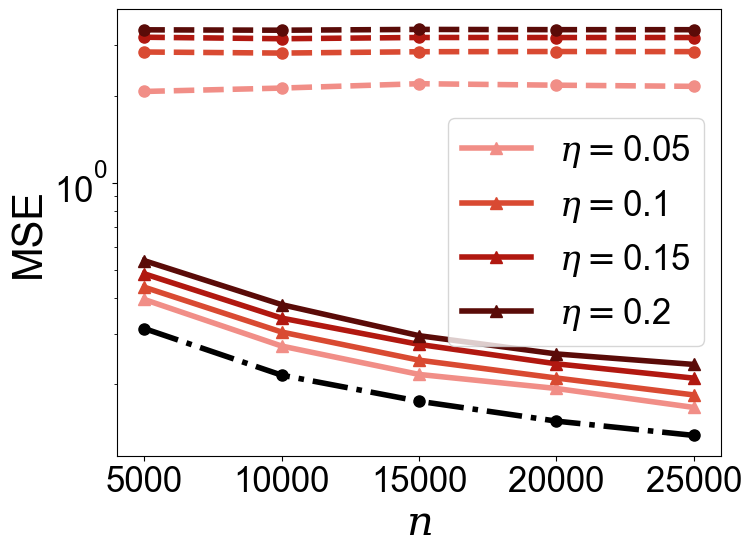}
\caption{Mean squared error (MSE) of regression coefficients for different estimators: Oracle (dash-dotted line), our method (solid line), and Vanilla (dashed line). 
Results are shown under varying noise levels (colors) and sample sizes $n$, with cases (a)–(c) displayed from left to right.}
\label{fig:parameter_estimation}
\end{figure}
The MSE of our estimator decreases with sample size $n$ at a rate of order $n^{-1}$, consistent with theory.
As the noise level $\eta$ increases, both our method and the vanilla estimator exhibit larger MSE.
However, our estimator remains close to the oracle, while the vanilla estimator deteriorates significantly.
This highlights that the vanilla estimator is not consistent under label noise, as its error does not decrease with $n$.

\subsubsection{Violin plots for case (b) and (c)}
\label{app:violin_plots}
The violin plots of the class-specific errors and the objective value under different noise levels $\eta$ for cases (b) and (c) are shown in Figure~\ref{fig:caseb} and Figure~\ref{fig:casec}, respectively. 
In both cases, the mean class-specific error of our method is almost as close as that of the oracle, with only a slightly larger variance in some settings. 
The gap between our method and the oracle becomes smaller as $n$ increases. 
In contrast, the vanilla approach is either too conservative or too aggressive, leading to unreliable class-specific error control and a substantially larger objective value.

\begin{figure}[!ht]
    \centering
    \includegraphics[width=0.9\linewidth]{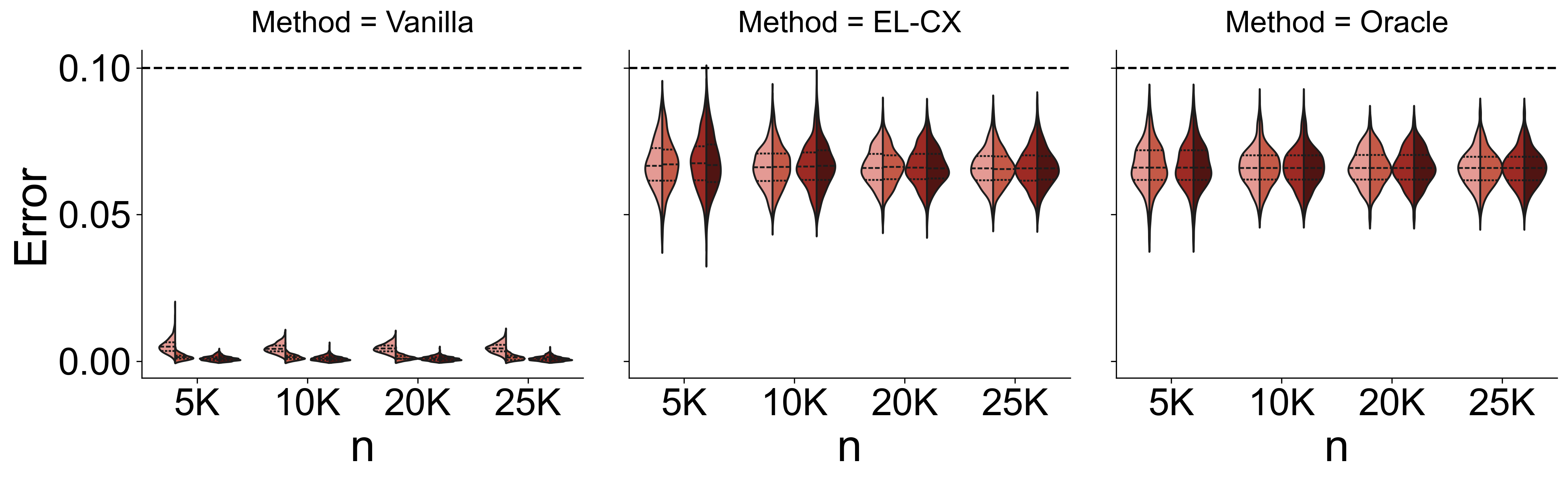}
    \includegraphics[width=0.9\linewidth]{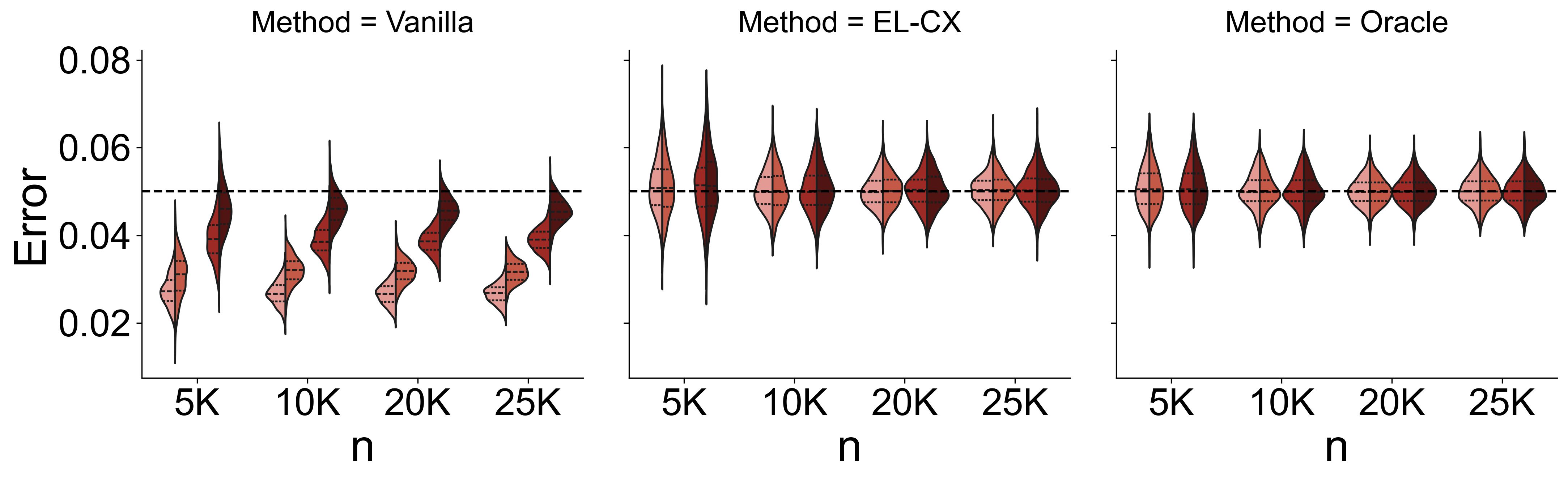}
    \includegraphics[width=0.9\linewidth]{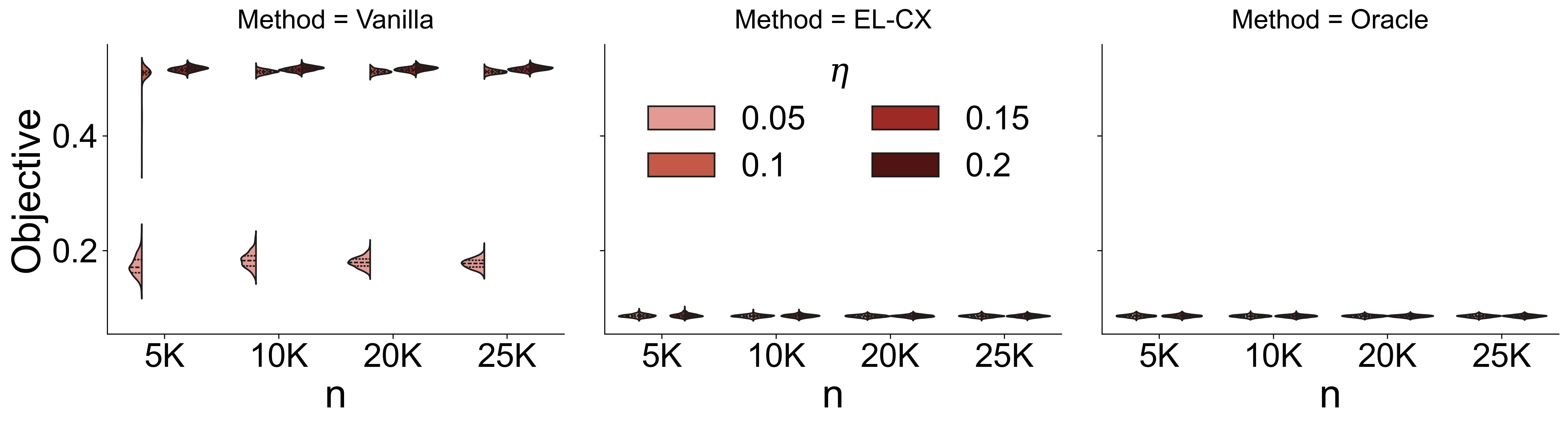}
    \caption{Violin plots (top to bottom) show the misclassification errors for Class 0, Class 3, and the objective function value under Case (b), computed over $R = 500$ repetitions. Results compare different methods across varying sample sizes ($n$) and noise levels ($\eta$). The dashed black lines in the first two rows mark the target misclassification errors.}
    \label{fig:caseb}
\end{figure}

\begin{figure}[!ht]
    \centering
    \includegraphics[width=0.9\linewidth]{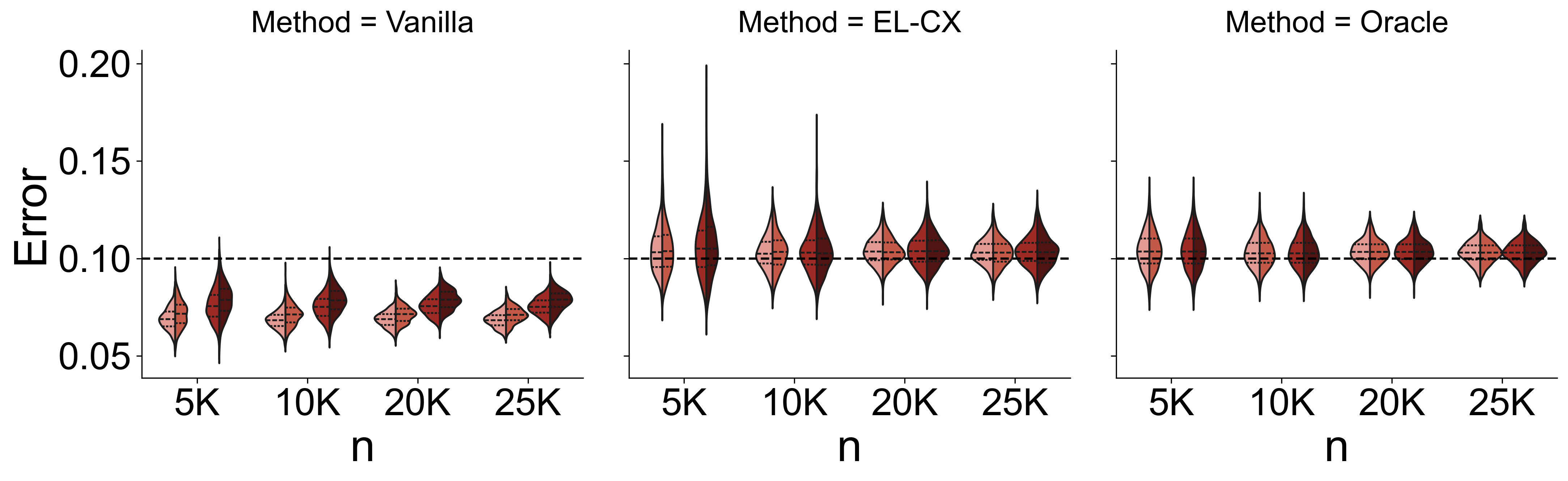}
    \includegraphics[width=0.9\linewidth]{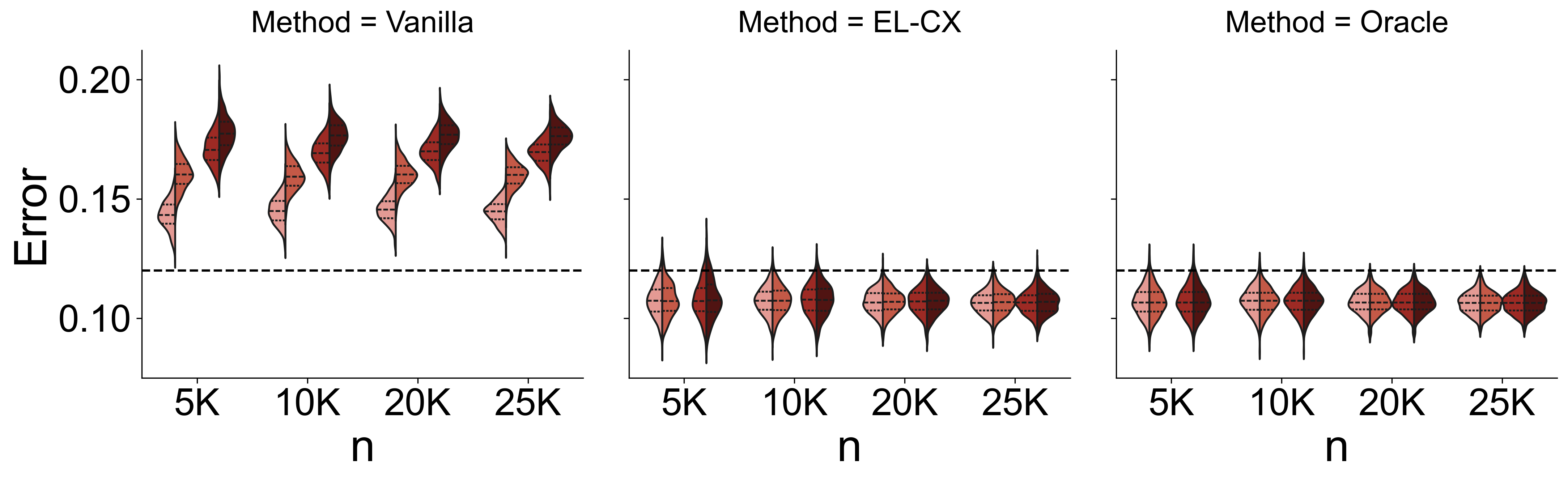}
    \includegraphics[width=0.9\linewidth]{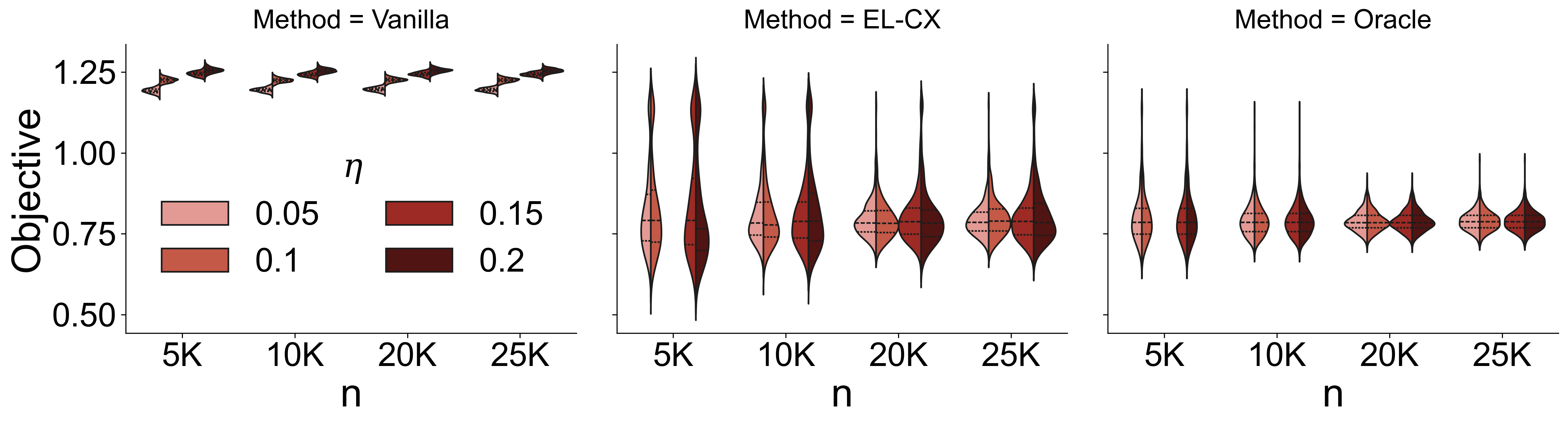}
    \caption{Violin plots (top to bottom) show the misclassification errors for Class 0, Class 1, and the objective function value under Case (c), computed over $R = 500$ repetitions. Results compare different methods across varying sample sizes ($n$) and noise levels ($\eta$). The dashed black lines in the first two rows mark the target misclassification errors.}
    \label{fig:casec}
\end{figure}

\subsubsection{NP umbrella type control}
\label{app:npumbrella}
In this section, we discuss how to integrate our EL-based estimator under noisy labels with the framework of~\citet{tong2018neyman} to achieve error control in binary classification.

The NP umbrella algorithm, proposed by~\citet{tong2018neyman}, provides a flexible procedure that accommodates various score functions from binary classifiers. 
It produces a classifier that controls the type I error at the target level with high probability while minimizing the type II error. 
The procedure is as follows:
\begin{enumerate}
\item Split the class-$0$ data into two disjoint subsets: $S_0^{\text{train}}$ and $ S_0^{\text{cal}}$.

\item Train a base scoring function $f: \gX \to \sR$ using $S_0^{\text{train}}$ together with all class-$1$ samples, where larger values of $f(x)$ indicate a higher likelihood that $x$ belongs to class $1$.

\item Evaluate $f$ on $S_0^{\text{cal}}$, obtaining scores $\{T_j = f(X_j): X_j \in S_0^{\text{cal}}\}$. 

\item Sort these scores in increasing order: $T_{(1)} \leq T_{(2)} \leq \cdots \leq T_{(m)}$, where $m = |S_0^{\text{cal}}|$.

\item Define the NP umbrella classifier by choosing an index $k^*$ and setting
\[\phi(x) = \mathbb{I}\{ f(x) > T_{(k^*)} \}.\]
Since $S_0^{\text{cal}}$ contains only class-$0$ samples, any misclassification corresponds to a type I error. 
Thus, the number of misclassified calibration samples follows a $\text{Binomial}(m, \alpha)$ distribution. 
The index $k^*$ is chosen as the smallest integer satisfying
\[
\sP\big( \text{Binomial}(m, \alpha) \geq  k^*  \big)= \delta,
\]
where $\delta$ is a user-specified tolerance level (e.g., $\delta = 0.05$).
By construction, with probability at least $1 - \delta$ over the calibration set, the classifier $\phi$ satisfies $P_0^*(\phi \neq 0) \leq \alpha$.
\end{enumerate}

Although effective for controlling type I error,~\citet{yao2023asymmetric} showed that the NP umbrella algorithm becomes too conservative when the training data are contaminated, i.e., when labels are corrupted as in Assumption~\ref{assumption:instance_independent_noise}. 
This is because ignoring label noise in the NP umbrella algorithm only guarantees $\widetilde{P}_0^*(\phi \neq 0)\leq \alpha$ instead of $P_0^*(\phi \neq 0) \leq \alpha$.
To address this issue, they extended the NP umbrella algorithm to handle the noisy-label setting as follows:
\begin{enumerate}
    \item Let $\widetilde{S}_0$ and $\widetilde{S}_1$ respectively be the corrupted class $0$ and $1$ training set.
    \item Partition $\widetilde{S}_0$ into three random, disjoint, nonempty subsets: $\widetilde{S}_0^{\text{train}}, \widetilde{S}_0^{\text{est}},$ and $\widetilde{S}_0^{\text{cal}}$.
    Similarly, partition $\widetilde{S}_1$ into two random, disjoint, nonempty subsets: $\widetilde{S}_1^{\text{train}}$ and $\widetilde{S}_1^{\text{est}}$.

    \item  Train a base scoring function $f: \gX \to \sR$ using $\widetilde{S}_0^{\text{train}}\cap \widetilde{S}_1^{\text{train}}$, where larger values of $f(x)$ indicate a higher likelihood that $x$ belongs to corrupted class $1$.
    
    \item Evaluate $f$ on $\widetilde{S}_0^{\text{cal}}$, obtaining scores $\{T_j = f(X_j): X_j \in \widetilde{S}_0^{\text{cal}}\}$. 
    
    \item Sort these scores in increasing order: $T_{(1)} \leq T_{(2)} \leq \cdots \leq T_{(m)}$, where $m = |\widetilde{S}_0^{\text{cal}}|$. 
    These serve as the candidate thresholds, just as in the original NP umbrella algorithm. 
    
    \item Define the NP umbrella classifier by choosing an index $k^*$ and setting
    \[\phi(x) = \mathbb{I}\{ f(x) > T_{(k^*)} \}.\]
    To account for label noise, the noise-adjusted NP umbrella algorithm (with known corruption levels) selects $k^*$ as  
    \[k^* =  \min\{k\in\{1, \ldots, n\}:\alpha_{k,\delta} - \widehat{D}^+(T_{(k)}) \leq \alpha \}\,\]
    where $\alpha_{k,\delta}$ satisfies $\sP\big(\text{Binomial}(m, \alpha_{k,\delta}) \geq  k^*  \big)= \delta$,
    $\widehat{D}^+(\cdot) =\widehat D(\cdot)\vee 0:= \max(\widehat D(\cdot), 0)$, and 
    \be
    \label{eq:typeIerrordiff}
    \widehat D(\cdot ) = \frac{1-m_0^*}{m_0^*-m_1^*}\left(\widetilde{P}_0^{f}(\cdot) - \widetilde{P}_1^{f}(\cdot)\right).
    \ee
    Here, $\widehat{D}(t)$ estimates the gap between type I errors computed on clean versus corrupted data, with $\widetilde{P}_0^{f}(t)$ and $\widetilde{P}_1^{f}(t)$ denoting empirical estimates of $\sP(f(X)\leq t \mid \widetilde{Y}=0)$ and $\sP(f(X)\leq t \mid \widetilde{Y}=1)$, based on $\widetilde{S}_0^{\text{est}}$ and $\widetilde{S}_1^{\text{est}}$, respectively.  
\end{enumerate}
\begin{table}[!ht]
\centering
\caption{Type I error violation rate and II error for binary classification based on $500$ repetitions. 
NPC$^{*}$ represents the oracle method with known $m_0^*$ and $m_1^*$. }
\label{tab:binary_npc_comparison}
\resizebox{\textwidth}{!}{
\begin{tabular}{ccccc@{\hspace{2em}}ccc@{\hspace{2em}}ccc@{\hspace{2em}}ccc}
\toprule
\multirow{3}{*}{Case} &\multirow{3}{*}{$n$} & \multicolumn{3}{c}{$m_0^*=0.95,m_1^*=0.05$} & \multicolumn{3}{c}{$m_0^*=0.95,m_1^*=0.05$} & \multicolumn{3}{c}{$m_0^*=0.9,m_1^*=0.1$} & \multicolumn{3}{c}{$m_0^*=0.9,m_1^*=0.1$} \\
&& \multicolumn{3}{c}{$\alpha=0.05,\delta=0.05$} & \multicolumn{3}{c}{$\alpha=0.1,\delta=0.1$} & \multicolumn{3}{c}{$\alpha=0.05,\delta=0.05$} & \multicolumn{3}{c}{$\alpha=0.1,\delta=0.1$} \\ 
\cmidrule(lr{1.5em}){3-5}
\cmidrule(lr{1.5em}){6-8}
\cmidrule(lr{1.5em}){9-11}
\cmidrule(lr){12-14}
&&NPC$^{*}$ & NPC & NPC+ & NPC$^{*}$ & NPC & NPC+ & NPC$^{*}$ & NPC & NPC+ & NPC$^{*}$ & NPC & NPC+\\ 
\midrule
&\multicolumn{12}{c}{Type I error violation rate}\\
\midrule
\multirow{3}{*}{A} & 1000 & 0.038 & 0.000 & 0.048 & 0.080 & 0.000 & 0.100 & 0.044 & 0.000 & 0.044 & 0.090 & 0.000 & 0.132 \\
                   & 2000 & 0.042 & 0.000 & 0.042 & 0.094 & 0.000 & 0.122 & 0.048 & 0.000 & 0.042 & 0.110 & 0.000 & 0.116 \\
                   & 5000 & 0.074 & 0.000 & 0.058 & 0.106 & 0.000 & 0.100 & 0.054 & 0.000 & 0.058 & 0.106 & 0.000 & 0.116 \\
\midrule
\multirow{3}{*}{B} & 1000 & 0.036 & 0.000 & 0.016 & 0.086 & 0.000 & 0.044 & 0.046 & 0.000 & 0.006 & 0.102 & 0.000 & 0.044 \\
                   & 2000 & 0.052 & 0.000 & 0.008 & 0.074 & 0.000 & 0.040 & 0.048 & 0.000 & 0.006 & 0.090 & 0.000 & 0.028 \\
                   & 5000 & 0.060 & 0.000 & 0.008 & 0.118 & 0.000 & 0.042 & 0.046 & 0.000 & 0.002 & 0.104 & 0.000 & 0.016 \\
\midrule
\multirow{3}{*}{C} & 1000 & 0.046 & 0.000 & 0.132 & 0.106 & 0.002 & 0.248 & 0.054 & 0.000 & 0.124 & 0.108 & 0.000 & 0.246 \\
                   & 2000 & 0.040 & 0.000 & 0.204 & 0.100 & 0.000 & 0.306 & 0.052 & 0.000 & 0.182 & 0.084 & 0.000 & 0.290 \\
                   & 5000 & 0.058 & 0.000 & 0.394 & 0.084 & 0.000 & 0.468 & 0.056 & 0.000 & 0.378 & 0.112 & 0.000 & 0.456 \\
\midrule
&\multicolumn{12}{c}{Type II error}\\
\midrule
\multirow{3}{*}{A} & 1000 & 0.398 & 0.567 & 0.394 & 0.218 & 0.305 & 0.218 & 0.427 & 0.730 & 0.422 & 0.227 & 0.427 & 0.226 \\
                   & 2000 & 0.352 & 0.515 & 0.352 & 0.201 & 0.276 & 0.201 & 0.368 & 0.674 & 0.368 & 0.205 & 0.388 & 0.206 \\
                   & 5000 & 0.319 & 0.470 & 0.320 & 0.188 & 0.260 & 0.188 & 0.326 & 0.630 & 0.327 & 0.190 & 0.363 & 0.190 \\
\midrule
\multirow{3}{*}{B} & 1000 & 0.180 & 0.418 & 0.189 & 0.108 & 0.158 & 0.114 & 0.205 & 0.700 & 0.218 & 0.113 & 0.274 & 0.121 \\
                   & 2000 & 0.164 & 0.310 & 0.172 & 0.100 & 0.145 & 0.105 & 0.172 & 0.637 & 0.184 & 0.103 & 0.221 & 0.111 \\
                   & 5000 & 0.154 & 0.240 & 0.162 & 0.092 & 0.138 & 0.098 & 0.157 & 0.584 & 0.169 & 0.094 & 0.203 & 0.102 \\
\midrule
\multirow{3}{*}{C} & 1000 & 0.461 & 0.629 & 0.411 & 0.247 & 0.341 & 0.224 & 0.487 & 0.755 & 0.425 & 0.255 & 0.462 & 0.230 \\
                   & 2000 & 0.406 & 0.571 & 0.362 & 0.227 & 0.310 & 0.208 & 0.421 & 0.704 & 0.373 & 0.233 & 0.422 & 0.212 \\
                   & 5000 & 0.369 & 0.525 & 0.328 & 0.213 & 0.291 & 0.194 & 0.378 & 0.666 & 0.333 & 0.214 & 0.399 & 0.195 \\
\bottomrule
\end{tabular}}
\end{table}

This approach requires knowledge of the parameters $m_0^* = \sP(Y=0|\widetilde{Y}=0)$ and $m_1^* = \sP(Y=0 |\widetilde{Y}=1)$ in the transition matrix; we refer to this method as \textbf{NPC$^*$}.
When these quantities are unknown,~\citet{yao2023asymmetric} further extended the method to the case where only a lower bound on $m_1^*$ and an upper bound on $m_0^*$ are available. 
In this approach, they replace $m_0^*$ and $m_1^*$ with their bounds in the last step.
We refer to this version as \textbf{NPC}.

As shown in Table~\ref{tab:binary} in the main paper, relaxing the assumption of knowing $m_0^*$ and $m_1^*$ can lead to performance as poor as the vanilla method that ignores label noise. 
Thus, accurate estimates of $m_0^*$ and $m_1^*$ are crucial. 
Our proposed method natually provides an effective way to obtain these estimates. 
Specifically, we use the full dataset to compute the MELE estimators for $m_0^*$ and $m_1^*$ and then replace the unknown values with these estimates in the final step. ]
We refer to this version as \textbf{NPC+}.

We then compare NPC$^*$, NPC, and NPC+ following the same experiment setting in Section~\ref{sec:binary_exp}.
The results are reported in Table~\ref{tab:binary_npc_comparison}.
We report the proportion of repetitions that violate the type I error and the corresponding estimated type II error across $500$ repetitions.
The results show that NPC is more conservative, as it never violates the type I error but exhibits substantially larger type II errors in all cases.
Our NPC+ method performs comparably to NPC$^*$ under cases A and B, where the models are correctly specified.
Under model misspecification in case C, however, NPC+ has a higher chance of violating the type I error compared to NPC$^*$.

\subsubsection{Real data experiment: dry bean dataset}
\label{app:dry_bean}
The Dry Bean dataset contains $13,611$ instances of dry beans along with their corresponding types~\citep{koklu2020multiclass}. 
Each instance is described by $16$ features capturing the beans' physical properties, derived from shape, size, and color measurements. 
Specifically, the features include $11$ geometric, $1$ shape-related, and $2$ color-related attributes. 
The dataset contains seven types of dry beans with the following sample sizes: Barbunya ($1,322$), Bombay ($522$), Cali ($1,630$), Dermosan ($3,546$), Horoz ($1,928$), Seker ($2,027$), and Sira ($2,636$). 
For convenience, we recode the types into classes $0$ through $6$. 
The prediction task is to classify the type of dry bean based on its $16$ physical features. 
The dataset is publicly available from the UCI Machine Learning Repository at this~\href{https://archive.ics.uci.edu/ml/datasets/Dry+Bean+Dataset}{link}.

Since the features are on different scales, we standardize each feature by subtracting its sample mean and scaling to unit variance. 
We then randomly split the data into $80$\% training and $20$\% test sets while preserving class proportions. 
Noisy labels for the training set are generated according to the contamination model described in Section~\ref{sec:simulation_multiclass}, and this procedure is repeated $100$ times.

We consider the following NPMC problem from~\citet{tian2024neyman}:
\begin{equation*}
\begin{split}
\min_{\phi} &\quad \frac{1}{4}\left[P_2^*(\{\phi(X)\neq 2\}) + P_4^*(\{\phi(X)\neq 4\}) + P_5^*(\{\phi(X)\neq 5\}) + P_6^*(\{\phi(X)\neq 6\})\right] \\
\text{s.t.} &\quad P_0^*(\{\phi(X)\neq 0\}) \leq 0.05,  \quad P_1^*(\{\phi(X)\neq 1\}) \leq 0.01, \quad P_3^*(\{\phi(X)\neq 3\}) \leq 0.03.
\end{split}
\end{equation*}

We evaluate our proposed EL-CX method along with other baselines described in Section~\ref{sec:simulation_multiclass}, using multinomial logistic regression as the base classifier with the same feature mapping $g(x)$ and $L_2$ regularization. 
Performance, based on 100 repetitions, is summarized in Table~\ref{tab:drybean}.
\begin{table}[!htbp]
\centering
\caption{Mean excessive risk across target classes and objective function values, grouped by noise level $\eta$ and method.}
\label{tab:drybean}
\resizebox{\textwidth}{!}{
\begin{tabular}{@{}c@{\hspace{2em}}ccc>{\columncolor{RoyalBlue!5}}c@{\hspace{2em}}ccc>{\columncolor{RoyalBlue!5}}c@{\hspace{2em}}ccc>{\columncolor{RoyalBlue!5}}c@{\hspace{2em}}ccc>{\columncolor{RoyalBlue!5}}c@{}}
\toprule
\multirow{2}{*}{Method} & \multicolumn{4}{c}{$\eta=0.05$}&\multicolumn{4}{c}{$\eta=0.10$}&\multicolumn{4}{c}{$\eta=0.15$}&\multicolumn{4}{c}{$\eta=0.20$}\\ 
\cmidrule(lr{2em}){2-5}
\cmidrule(lr{2em}){6-9}
\cmidrule(lr{2em}){10-13}
\cmidrule(lr{-0.5em}){14-17}
&Class 0 & Class 1 & Class 3 & Obj.&Class 0 & Class 1 & Class 3 & Obj.&Class 0 & Class 1 & Class 3 & Obj&Class 0 & Class 1 & Class 3 & Obj\\
\midrule
Na\"ive&0.059&\negred{-0.001}&0.083&0.070&0.071&0.002&0.094&0.073&0.080&0.003&0.100&0.076&0.087&0.007&0.105&0.078\\
Vanilla&\negred{-0.046}&\negred{-0.009}&\negred{-0.024}&0.961&\negred{-0.046}&\negred{-0.010}&\negred{-0.022}&0.979&\negred{-0.046}&\negred{-0.010}&\negred{-0.022}&0.980&\negred{-0.046}&\negred{-0.010}&\negred{-0.021}&0.980\\
\textbf{Ours}&\negred{-0.046}&\negred{-0.007}&\negred{-0.011}&0.322&\negred{-0.045}&\negred{-0.008}&\negred{-0.013}&0.378&\negred{-0.046}&\negred{-0.009}&\negred{-0.014}&0.432&\negred{-0.045}&\negred{-0.009}&\negred{-0.015}&0.469\\
Oracle&\negred{-0.045}&\negred{-0.002}&\negred{-0.010}&0.285&\negred{-0.045}&\negred{-0.002}&\negred{-0.010}&0.285&\negred{-0.045}&\negred{-0.002}&\negred{-0.010}&0.285&\negred{-0.045}&\negred{-0.002}&\negred{-0.010}&0.285\\
\bottomrule
\end{tabular}}
\end{table}
Consistent with the results on the Landsat Satellite dataset, our method is the closest to the oracle in all scenarios. 
The Na\"ive method fails to control errors by a wide margin, and its class-specific error violations worsen as $\eta$ increases. 
Although the Vanilla method maintains error control, it is more conservative than ours for class $3$. 
Consequently, the Vanilla approach results in a substantially larger objective value compared to our method.

%% file: sections/appendices/el.tex
\subsection{Proof of proposition~\ref{proposition:x_conditional_relationships} }
\label{app:proposition_proof}
\begin{proof}[Proof of proposition~\ref{proposition:x_conditional_relationships}]
We first show~\eqref{eq:marginal_link}, recall that $\widetilde{w}_l^* = \sP(\widetilde{Y}=l)$ and $ w_k^* = \sP(Y=k)$, $M_{lk}^* = \sP(Y=k| \widetilde{Y}=l)$ and $T_{lk}^* = \sP(\widetilde{Y}=l| Y=k)$ for all $k,l\in [K]$.
Starting with $ w_k^*$, we have:
\[
w_k^* = \sum_{l}\sP(Y=k,\widetilde{Y}=l) = \sum_{l}\sP(Y=k| \widetilde{Y}=l)\sP(\widetilde{Y}=l) = \sum_{l\in [K]}M_{lk}^*\widetilde{w}_l^*.
\]
Similarly, for $\widetilde{w}_l^*$, we obtain:
\[
\widetilde{w}_l^*= \sum_{k}\sP(Y=k,\widetilde{Y}=l) = \sum_{k}\sP(\widetilde{Y}=l| Y=k)\sP(Y=k) = \sum_{k\in[K]}T_{lk}^* w_k^*,
\]
which completes the proof of~\eqref{eq:marginal_link}.

Next, we prove~\eqref{eq:conditional_distribution_link}.
Let $A$ be an arbitrary event. 
Using the law of total probability, we write: 
\[
\widetilde{P}_l^*(A)=\sum_{k}\sP(X\in A, Y=k| \widetilde{Y}=l)=\sum_{k}\sP(X\in A| Y=k,\widetilde{Y}=l)\sP(Y=k| \widetilde{Y}=l)\]
By Assumption~\ref{assumption:instance_independent_noise}, we can make use of the independence of $X$ and $\widetilde{Y}$ conditioned on $Y$, so this becomes:
\[
\widetilde{P}_l^*(A)=\sum_{k}\sP(X\in A| Y=k)\sP(Y=k| \widetilde{Y}=l)=\sum_{k\in [K]}M_{lk}^*P_k^*(A),
\]
Similarly,
\[
\begin{split}
P_k^*(A) =&~\sP(X\in A| Y=k) = \sum_{l}\sP(X\in A, \widetilde{Y}=l| Y=k)\\
=&~\sum_{l}\sP(X\in A| Y=k,\widetilde{Y}=l)\sP(\widetilde{Y}=l| Y=k)=\sum_{l\in [K]}T_{lk}^*\widetilde{P}_l^*(A),
\end{split}
\]
which completes the proof of~\eqref{eq:conditional_distribution_link}. 

Finally, we prove~\eqref{eq:posterior_link}.
We begin by using the total law of probability,
\[
\sP(\widetilde{Y}=l| X=x) =\sum_{k}\sP(\widetilde{Y}=l,Y=k| X=x).\]
Based on chain rule and Assumption~\ref{assumption:instance_independent_noise}, this simplifies into
\[
\begin{split}
\widetilde{\pi}_{l}^*(x)=\sP(\widetilde{Y}=l| X=x)=&~\sum_{k} \sP(\widetilde{Y}=l,| Y=k,X=x)\sP(Y=k| X=x) \\
=&~\sum_{k} T_{lk}^*\sP(Y=k| X=x) = \sum_{k} T_{lk}^*\pi^*_k(x),    
\end{split}
\]
which completes the proof.
\end{proof}

\subsection{Derivation of the profile log empirical likelihood}
\label{app:profile_loglik}
\begin{proof}
The empirical log-likelihood function as a function of $\vp$ becomes
\[
\ell_{n}(\vp) = \sum_{i=1}^{n} \log p_{i}  + \text{constant}
\]
where the constant depends only on other parameters such as $\{\vw, \bgamma, \bbeta, \mT\}$ and does not depend on $\vp$.
We now maximize the empirical log-likelihood function
with respect to $\vp$ under the constraint~\eqref{eq:constraints} using the Lagrange multiplier
method. 
Let 
\[
\gL=\sum_{i}\log p_{i} - \nu \left(\sum_{i=1}^{n} p_i - 1\right)-n\sum_{k=1}^{K-1}\nu_k\left[\sum_{i}p_{i}\left\{\exp(\gamma_k +\beta_k^{\top}g(X_i))-1\right\}\right].
\]
Setting 
\[
\frac{\partial \gL}{\partial p_{i}} =\frac{1}{p_{i}}-\nu - n\sum_{k=1}^{K-1}\nu_k\left\{\exp(\gamma_k +\beta_k^{\top}g(X_i))-1\right\}=0
\]
gives
\[
p_{i}=\left[\nu + n\sum_{k=1}^{K-1}\nu_k \left\{\exp(\gamma_k +\beta_k^{\top}g(X_i))-1\right\}\right]^{-1}.
\]
Summing $p_{i}\frac{\partial \gL}{\partial p_{i}}$ over $i$, we find that $\nu = n$.
Hence,
\[p_{i}=n^{-1}\left[1 + \sum_{k=1}^{K-1}\nu_k \left\{\exp(\gamma_k +\beta_k^{\top}g(X_i))-1\right\}\right]^{-1}\]
where $\nu_1,\ldots,\nu_{K-1}$ are the solutions to
\[\sum_{i=1}^{n} \frac{\exp(\gamma_k +\beta_k^{\top}g(X_i)) -1}{1+\sum_{k'=1}^{K-1}
\nu_{k'} \bigl\{\exp(\gamma_{k'} +\beta_{k'}^{\top}g(X_i))- 1 \bigr\}} = 0.\]
The proof is completed.    
\end{proof}

\subsection{Asymptotic normality of EL based parameter estimation}
\label{app:asymptotic_normality}
\subsubsection{Notation and preparation}
Let $\bnu = (\nu_1,\ldots, \nu_{K-1})^{\top}$ and $\btheta = \{\vw,\bgamma,\bbeta, \mT\}$ where $\vw=(w_1,\ldots,w_{K-1})^{\top}$ and $\mT = (T_{10},\ldots T_{K-1,0},T_{11},\ldots, T_{K-1,1},\ldots, T_{1,K-1},\ldots,T_{K-1,K-1})^{\top}$.
For simplicity of notation, we view $\btheta$ as vectorized parameters.
We have $w_0 = 1-\sum_{k=1}^{K-1}w_k$ and $T_{0k} = 1-\sum_{l=1}^{K-1}T_{lk}$.
Define
\begin{equation}
\label{eq:obj_h}
\begin{split}
h(\bnu,\btheta) =&~\sum_{i,l}\mathbbm{1}(\widetilde{Y}_i=l)\log\left[\sum_{k} \{w_k T_{lk}\exp(\gamma_k + \beta_k^{\top}g(X_i))\}\right]\\
&-\sum_{i=1}^{n}\log \left[1+\sum_{k=1}^{K-1}\nu_k \left\{\exp(\gamma_k + \beta_k^{\top}g(X_i))-1\right\}\right].
\end{split}    
\end{equation}
For the convenience of presentation, we also define auxiliary functions
\begin{align*}
&\xi(x;\gamma,\beta)=\exp(\gamma + \beta^{\top}g(x)),\\
&\eta_{l}(x;\btheta) = \sum_{k=0}^{K-1}w_k T_{lk}\xi(x;\gamma_k, \beta_k),\\
&\delta(x;\bnu,\btheta) = 1+ \sum_{k=1}^{K-1}\nu_k \left[\exp(\gamma_k + \beta_k^{\top}g(x))-1\right].
\end{align*}

\noindent
\textbf{Gradient of $h$}.
For $j=1,\ldots,K-1$, the gradients of the function $h(\boldsymbol{\nu}, \boldsymbol{\theta})$ are:
\[
\begin{split}
\frac{\partial h}{\partial \nu_j} =&~-\sum_{i=1}^{n}\frac{\xi(X_i;\gamma_j,\beta_j)-1}{\delta(X_i;\bnu,\btheta)}\\
\frac{\partial h}{\partial w_j} =&~\sum_{i=1}^n \sum_{l=0}^{K-1} \mathbbm{1}(\widetilde{Y}_i = l) \frac{T_{lj}\xi(X_i;\gamma_j,\beta_j)-T_{l0}}{\eta_{l}(X_i;\btheta)}\\
\frac{\partial h}{\partial \gamma_j} =&~\sum_{i=1}^n \sum_{l=0}^{K-1} \mathbbm{1}(\widetilde{Y}_i = l) \frac{w_j T_{lj}\xi(X_i;\gamma_j,\beta_j)}{\eta_{l}(X_i;\btheta)}-\sum_{i=1}^{n}\frac{\nu_j \xi(X_i;\gamma_j,\beta_j)}{\delta(X_i;\bnu,\btheta)}\\
\frac{\partial h}{\partial \beta_j} =&~\sum_{i=1}^n \sum_{l=0}^{K-1} \mathbbm{1}(\widetilde{Y}_i = l) \frac{w_j T_{lj}\xi(X_i;\gamma_j,\beta_j)g(X_i)}{\eta_{l}(X_i;\btheta)}-\sum_{i=1}^{n}\frac{\nu_j\xi(X_i;\gamma_j,\beta_j)g(X_i)}{\delta(X_i;\bnu,\btheta)}.
\end{split}
\]
For $l=1,\ldots, K-1$ and $k\in [K]$, we have
\[
\frac{\partial h}{\partial T_{lk}} =\sum_{i=1}^n \mathbbm{1}(\widetilde{Y}_i = l) \frac{w_k \xi(X_i;\gamma_k,\beta_k)}{\eta_{l}(X_i;\btheta)}-\sum_{i=1}^n \mathbbm{1}(\widetilde{Y}_i = 0) \frac{w_k \xi(X_i;\gamma_k,\beta_k)}{\eta_{0}(X_i;\btheta)}.
\]

\subsubsection{Expectation of score function at truth}
In this section, we present the proof of our main technical lemma.
\begin{lemma}
\label{lemma:score_function_exp_var}
Let $\bnu^* = (w_1^*,\ldots, w_{K-1}^*)^{\top}$ and $\btheta^*$ be the true value of $\btheta$.
Let 
\[
\mS_{n} = (\mS_{n1}^{\top},\mS_{n2}^{\top},\mS_{n3}^{\top},\mS_{n4}^{\top},\mS_{n5}^{\top})^{\top}
\]
where $\mS_{n1} = \frac{\partial h(\bnu^*,\btheta^*)}{\partial \bnu}$, $\mS_{n2} = \frac{\partial h(\bnu^*,\btheta^*)}{\partial \bgamma}$, $\mS_{n3} = \frac{\partial h(\bnu^*,\btheta^*)}{\partial \bbeta}$, $\mS_{n4} = \frac{\partial h(\bnu^*,\btheta^*)}{\partial \widetilde{\vw}}$ where $\widetilde{\vw}=(\widetilde{w}_1,\ldots, \widetilde{w}_{K-1})^{\top}$, and $\mS_{n5} = \frac{\partial h(\bnu^*,\btheta^*)}{\partial \mT}$. Then $\mS_n$ satisfies $\sE(\mS_n)=0$.
\end{lemma}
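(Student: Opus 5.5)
The plan is to compute the expectation of each block of $\mS_n$ term by term. Since $\mS_n$ is a sum of $n$ i.i.d. summands, it suffices to show that each summand has mean zero under $P^*_{X,\widetilde{Y}}$. Two identities do all the work, and both follow from \eqref{eq:drm_noisy} and Proposition~\ref{proposition:x_conditional_relationships}.

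First, the true joint law of $(X,\widetilde{Y})$ satisfies
\[
\sP(\widetilde{Y}=l)\,d\widetilde{P}_l^*(x)=\eta_l(x;\btheta^*)\,dP_0^*(x).
\]
Hence, for any integrable $f$,
\[
\sE\Big[\mathbbm{1}(\widetilde{Y}=l)\frac{f(X)}{\eta_l(X;\btheta^*)}\Big]=\int f\,dP_0^*.
\]
Second, the marginal of $X$ is $\sum_k w_k^*P_k^*$. At $\bnu^*=(w_1^*,\dots,w_{K-1}^*)$ this gives
\[
\delta(x;\bnu^*,\btheta^*)=\sum_{k\in[K]}w_k^*\xi(x;\gamma_k^*,\beta_k^*)=\frac{dP_X}{dP_0^*}(x),
\]
where we used $\sum_k w_k^*=1$ and $\xi(\cdot;\gamma_0^*,\beta_0^*)=1$. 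So for any $f$,
\[
\sE\Big[\frac{f(X)}{\delta(X;\bnu^*,\btheta^*)}\Big]=\int f\,dP_0^*.
\]

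With these identities each block reduces to a moment under $P_0^*$. For $\mS_{n1}$ we get
\[
-n\int(\xi_j-1)\,dP_0^*=0
\]
by the normalization \eqref{eq:drm_constraints}. For $\mS_{n2}$, the first term gives
\[
n\sum_l w_j^*T_{lj}^*\int\xi_j\,dP_0^*=nw_j^*,
\]
using the column sums $\sum_l T_{lj}^*=1$. The second term gives $n\nu_j^*\int\xi_j\,dP_0^*=nw_j^*$, so the two cancel. The $\beta_j$ block is handled identically: both terms equal $nw_j^*\int\xi_j g\,dP_0^*=nw_j^*\,\sE_{P_j^*}g(X)$. This is finite because $\int \xi_j\|g\|dP_0^*=\sE_{P_j^*}\|g\|$, which is bounded under Assumption~\ref{assumption:bounded_second_moment}.

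For the $\mT$ block, each of the two terms has expectation $n\,w_k^*\int\xi_k\,dP_0^*=nw_k^*$, so their difference is zero. The $\vw$ derivative likewise gives
\[
n\sum_l\Big(T_{lj}^*\int\xi_j\,dP_0^*-T_{l0}^*\Big)=n(1-1)=0.
\]
The lemma writes the fourth block as $\partial/\partial\widetilde{\vw}$. I would treat it through the reparametrization supplied by Proposition~\ref{proposition:x_conditional_relationships}. By the chain rule it is a fixed linear combination of the $\vw$- and $\mT$-scores evaluated at the truth, and each of those has mean zero, so it does too.

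The main obstacle is bookkeeping rather than analysis. The point is to recognize that $\eta_l/\widetilde{w}_l^*$ is exactly the true density ratio $d\widetilde{P}_l^*/dP_0^*$, and that $\delta$ at $\bnu^*$ is the marginal density ratio. I would also check carefully which parametrization ($\vw$ versus $\widetilde{\vw}$, and $T_{0k}$ eliminated) each derivative refers to, so that the constraint terms such as $-T_{l0}$ cancel correctly. Integrability of every term follows from the compactness assumptions together with finiteness of $\sE_{P_k^*}\|g\|$.
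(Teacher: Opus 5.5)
Your proposal is correct and follows essentially the same route as the paper: your two change-of-measure identities are exactly Lemma~\ref{lemma:expect_change_of_var}, after which every block reduces to $\int\xi_j\,dP_0^*=1$ and $\sum_l T_{lj}^*=1$. The differences are minor: you get the $\delta$-identity a little more directly, from the marginal $\sum_k w_k^*P_k^*$ rather than through $\sum_l \widetilde{w}_l^*\widetilde{P}_l^*$ and the column sums of $\mT^*$; you work out the $\bbeta$- and $\mT$-blocks that the paper dismisses as ``similar''; and you resolve the $\widetilde{\vw}$ versus $\vw$ notational mismatch by a chain-rule argument, whereas the paper simply computes the $w_j$-score.
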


The proof relies on the following key result about expectation transformations:
\begin{lemma}
\label{lemma:expect_change_of_var}
Under the same notation as before, for any measurable function $g$, we have:
\[
\begin{split}
\sE_{X\sim \widetilde{P}_l^*}\left\{\frac{g(X;\btheta)}{\eta_l(X;\btheta)}\right\} =&~\{1/\widetilde{w}_l^*\}\sE_{X\sim P_0^*}\{g(X;\btheta)\}\\
\sE_{*}\left\{\frac{g(X;\btheta)}{\delta(X;\bnu^*,\btheta^*)}\right\} =&~\sE_{X\sim P_{0}^*}\left\{g(X;\btheta)\right\},    
\end{split}
\]
where $\sE_{*}$ denotes expectation under the true marginal distribution of $X$.
\end{lemma}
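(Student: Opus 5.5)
The plan is to prove both identities by a single change of measure from the relevant distribution to the reference $P_0^*$. The only work is to identify the Radon--Nikodym derivative with respect to $P_0^*$ in closed form and to check that it equals the denominator appearing inside the expectation. Throughout, $\eta_l$ and $\delta$ are evaluated at the true values $(\bnu^*,\btheta^*)$; the change of measure only cancels there. This is the sense in which the lemma is applied in Lemma~\ref{lemma:score_function_exp_var}, where the score is evaluated at the truth. We also assume $g$ is integrable under $P_0^*$ so that all expectations are finite.

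\emph{First identity.} By~\eqref{eq:drm_noisy},
\[
\frac{d\widetilde{P}_l^*}{dP_0^*}(x)=\frac{1}{\widetilde{w}_l^*}\sum_{k\in[K]} w_k^*T_{lk}^*\,\xi(x;\gamma_k^*,\beta_k^*)=\frac{\eta_l(x;\btheta^*)}{\widetilde{w}_l^*},
\]
where $\xi(x;\gamma_0^*,\beta_0^*)\equiv 1$ because $\gamma_0^*=0$ and $\beta_0^*=0$. Hence
\[
\sE_{X\sim\widetilde{P}_l^*}\{g(X)/\eta_l(X;\btheta^*)\}=\int \frac{g(x)}{\eta_l(x;\btheta^*)}\,\frac{\eta_l(x;\btheta^*)}{\widetilde{w}_l^*}\,P_0^*(dx).
\]
The factor $\eta_l$ cancels, which gives $(1/\widetilde{w}_l^*)\sE_{P_0^*}\{g(X)\}$. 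One care point: the cancellation needs $\eta_l>0$. On the set where $\eta_l=0$, the density of $\widetilde{P}_l^*$ vanishes, so that set is $\widetilde{P}_l^*$-null. Since $\xi>0$ everywhere, $\eta_l>0$ holds whenever the column $(T_{lk}^*)_k$ has a positive entry, which is guaranteed by $\widetilde{w}_l^*>0$.

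\emph{Second identity.} The true marginal of $X$ is $\sum_k w_k^* P_k^*$, so by~\eqref{eq:drm} its density with respect to $P_0^*$ is $\sum_{k\in[K]} w_k^*\xi(x;\gamma_k^*,\beta_k^*)$. Using $\xi_0\equiv1$ and $w_0^*=1-\sum_{k\ge1}w_k^*$, this equals
\[
1+\sum_{k=1}^{K-1}w_k^*\{\xi(x;\gamma_k^*,\beta_k^*)-1\}=\delta(x;\bnu^*,\btheta^*),
\]
since $\bnu^*=(w_1^*,\dots,w_{K-1}^*)^\top$. This is precisely why the true Lagrange multipliers are identified with the class proportions. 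The same cancellation then gives $\sE_*\{g(X)/\delta(X;\bnu^*,\btheta^*)\}=\sE_{P_0^*}\{g(X)\}$, and $\delta>0$ because it is a convex combination of positive terms.

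There is no real obstacle. The only step needing attention is the algebraic rewriting of the mixture density as $\delta(\cdot;\bnu^*,\btheta^*)$, which uses the baseline normalization $\gamma_0^*=0$, $\beta_0^*=0$.
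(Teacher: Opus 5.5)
Your proof is correct and is essentially the paper's change-of-measure argument. The first identity is handled identically. For the second, you read off $dP_X/dP_0^*=\sum_k w_k^*\xi(\cdot;\gamma_k^*,\beta_k^*)=\delta(\cdot;\bnu^*,\btheta^*)$ directly from the clean mixture $\sum_k w_k^*P_k^*$. The paper instead writes the marginal as $\sum_l\widetilde{w}_l^*\widetilde{P}_l^*$, reuses the density $\eta_l(\cdot;\btheta^*)/\widetilde{w}_l^*$ from the first identity, and collapses the double sum with $\sum_l T_{lk}^*=1$; this amounts to the same thing.

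Your observation that $\eta_l$ must be evaluated at $\btheta^*$ is correct, since the statement's generic $\btheta$ only works there. Your positivity checks on $\eta_l$ and $\delta$ are also welcome refinements of the paper's terse argument. One small slip: $(T_{lk}^*)_k$ with $l$ fixed is a row of $\mT^*$, not a column.
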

\begin{proof}
The first equality follows immediately from the relationship that
\[\frac{d\widetilde{P}_l^*}{dP_0^*}(x) =\frac{\eta_{l}(x;\btheta)}{\widetilde{w}_l^*}.\]
For the second conclusion, recall that:
\[\delta(X;\bnu^*,\btheta^*) = \sum_{k=0}^{K-1}w_k^*\xi(X;\gamma_k^*,\beta_k^*)\]
and the true marginal distribution of $X$ is $\sum_{l=0}^{K-1}\widetilde{w}_l^*\widetilde{P}_l^*$.
Therefore, 
\[
\begin{split}
\sE_{*}\left\{\frac{g(X;\btheta)}{\delta(X;\bnu^*,\btheta^*)}\right\} =&~\sum_{l=0}^{K-1}\widetilde{w}_l^* \int \frac{g(x;\btheta)}{\sum_{m=0}^{K-1}w_m^*\xi(x;\gamma_m^*,\beta_m^*)} \,P_{l}^*(dx)\\ 
=&~\sum_{l=0}^{K-1}\int \frac{\eta_l(X;\btheta^*)g(x;\btheta)}{\sum_{m=0}^{K-1}w_m^*\xi(x;\gamma_m^*,\beta_m^*)} \,P_{0}^*(dx)\\ 
=&~\sum_{l=0}^{K-1}\int \frac{\sum_{k=0}^{K-1}w_k^*T_{lk}^*\xi(x;\gamma_k^*,\beta_k^*)g(x;\btheta)}{\sum_{m=0}^{K-1}w_m^*\xi(x;\gamma_m^*,\beta_m^*)} \,P_{0}^*(dx).
\end{split}
\]
By interchanging the order of summation and using the property $\sum_{l=0}^{K-1}T_{lj}^*=1$ for any $j\in[K]$, we obtain:
\[
\sE_{*}\left\{\frac{g(X;\btheta)}{\delta(X;\bnu^*,\btheta^*)}\right\} =\sE_{X\sim P_{0}^*}\left\{g(X;\btheta)\right\},
\]
which completes the proof.
\end{proof}

We now proceed with the proof of Lemma~\ref{lemma:score_function_exp_var}.

\begin{proof}[Proof of Lemma~\ref{lemma:score_function_exp_var}]
We first establish $\sE(\mS_n) = 0$ by showing:
\[\sE\left\{\frac{\partial h(\bnu^*,\btheta^*)}{\partial \nu_j}\right\}=0\quad \text{and}\quad\sE\left\{\frac{\partial h(\bnu^*,\btheta^*)}{\partial w_j}\right\}=0\]
for all $j$.
The remaining equalities follow similarly from Lemma~\ref{lemma:expect_change_of_var}.

For $\partial h/\partial w_j$, applying the law of total expectation conditional on $\widetilde{Y}_i$ yields:
\[
\begin{split}
\sE\left\{\frac{\partial h(\bnu^*,\btheta^*)}{\partial w_j}\right\}=&~\sum_{i=1}^{n}\sum_{l=0}^{K-1}\widetilde{w}_l^*\sE_{\widetilde{P}_l^*}\left\{\frac{T_{lj}^*\xi(X_i;\gamma_j^*,\beta_j^*)-T_{l0}^*}{\eta_{l}(X_i;\btheta^*)}\right\}\\
=&~\sum_{i=1}^{n}\sum_{l=0}^{K-1}\sE_{P_0^*}\left\{T_{lj}^*\xi(X_i;\gamma_j^*,\beta_j^*)-T_{l0}^*\right\},
\end{split}
\]
where the last equality follows from Lemma~\ref{lemma:expect_change_of_var}.
From the exponential tilting relationship in~\eqref{eq:drm}, we have:
\[\sE_{P_0^*}\{\xi(X_i;\gamma_j^*,\beta_j^*)\} = \sE_{P_j^*}(1)=1.\]
Hence, we get
\[\sE\left\{\frac{\partial h(\bnu^*,\btheta^*)}{\partial w_j}\right\} = \sum_{i=1}^{n}\sum_{l=0}^{K-1}\{T_{lj}^* - T_{l0}^*\} = 0,\]
where the last follows from $\sum_{l=0}^{K-1}T_{lj}^*=1$ for any $j\in[K]$.

For $\partial h/\partial \nu_j$, applying Lemma~\ref{lemma:expect_change_of_var} gives:
\[
\sE\left\{\frac{\partial h(\bnu^*,\btheta^*)}{\partial \nu_j}\right\}=\sum_{i=1}^{n}\sE_{P_0^*}\left\{\xi(X_i;\gamma_j^*,\beta_j^*)-1\right\}
=\sum_{i=1}^{n}\left\{\sE_{P_j^*}(1)-\sE_{P_0^*}(1)\right\}=0,
\]
This completes the proof that $\sE(\mS_n) = 0$.
\end{proof}

\subsubsection{Asymptotic normality of the estimator}
\begin{proof}[Proof of Theorem~\ref{thm:estimator_asymptotic_normality}]
Let $\bnu^* = (w_1^*,\ldots, w_{K-1}^*)^{\top}$ and $\btheta^*$ be the true value of $\btheta$.
For the ease of notation, we denote 
\[\vv = (\bnu^{\top}, \btheta^{\top})^{\top}\quad\text{and}\quad\vv^* = ((\bnu^*)^{\top}, (\btheta^*)^{\top})^{\top}\quad\text{and}\quad\widehat\vv = (\widehat\bnu^{\top}, \widehat\btheta^{\top})^{\top}.\]
We now establish the asymptotic properties of $\widehat\vv$, which can be obtained via the second-order Taylor expansion on $h(\vv)$.

Following the proofs of Lemma 1 and Theorem 1 of~\cite{qin1994empirical} and the proof of Theorem 2 in~\citet{qin2015using}, we obtain 
\[\widehat\vv = \vv^* + O_p(n^{-1/2}).\]
Note that the MELE $\widehat\btheta$ of $\btheta$ and the corresponding Lagrange multipliers $\widehat\bnu$ must satisfy
\[\frac{\partial h(\widehat\vv)}{\partial \vv} = 0.\]
Applying a first-order Taylor expansion to $\partial h(\widehat\vv)/\partial \vv$ gives
\be
\label{eq:helper1}
0 = \frac{\partial h(\widehat\vv)}{\partial \vv} = \frac{\partial h(\vv^*)}{\partial \vv} + \frac{\partial^2 h(\vv^*)}{\partial \vv \partial \vv^{\top}}(\widehat\vv - \vv^*) + o_p(n^{1/2}).
\ee
Using the law of large numbers, we have
\be
\label{eq:helper2}
\frac{1}{n}\frac{\partial^2 h(\vv^*)}{\partial \vv\partial \vv^{\top}} = -\mW + o_p(1)
\ee
where $\mW = \sE\left\{n^{-1}\partial^2 h(\vv^*)/\partial \vv\partial \vv^*\right\}$,
Combining~\eqref{eq:helper1} and~\eqref{eq:helper2}, we get
\[\widehat\vv - \vv^* = n^{-1}\mW^{-1}\mS_n + o_p(n^{-1/2}).\]
Using the central limit theorem, Lemma~\ref{lemma:score_function_exp_var}, and Slutsky’s theorem, we have
\[\sqrt{n}(\widehat\vv - \vv^*) \to N(0, \Sigma' = \mW^{-1}\mV\mW^{-1})\]
in distribution.
The marginal asymptotic distribution of $\btheta$ is therefore also a normal distribution, which completes the proof.
\end{proof}

%% file: sections/appendices/identifiability.tex
In this section, we establish the theoretical guarantees for the identifiability of the unknown parameters in our proposed model. We observe data from $K$ distinct distributions $\{\widetilde{P}_l^*\}_{l=1}^K$, which are assumed to be identifiable. 
Each distribution $\widetilde{P}_l^*$ is modeled as a mixture over $K$ latent components relative to an unknown reference distribution $P_0^*$. 

Specifically, the density (or Radon-Nikodym derivative) of $\widetilde{P}_l^*$ with respect to $P_0^*$ for an observation $x \in \gX$ is given by:
\begin{equation} \label{eq:model_setup}
    \frac{d\widetilde{P}_l^*}{dP_0^*}(x) = \sum_{k=1}^K M_{lk}^* \exp\big(\gamma_k^* + \langle \beta_k^*, g(x) \rangle \big), \quad \forall l \in [K],
\end{equation}
where $\mM^* \in \mathbb{R}^{K \times K}$ is the mixing matrix with entries $M_{lk}^*$, $\gamma_k^* \in \mathbb{R}$ are the component-specific intercepts, $\beta_k^* \in \mathbb{R}^d$ are the feature coefficients, and $g: \gX \to \mathbb{R}^d$ is a known feature mapping.

\subsubsection{Assumptions}
To ensure that the parameters $(\mM^*, \bgamma^*, \bbeta^*)$ and $P_0^*$ are uniquely identifiable from the distributions $\{\widetilde{P}_l^*\}_{l=1}^K$, we require the following regularity conditions.

\begin{assumption}[Distinct features] \label{assump:distinct}
    The coefficients $\beta_k^*$ are pairwise distinct for all $k \in [K]$.
\end{assumption}

\begin{assumption}[Richness of feature space] \label{assump:richness}
The image of the feature mapping, $\gZ = \{g(x) : x \in \gX\} \subseteq \mathbb{R}^d$, contains a non-empty open set in $\mathbb{R}^d$.
\end{assumption}

\begin{assumption}[Invertibility and recoverability of confusion matrix] \label{assump:recoverability}
The $K \times K$ mixing matrix $\mM^*$ has full rank, and its rows are normalized such that $\sum_{k=1}^K M_{lk}^* = 1$ for all $l \in [K]$.
Furthermore, the parameter space restricts $\mM^*$ such that its inverse has strictly positive diagonal entries and nonpositive off-diagonal entries.
\end{assumption}

\subsubsection{Supporting Lemma}
Assumption \ref{assump:richness} is crucial to ensure that the exponential functions in our mixture do not collapse into linear dependence over the data support.

\begin{lemma} 
\label{lemma:linear_indep}
    Under Assumption \ref{assump:richness}, for any finite set of pairwise distinct vectors $\beta_1, \dots, \beta_K \in \mathbb{R}^d$, the functions $f_k(x) = \exp(\langle \beta_k, g(x) \rangle)$ are linearly independent over $\gX$. That is, if 
    \[\sum_{k=1}^K c_k \exp(\langle \beta_k, g(x) \rangle) = 0 \text{ for all  }x \in \gX, \]
    then $c_1 = \dots = c_K = 0$.
\end{lemma}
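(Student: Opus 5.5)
Suppose $\sum_{k=1}^K c_k \exp(\langle \beta_k, g(x)\rangle)=0$ for all $x\in\gX$. The plan is to move the identity from $\gX$ onto the feature space, extend it to all of $\mathbb{R}^d$ by analyticity, and then kill the coefficients one at a time by an asymptotic dominance argument along a well-chosen direction.

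\textbf{Step 1 (reduction to $\mathbb{R}^d$).} Define $F(z)=\sum_{k} c_k \exp(\langle \beta_k, z\rangle)$ for $z\in\mathbb{R}^d$. The hypothesis says exactly that $F(z)=0$ for every $z\in\gZ$. Under Assumption~\ref{assump:richness}, $\gZ$ contains a non-empty open set $U$, so $F\equiv 0$ on $U$. Since $F$ is a finite sum of exponentials of linear forms, it is real-analytic on the connected set $\mathbb{R}^d$. By the identity theorem for real-analytic functions, vanishing on a nonempty open set forces $F\equiv 0$ on all of $\mathbb{R}^d$.

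\textbf{Step 2 (separating direction).} Since the $\beta_k$ are pairwise distinct, for each pair $j\neq k$ the set $\{u\in\mathbb{R}^d:\langle \beta_j-\beta_k,u\rangle=0\}$ is a hyperplane. A finite union of hyperplanes cannot cover $\mathbb{R}^d$, so we can pick $u$ outside all of them. Then the scalars $a_k=\langle\beta_k,u\rangle$ are pairwise distinct.

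\textbf{Step 3 (one-dimensional exponential sum).} Restrict $F$ to the line through the origin in direction $u$: $\phi(t)=F(tu)=\sum_k c_k e^{a_k t}\equiv 0$ for $t\in\mathbb{R}$. Order the exponents so that $a_1>a_2>\dots>a_K$. Multiplying by $e^{-a_1 t}$ and letting $t\to\infty$ gives $c_1=0$, because every other term decays. Repeating the same argument on the remaining sum yields $c_2=\dots=c_K=0$ by induction.

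An alternative to Step 3 is to note that the Vandermonde-type matrix $(a_k^m)$, obtained from the successive derivatives of $\phi$ at $0$, is invertible.

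The only step needing care is Step 1, specifically the analytic continuation from the open set $U$ to all of $\mathbb{R}^d$. If one prefers to avoid the identity theorem, there is a way to stay inside $U$. Pick $z_0\in U$ and a small ball around it, and replace the line of Step 3 by $z_0+tu$ for small $|t|$. Then take derivatives at $t=0$ to get the linear system $\sum_k c_k e^{\langle\beta_k,z_0\rangle} a_k^m=0$ for $m=0,\dots,K-1$. The Vandermonde matrix with distinct nodes $a_k$ is nonsingular and each factor $e^{\langle\beta_k,z_0\rangle}$ is nonzero, so all $c_k$ vanish. This local version needs only differentiability, so it is the route I would actually write down.
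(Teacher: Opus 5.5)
Your Steps 1--3 are essentially the paper's proof. The paper likewise extends $F\equiv 0$ from the open set in $\gZ$ to all of $\mathbb{R}^d$ by the identity theorem for real-analytic functions. It then picks a direction $u$ off the finitely many hyperplanes $\{v:\langle \beta_i-\beta_j,v\rangle=0\}$ so that the $a_k=\langle\beta_k,u\rangle$ are distinct. Finally it peels off the coefficients one at a time by dividing by the dominant exponential and letting $t\to\infty$.

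The local variant you say you would actually write down is also correct, and it is a more elementary route. For small $|t|$ we have $z_0+tu\in U\subseteq\gZ$, so $\psi(t)=F(z_0+tu)$ vanishes on an interval around $0$. Hence all derivatives of $\psi$ vanish at $t=0$, which gives $V\,\bigl(c_k e^{\langle\beta_k,z_0\rangle}\bigr)_{k}=0$, where $V=(a_k^m)_{0\le m\le K-1,\,1\le k\le K}$ is a nonsingular Vandermonde matrix because the $a_k$ are distinct. Since each $e^{\langle\beta_k,z_0\rangle}\neq 0$, every $c_k=0$.

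The trade-off is as follows. The paper needs the identity on the whole line $t\in\mathbb{R}$ to run its $t\to\infty$ argument, and that is the only reason it invokes analytic continuation. Your version replaces analytic continuation plus asymptotics with a finite linear-algebra fact, and uses only that $F$ is $(K-1)$ times differentiable along the line. It also shows that Assumption~\ref{assump:richness} is used only to supply a short segment inside $\gZ$ in a direction that separates the $\beta_k$'s.
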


\begin{proof}
    Let $z = g(x)$. By hypothesis, $F(z) = \sum_{k=1}^K c_k \exp(\langle \beta_k, z \rangle) = 0$ holds for all $z \in \gZ$, where $\gZ = \{g(x) : x \in \gX\} \subseteq \mathbb{R}^d$ is the image of $\gX$. Because $\gZ$ contains a non-empty open set (Assumption \ref{assump:richness}) and $F(z)$ is real-analytic on $\mathbb{R}^d$, the identity theorem for real-analytic functions dictates that $F(z) = 0$ for all $z \in \mathbb{R}^d$.

    Since the vectors $\{\beta_k\}_{k=1}^K$ are pairwise distinct, the finite set of difference vectors $\{\beta_i - \beta_j : i \neq j\}$ does not contain the zero vector. Consequently, the union of the orthogonal hyperplanes $\{v \in \mathbb{R}^d : \langle \beta_i - \beta_j, v \rangle = 0\}$ has Lebesgue measure zero in $\mathbb{R}^d$. We can therefore choose a direction vector $v \in \mathbb{R}^d$ that avoids all such hyperplanes. For this $v$, the inner products $\lambda_k = \langle \beta_k, v \rangle$ are strictly distinct scalars. 

    Evaluating $F(z)$ along the line $z = tv$ for $t \in \mathbb{R}$ yields:
    \begin{equation*}
        F(tv) = \sum_{k=1}^K c_k \exp(t \langle \beta_k, v \rangle) = \sum_{k=1}^K c_k \exp(\lambda_k t) = 0, \quad \forall t \in \mathbb{R}.
    \end{equation*}
    Without loss of generality, assume $\lambda_1 > \lambda_2 > \dots > \lambda_K$. Dividing the equation by $\exp(\lambda_1 t)$ gives:
    \begin{equation*}
        c_1 + \sum_{k=2}^K c_k \exp\big((\lambda_k - \lambda_1) t\big) = 0.
    \end{equation*}
    Taking the limit as $t \to \infty$, the terms $\exp((\lambda_k - \lambda_1)t)$ vanish because $\lambda_k - \lambda_1 < 0$ for $k \ge 2$. This implies $c_1 = 0$. Applying this argument recursively to the remaining terms yields $c_k = 0$ for all $k \in [K]$, concluding the proof.
\end{proof}

\subsubsection{Proof of identifiability}
We now present the proof for the main theorem~\ref{thm:identifiability}.

\begin{proof}
    Suppose there are two sets of parameters $(M, \gamma, \beta)$ with reference density $p_0(x)$, and $(\widetilde{M}, \widetilde{\gamma}, \widetilde{\beta})$ with reference density $\widetilde{p}_0(x)$, both satisfying Assumptions \ref{assump:distinct}--\ref{assump:recoverability}, that produce identical observed distributions $p_l(x)$ for all $l \in [K]$. 
    
    Define the exponential components as $f_k(x) = \exp(\gamma_k + \langle \beta_k, g(x) \rangle)$ and $\widetilde{f}_j(x) = \exp(\widetilde{\gamma}_j + \langle \widetilde{\beta}_j, g(x) \rangle)$. Equating the two parameterizations yields:
    \begin{equation*}
        \sum_{k=1}^K M_{lk} f_k(x) p_0(x) = \sum_{j=1}^K \widetilde{M}_{lj} \widetilde{f}_j(x) \widetilde{p}_0(x), \quad \forall l \in [K].
    \end{equation*}
    Let $R(x) = \widetilde{p}_0(x) / p_0(x)$ be the density ratio. Let $f(x) = [f_1(x), \dots, f_K(x)]^\top$ and $\widetilde{f}(x) = [\widetilde{f}_1(x), \dots, \widetilde{f}_K(x)]^\top$. In matrix notation, $\mM f(x) = \widetilde{\mM} \widetilde{f}(x) R(x)$. 
    
    By Assumption \ref{assump:recoverability}, $\mM$ is invertible. Left-multiplying by $\mM^{-1}$ and defining $A = \mM^{-1} \widetilde{\mM}$, we obtain $f(x) = A \widetilde{f}(x) R(x)$. The $k$-th component is:
    \begin{equation} \label{eq:matrix_A_elements}
        f_k(x) = \sum_{j=1}^K A_{kj} \widetilde{f}_j(x) R(x).
    \end{equation}

    \textbf{Step 1: Isolating the Density Ratio $R(x)$.} \\
    Evaluate \eqref{eq:matrix_A_elements} for the anchor $k=1$. 
    By assumption $\beta_1 = \mathbf{0}$ and $\gamma_1 = 0$, so $f_1(x) = 1$. Thus,
    \begin{equation*}
        1 = \left( \sum_{j=1}^K A_{1j} \widetilde{f}_j(x) \right) R(x) \implies R(x) = \left( \sum_{j=1}^K A_{1j} \widetilde{f}_j(x) \right)^{-1}.
    \end{equation*}
    Substitute $R(x)$ back into the general equation \eqref{eq:matrix_A_elements} for any $k$:
    \begin{equation*}
        f_k(x) \sum_{j=1}^K A_{1j} \widetilde{f}_j(x) = \sum_{m=1}^K A_{km} \widetilde{f}_m(x).
    \end{equation*}
    Expanding the definitions of $f_k$ and $\widetilde{f}_j$ yields:
    \begin{equation} \label{eq:poly_exp}
        \sum_{j=1}^K A_{1j} \exp\big(\gamma_k + \widetilde{\gamma}_j + \langle \beta_k + \widetilde{\beta}_j, g(x) \rangle\big) = \sum_{m=1}^K A_{km} \exp\big(\widetilde{\gamma}_m + \langle \widetilde{\beta}_m, g(x) \rangle\big).
    \end{equation}

    \textbf{Step 2: Uniqueness of the Anchor Mapping.} \\
    Note that~\eqref{eq:poly_exp} must hold for every fixed $k \in [K]$. To see how linear independence applies, we bring all terms to one side:
    \begin{align*}
        &\sum_{j=1}^K \Big[ A_{1j} \exp(\gamma_k + \widetilde{\gamma}_j) \Big] \exp\big(\langle \beta_k + \widetilde{\beta}_j, g(x) \rangle\big) \\
        &\qquad - \sum_{m=1}^K \Big[ A_{km} \exp(\widetilde{\gamma}_m) \Big] \exp\big(\langle \widetilde{\beta}_m, g(x) \rangle\big) = 0.
    \end{align*}
    Before invoking linear independence, we must group terms with identical exponent vectors. Let $V$ be the set of all strictly unique exponent vectors present across both sums. We can rewrite the equation as:
    \begin{equation*}
        \sum_{v \in V} W_v \exp\big(\langle v, g(x) \rangle\big) = 0,
    \end{equation*}
    where $W_v$ is the net scalar coefficient for the unique vector $v$. Because the elements of $V$ are now pairwise distinct by definition, Lemma \ref{lemma:linear_indep} directly applies, dictating that $W_v = 0$ for all $v \in V$.

    Suppose there is an ``active'' index $j$ on the left side where $A_{1j} \neq 0$. The exponent vector for this term is $v^* = \beta_k + \widetilde{\beta}_j$. Because the original vectors $\{\widetilde{\beta}_j\}_{j=1}^K$ are pairwise distinct (Assumption \ref{assump:distinct}), the shifted vectors $\{\beta_k + \widetilde{\beta}_j\}_{j=1}^K$ are also strictly pairwise distinct. This means no other term in the first sum shares the exponent vector $v^*$. 

    Therefore, the net coefficient for $v^*$ takes the form:
    \begin{equation*}
        W_{v^*} = A_{1j} \exp(\gamma_k + \widetilde{\gamma}_j) - D_{v^*},
    \end{equation*}
    where $D_{v^*} = A_{km} \exp(\widetilde{\gamma}_m)$ if $v^*$ matches some $\widetilde{\beta}_m$ in the second sum, and $D_{v^*} = 0$ if it does not. 
    
    Since $A_{1j} \neq 0$ and the exponential function is strictly positive, the first term $A_{1j} \exp(\gamma_k + \widetilde{\gamma}_j)$ is non-zero. However, we established that $W_{v^*} = 0$. The only way this is mathematically possible is if $D_{v^*} \neq 0$. This forces the conclusion that $v^*$ must identically match one of the available vectors in the second sum.
    
    Thus, for any active index $j$, the vector $\beta_k + \widetilde{\beta}_j$ must perfectly equal $\widetilde{\beta}_m$ for some $m \in [K]$.

    Let $S = \{\widetilde{\beta}_1, \dots, \widetilde{\beta}_K\}$ be the set of available exponent vectors on the right-hand side. For a fixed active index $j$ (where $A_{1j} \neq 0$), as we vary $k$ from $1$ to $K$, the left-hand side generates a set of $K$ exponent vectors: $B_j = \{\beta_1 + \widetilde{\beta}_j, \dots, \beta_K + \widetilde{\beta}_j\}$. Because every vector generated on the left must be cancelled by a vector on the right, every vector in $B_j$ must belong to $S$. Since $B_j \subseteq S$ and both sets have exactly $K$ elements, they must be identical: $B_j = S$.

    Now, suppose there are two distinct active indices, $i$ and $j$, where $A_{1i} \neq 0$ and $A_{1j} \neq 0$. By the exact same cancellation logic, we must have $B_i = S$ and $B_j = S$, which implies $B_i = B_j$. This set equivalence means that $\{\beta_k + \widetilde{\beta}_i\}_{k=1}^K = \{\beta_k + \widetilde{\beta}_j\}_{k=1}^K$. Letting $\mathcal{B} = \{\beta_1, \dots, \beta_K\}$, we can write this as:
    \begin{equation*}
        \mathcal{B} + \widetilde{\beta}_i = \mathcal{B} + \widetilde{\beta}_j \implies \mathcal{B} = \mathcal{B} + (\widetilde{\beta}_i - \widetilde{\beta}_j).
    \end{equation*}
    This states that the finite set of vectors $\mathcal{B}$ is invariant under translation by the vector $\Delta = \widetilde{\beta}_i - \widetilde{\beta}_j$. However, a finite set of real vectors cannot be invariant under translation by a non-zero vector.

    Therefore, the translation vector must be zero, meaning $\widetilde{\beta}_i = \widetilde{\beta}_j$. Since the coefficient vectors are pairwise distinct, we must have $i = j$. Conclusively, there is exactly one index $j^*$ such that $A_{1j^*} \neq 0$, and $A_{1j} = 0$ for all $j \neq j^*$.

    \textbf{Step 3: Establishing the Permutation.} \\
    Since only $A_{1j^*} \neq 0$, Equation \eqref{eq:poly_exp} collapses to a single term on the left-hand side:
    \begin{equation*}
        A_{1j^*} \exp\big(\gamma_k + \widetilde{\gamma}_{j^*} + \langle \beta_k + \widetilde{\beta}_{j^*}, g(x) \rangle\big) = \sum_{m=1}^K A_{km} \exp\big(\widetilde{\gamma}_m + \langle \widetilde{\beta}_m, g(x) \rangle\big).
    \end{equation*}
    By linear independence, this single term must match exactly one term on the right-hand side. For each $k$, there is exactly one index $m = \sigma(k)$ where $A_{k, \sigma(k)} \neq 0$. This immediately gives $\beta_k + \widetilde{\beta}_{j^*} = \widetilde{\beta}_{\sigma(k)}$.
    
    Evaluate this for the anchor $k=1$. Since $\beta_1 = \mathbf{0}$, we have $\widetilde{\beta}_{j^*} = \widetilde{\beta}_{\sigma(1)}$. Because the vectors are distinct, it follows that $\sigma(1) = j^*$.

    \textbf{Step 4: Fixing the Permutation via Recoverability.} \\
    We have established that $A$ contains exactly one non-zero entry per row at $(k, \sigma(k))$. Thus, $A$ is a permutation matrix scaled by row-specific factors. From $\widetilde{\mM} = \mM A$, we take the inverse: $(\widetilde{\mM})^{-1} = A^{-1} \mM^{-1}$. 
    
    The inverse matrix $A^{-1}$ has its non-zero entries at $(\sigma(k), k)$. Because $R(x) = 1 / (A_{1j^*} \widetilde{f}_{j^*}(x))$ must be strictly positive (as a ratio of two densities), we must have $A_{1j^*} > 0$. The equality of coefficients from Step 3 ensures $A_{k, \sigma(k)} = A_{1j^*} \exp(\gamma_k + \widetilde{\gamma}_{j^*} - \widetilde{\gamma}_{\sigma(k)}) > 0$. Since all non-zero entries of $A$ are positive, all non-zero entries of $A^{-1}$ are also strictly positive.
    
    The diagonal entries of the new inverse matrix are given by:
    \begin{equation*}
        [(\widetilde{\mM})^{-1}]_{k,k} = [A^{-1}]_{k, \sigma^{-1}(k)} [\mM^{-1}]_{\sigma^{-1}(k), k}.
    \end{equation*}
    Suppose the permutation is not the identity, meaning there is some $k$ where $\sigma^{-1}(k) \neq k$. Then $[\mM^{-1}]_{\sigma^{-1}(k), k}$ is an off-diagonal entry. By Assumption \ref{assump:recoverability}, all off-diagonal entries of $\mM^{-1}$ are nonpositive ($\le 0$). This implies that $[(\widetilde{\mM})^{-1}]_{k,k} \le 0$, which explicitly contradicts the requirement in Assumption \ref{assump:recoverability} that $(\widetilde{\mM})^{-1}$ must have strictly positive diagonal entries. 
    
    Therefore, we must have $\sigma(k) = k$ for all $k \in [K]$. This forces $j^* = \sigma(1) = 1$.

    \textbf{Step 5: Global Identifiability of $P_0$ and Parameters.} \\
    Because $j^* = 1$ and $\sigma(k) = k$, we have $A_{11} \neq 0$ and $A_{1j} = 0$ for all $j \neq 1$. Applying the anchor assumption $\widetilde{\beta}_1 = \mathbf{0}$ and $\widetilde{\gamma}_1 = 0$, we substitute this back into our ratio expression from Step 1:
    \begin{equation*}
        R(x) = \frac{1}{A_{11} \exp(0)} = \frac{1}{A_{11}}.
    \end{equation*}
    The density ratio $R(x)$ is therefore a constant. Because $p_0(x)$ and $\widetilde{p}_0(x)$ are both probability densities (due to the normalization of the rows of $\mM$), they must integrate to 1 over $\gX$. Thus:
    \begin{equation*}
        \int \widetilde{p}_0(x) d\mu(x) = \int \frac{1}{A_{11}} p_0(x) d\mu(x) \implies 1 = \frac{1}{A_{11}} \cdot 1 \implies A_{11} = 1.
    \end{equation*}
    Therefore, $R(x) = 1$, which proves that $\widetilde{p}_0(x) = p_0(x)$ for all $x$, uniquely identifying the reference distribution $P_0$. 
    
    With $A_{11} = 1$ and $\sigma(k) = k$, the relation $\beta_k + \widetilde{\beta}_1 = \widetilde{\beta}_k$ becomes $\beta_k = \widetilde{\beta}_k$. Equating the coefficients $A_{k, k} = A_{11} \exp(\gamma_k + \widetilde{\gamma}_1 - \widetilde{\gamma}_k)$ yields $1 = \exp(\gamma_k - \widetilde{\gamma}_k)$, meaning $\gamma_k = \widetilde{\gamma}_k$. 
    
    Finally, since $A$ is the identity matrix $I$, the relation $\widetilde{\mM} = \mM A$ reduces to $\widetilde{\mM} = \mM$. We conclude that the parameters $(\mM, \gamma, \beta)$ and the reference distribution $P_0$ are strictly identifiable.
\end{proof}

%% file: sections/appendices/def_lemma.tex
\subsubsection{Definitions}
\begin{definition}[Rademacher complexity]
\label{def:rademacher_complexity}
Let $Z_1,\ldots, Z_n$ be independent Rademacher random variables (\ie $\sP(Z_i=-1)=\sP(Z_i=1)=0.5$).
Given a function class $\gF$ and IID samples $X_1,\ldots, X_n$, the empirical Rademacher complexity of $\gF$ is defined to be:
\[\widehat{\mathfrak{R}}_n(\gF) = \sE_{Z_1,\ldots,Z_n}\left(\sup_{f\in\gF} \left|\frac{1}{n}\sum_{i=1}^{n}Z_i f(X_i)\right|\right).\]
The Rademacher complexity of $\gF$ is defined as:
$\mathfrak{R}_n(\gF) = \sE_{X_1,\ldots,X_n}\{\widehat{\mathfrak{R}}_n(\gF)\}$.
\end{definition}

\begin{definition}[$\epsilon$-cover and covering number]
Let $\gF$ be a class of functions and $D$ a metric on $\gF$. 
\begin{itemize}[leftmargin=*]
\item An \emph{$\epsilon$-cover} of $\gF$ with respect to $D$ is a finite set $\{f_1, \ldots, f_l\} \subset \gF$ such that for every $f \in \gF$, there exists $f_i$ satisfying $D(f, f_i) \leq \epsilon$.
\item The \emph{$\epsilon$-covering number}, denoted $\gN(\epsilon, \gF, D)$, is the smallest cardinality of any $\epsilon$-cover of $\gF$ with respect to $D$.
\end{itemize}
\end{definition}

\subsubsection{Lemmas}
\begin{lemma}[Bounds for Vapnik–Chervonenkis classes]
\label{lemma:dudley}   
Let $X_1, \ldots, X_n$ be fixed sample points with the empirical $L_2$ norm defined as
\be
\label{eq:l2_norm}
\|f\|_{n}^2 := \frac{1}{n}\sum_{i=1}^n f^2(X_i).
\ee
Then we have 
\[\mathfrak{R}_n(\gF) \leq \frac{24}{\sqrt{n}}\int_{0}^{B} \sqrt{\log\gN(t,\gF, \|\|_n)}\,dt\]
where $\sup_{f,g\in \gF} \|f-g\|_n \leq B$.
\end{lemma}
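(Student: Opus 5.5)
The plan is the standard Dudley chaining argument, carried out conditionally on the sample. Throughout, treat $X_1,\ldots,X_n$ as fixed and bound the empirical Rademacher complexity $\widehat{\mathfrak{R}}_n(\gF)$ by the entropy integral. Since the right-hand side does not depend on the Rademacher signs, and the covering numbers are those of the fixed empirical norm $\|\cdot\|_n$ in~\eqref{eq:l2_norm}, the bound then transfers to $\mathfrak{R}_n(\gF)$. Two reductions come first:
\begin{itemize}
\item \textbf{Vector form.} Identify each $f$ with the vector $(f(X_1),\ldots,f(X_n))\in\sR^n$, so that everything becomes a statement about a subset of $\sR^n$ under the scaled Euclidean norm $\|\cdot\|_n$.
\item \textbf{Absolute value.} Remove it by $\sup_f|\langle Z,f\rangle|\le \sup_{f\in\gF\cup(-\gF)}\langle Z,f\rangle$. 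The class $\gF\cup(-\gF)$ has covering numbers at most $2\gN(t,\gF,\|\cdot\|_n)$, and this factor is where the doubling of the constant from $12$ to $24$ will come from.
\end{itemize}

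The first ingredient is Massart's finite-class lemma. For any finite set $A\subset\sR^n$ with $\max_{a\in A}\|a\|_n\le r$,
\[
\sE_Z\max_{a\in A}\frac1n\sum_i Z_ia_i\le \frac{r\sqrt{2\log|A|}}{\sqrt n}.
\]
I would prove it directly: apply Jensen to $\exp(\lambda\max_a\cdot)$, use the sub-Gaussian bound $\sE e^{\lambda Z_i a_i/n}\le e^{\lambda^2a_i^2/(2n^2)}$, and optimise over $\lambda$.

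The chaining itself then runs as follows.
\begin{itemize}
\item \textbf{Scales and covers.} Set $\epsilon_j=B2^{-j}$ for $j\ge0$, and let $\gF_j$ be a minimal $\epsilon_j$-cover, so $|\gF_j|=\gN(\epsilon_j,\gF,\|\cdot\|_n)$. At $j=0$ a single function $f_0$ suffices, because the diameter is at most $B$.
\item \textbf{Telescoping.} For each $f$, choose $\pi_jf\in\gF_j$ with $\|f-\pi_jf\|_n\le\epsilon_j$, and write
\[
f=f_0+\sum_{j=1}^J(\pi_jf-\pi_{j-1}f)+(f-\pi_Jf).
\]
\item \textbf{Links.} Each link satisfies $\|\pi_jf-\pi_{j-1}f\|_n\le\epsilon_j+\epsilon_{j-1}=3\epsilon_j$. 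Ranging over $f$, the link takes at most $|\gF_j||\gF_{j-1}|\le\gN(\epsilon_j)^2$ values. Massart's lemma therefore bounds the $j$-th term by $3\epsilon_j\sqrt{4\log\gN(\epsilon_j)}/\sqrt n=6\epsilon_j\sqrt{\log\gN(\epsilon_j)}/\sqrt n$.
\item \textbf{Remainder.} By Cauchy--Schwarz, $\frac1n\sum_iZ_i(f-\pi_Jf)(X_i)\le\|f-\pi_Jf\|_n\le\epsilon_J$, which tends to $0$ as $J\to\infty$.
\item \textbf{Sum to integral.} Since $\epsilon_j=2(\epsilon_j-\epsilon_{j+1})$ and $t\mapsto\gN(t)$ is nonincreasing,
\[
\sum_j 6\epsilon_j\sqrt{\log\gN(\epsilon_j)}\le 12\sum_j\int_{\epsilon_{j+1}}^{\epsilon_j}\sqrt{\log\gN(t)}\,dt\le 12\int_0^B\sqrt{\log\gN(t)}\,dt.
\]
\item \textbf{Constant.} Doubling for the absolute value gives the stated $24$. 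The extra $\log 2$ from $\gF\cup(-\gF)$ is absorbed because $\sqrt{\log(2N)}\le\sqrt2\sqrt{\log N}$ once $N\ge2$; alternatively, handle $\pm$ separately and take a union.
\end{itemize}

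The main obstacle I expect is the anchor term $f_0$. With the absolute value in Definition~\ref{def:rademacher_complexity}, the contribution $\sE_Z|\frac1n\sum_iZ_if_0(X_i)|\le\|f_0\|_n/\sqrt n$ is not controlled by the diameter $B$ alone. I would handle it by taking $f_0$ to be a function of minimal norm, noting that in the paper's applications the class contains (or can be centred at) $0$, or has $\sup_f\|f\|_n\le B$. In that case this term is at most $B/\sqrt n\le(1/\sqrt n)\int_0^{B}\sqrt{\log\gN}\,dt$ up to constants, whenever $\gN(t)\ge 2$ on a range of $t$ of length comparable to $B$, and it fits inside the slack of the constant $24$. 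If this absorption fails, I would state the lemma with the additional $\|f_0\|_n/\sqrt n$ term, which is harmless for the $n^{-1/2}$ rates used later.
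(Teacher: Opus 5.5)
Your chaining core is correct. It is the same mechanism as the paper's proof, which simply cites Wainwright's Example~5.24, i.e.\ Dudley's entropy integral for the Rademacher process, whose increments are sub-Gaussian in $\|\cdot\|_n$. Massart's lemma on links of radius $3\epsilon_j$ taking at most $\gN(\epsilon_j)^2$ values, the Cauchy--Schwarz remainder, and $\epsilon_j=2(\epsilon_j-\epsilon_{j+1})$ do give, for every $f_0\in\gF$,
\[
\sE_Z\sup_{f\in\gF}\frac1n\sum_{i=1}^n Z_i\{f(X_i)-f_0(X_i)\}\le\frac{12}{\sqrt n}\int_0^B\sqrt{\log\gN(t,\gF,\|\cdot\|_n)}\,dt.
\]

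The gap is in passing to the absolute value. The anchor problem you flag is not a technicality to be absorbed, because the lemma as stated is false. Take $\gF=\{f_0\}$ with $f_0(X_i)=1$ for all $i$. The diameter condition holds for every $B\ge0$ and $\gN(t,\gF,\|\cdot\|_n)=1$ for all $t>0$, so the right-hand side is $0$. The left-hand side is $\sE_Z|n^{-1}\sum_iZ_i|>0$. More generally, shifting any class by a large constant breaks the bound, since the right-hand side is shift-invariant and the left-hand side is not.

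Hence your absorption step can only succeed under hypotheses that are not in the statement ($\sup_f\|f\|_n\lesssim B$ and nontrivial entropy at scales comparable to $B$), so it does not prove the lemma. The clean fix is to add the hypothesis $0\in\gF$. Then both one-sided suprema in your $\pm$ split are nonnegative, so $\sup_f|a_f|\le\sup_fa_f+\sup_f(-a_f)$ holds. Anchoring at $f_0=0$ then gives exactly $12+12=24$, with nothing to absorb. Without that hypothesis, $\sup_f|a_f|\le|a_{f_0}|+\sup_f(a_f-a_{f_0})+\sup_f(a_{f_0}-a_f)$ gives the bound plus an extra $\|f_0\|_n/\sqrt n$. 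That is your fallback, and it is the correct statement.

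For the paper's uses, the binary class contains $0$ (take $t=1$, since $\pi_{\gamma,\beta}<1$). The multiclass class $\{\max_kc_k\pi_k\}$ generally does not, so there the extra $O(n^{-1/2})$ term is needed; it is harmless for the rates.

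Separately, your primary route via $\gF\cup(-\gF)$ fails as written, for two reasons:
\begin{enumerate}
\item The diameter of $\gF\cup(-\gF)$ involves $\sup_{f,g}\|f+g\|_n$ and is not bounded by $B$. In the example above it is $2$ while $B$ may be $0$, so there is no single anchor at scale $\epsilon_0=B$.
\item The factor $2$ in $\gN(t,\gF\cup(-\gF))\le2\gN(t,\gF)$ sits inside the logarithm, so it cannot be what turns $12$ into $24$.
\end{enumerate}
The doubling comes from the $\pm$ split, which is the route to use. If $0\in\gF$, the union route can also be repaired: anchor at $0$ with $\epsilon_0=D:=\sup_{f,g}\|f-g\|_n$ and use $\gN(t,\gF)\ge2$ for $t<D/2$. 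This even yields the constant $12\sqrt2$.
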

\begin{proof}[Proof of Lemma~\ref{lemma:dudley}]
See~\citet[Example 5.24]{wainwright2019high}.
\end{proof}

\begin{lemma}[Uniform law via Rademacher complexity]
\label{lemma:ULLN_rademacher}
For a function class $\gF$ that is uniformly bounded by $M$ (\ie $\|f\|_{\infty}\leq M$ for all $f\in \gF$), the following inequality holds with probability at least $1 - \delta$:
\[\sup_{f\in\gF} \left|\frac{1}{n}\sum_{i=1}^{n} f(X_i)-\sE[f(X)]\right|\leq 2\mathfrak{R}_n(\gF) + M\sqrt{\frac{2\log(1/\delta)}{n}}\]
where $\mathfrak{R}_n(\gF)$ is the Rademacher complexity of $\gF$ defined in Definition~\ref{def:rademacher_complexity}.
\end{lemma}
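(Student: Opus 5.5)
We prove Lemma~\ref{lemma:ULLN_rademacher} by the standard route: a bounded-differences concentration step followed by symmetrization. Write
\[
\Phi(X_1,\ldots,X_n)=\sup_{f\in\gF}\left|\frac{1}{n}\sum_{i=1}^n f(X_i)-\sE[f(X)]\right| .
\]

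\textbf{Step 1 (concentration).} If a single coordinate $X_i$ is replaced by $X_i'$, each empirical average changes by at most $2M/n$, since $\|f\|_\infty\le M$. The supremum of absolute values inherits this bound, so $\Phi$ has bounded differences $c_i=2M/n$. McDiarmid's inequality then gives
\[
\sP\{\Phi-\sE\Phi\ge t\}\le \exp\!\left(-\frac{2t^2}{\sum_i c_i^2}\right)=\exp\!\left(-\frac{nt^2}{2M^2}\right).
\]
Setting the right-hand side equal to $\delta$ yields $t=M\sqrt{2\log(1/\delta)/n}$, which is exactly the deviation term in the statement.

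\textbf{Step 2 (symmetrization).} We show $\sE\Phi\le 2\mathfrak{R}_n(\gF)$.
\begin{enumerate}
\item Introduce an independent ghost sample $X_1',\ldots,X_n'$ and write $\sE[f(X)]=\sE'\bigl[\frac1n\sum_i f(X_i')\bigr]$.
\item Pull the supremum and absolute value outside $\sE'$ by Jensen's inequality. This gives
\[
\sE\Phi\le \sE\sup_{f}\Bigl|\frac1n\sum_i \{f(X_i)-f(X_i')\}\Bigr|.
\]
\item Each difference $f(X_i)-f(X_i')$ is symmetric in distribution and the pairs are independent. So inserting independent Rademacher signs $Z_i$ leaves the joint law unchanged.
\item Split the supremum with the triangle inequality. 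Each of the two resulting terms equals $\mathfrak{R}_n(\gF)$ as given in Definition~\ref{def:rademacher_complexity}, which uses absolute values. Hence $\sE\Phi\le 2\mathfrak{R}_n(\gF)$.
\end{enumerate}
Combining the two steps proves the claim with probability at least $1-\delta$.

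\textbf{Main technical point.} The one delicate issue is measurability of the supremum over a possibly uncountable $\gF$. We handle it in the usual way: take $\gF$ to be countable or pointwise separable, or interpret the expectations as outer expectations. For the parametric classes used later, indexed by a compact $\Theta$ with continuous dependence on the parameters, this holds because the supremum can be taken over a countable dense subset.
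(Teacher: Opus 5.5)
Your proposal is correct and takes essentially the same route as the paper. The paper simply cites \citet[Theorem 4.10]{wainwright2019high}, whose proof is exactly your McDiarmid step with $c_i=2M/n$, followed by ghost-sample symmetrization and the triangle inequality against the absolute-value definition of $\mathfrak{R}_n(\gF)$. One step could be stated more precisely: when inserting the signs, what you need is that swapping $X_i\leftrightarrow X_i'$ preserves the joint law of $\{f(X_i)-f(X_i')\}_{f\in\gF,\,i\le n}$ simultaneously over all $f$, not merely that each difference is symmetric; this follows directly from the i.i.d.\ structure you already invoke.
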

\begin{proof}[Proof of Lemma~\ref{lemma:ULLN_rademacher}]
See~\citet[Theorem 4.10]{wainwright2019high}.
\end{proof}

\begin{lemma}[Covering number of product of uniformly bounded functions]
\label{lemma:covering_number_product}
Let $\gH$ and $\gK$ be two classes of functions that are uniformly bounded by $M_{H}$ and $M_{K}$.
Define the product class
\[
\gF = \{f(x) = h(x)k(x) : h \in \gH, \, k \in \gK\}.
\]
Then, for any $\epsilon > 0$, the covering number of $\gF$ w.r.t. the empirical $L_2$ norm in~\eqref{eq:l2_norm} satisfies
\[
\gN(\epsilon, \gF, \|\cdot\|_{n}) 
\le \gN\Big(\frac{\epsilon}{M_{H}+M_{K}}, \gH, \|\cdot\|_{n}\Big) 
\cdot \gN\Big(\frac{\epsilon}{M_{H}+M_{K}}, \gK, \|\cdot\|_{n}\Big).
\]
\end{lemma}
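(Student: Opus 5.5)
The lemma follows from an explicit cover: we build an $\epsilon$-cover of $\gF$ out of products of covers of $\gH$ and $\gK$, and the two cardinalities multiply. Set $\epsilon' = \epsilon/(M_H+M_K)$. Let $\{h_1,\ldots,h_a\}$ be a minimal $\epsilon'$-cover of $\gH$ and $\{k_1,\ldots,k_b\}$ a minimal $\epsilon'$-cover of $\gK$, both with respect to $\|\cdot\|_n$, so $a=\gN(\epsilon',\gH,\|\cdot\|_n)$ and $b=\gN(\epsilon',\gK,\|\cdot\|_n)$. The candidate cover of $\gF$ is $\{h_ik_j\}$, which has at most $ab$ elements; bounding its radius by $\epsilon$ gives the claim.

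Fix $f=hk\in\gF$ and choose $h_i$, $k_j$ with $\|h-h_i\|_n\le\epsilon'$ and $\|k-k_j\|_n\le\epsilon'$. Use the add-and-subtract decomposition
\[
hk-h_ik_j=(h-h_i)k+h_i(k-k_j).
\]
Apply the triangle inequality for $\|\cdot\|_n$ and the pointwise bound $\|uv\|_n\le\|v\|_\infty\|u\|_n$, which holds because $\|uv\|_n^2=n^{-1}\sum_i u^2(X_i)v^2(X_i)$. This gives
\[
\|hk-h_ik_j\|_n\le M_K\|h-h_i\|_n+M_H\|k-k_j\|_n\le (M_H+M_K)\epsilon'=\epsilon .
\]
The bound uses $\|k\|_\infty\le M_K$, which holds since $k\in\gK$, and $\|h_i\|_\infty\le M_H$, which holds since $h_i\in\gH$.

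The only point needing care is that the cover elements themselves satisfy the uniform bound. This holds because the covering-number definition requires the cover to be a subset of the class, so every $h_i\in\gH$ is bounded by $M_H$. For the same reason each product $h_ik_j$ belongs to $\gF$, so $\{h_ik_j\}$ is an admissible $\epsilon$-cover of $\gF$ in the sense of the definition. Taking minimal covers of $\gH$ and $\gK$ then yields $\gN(\epsilon,\gF,\|\cdot\|_n)\le ab$, which is the stated inequality.
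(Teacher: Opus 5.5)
Your proof is correct and follows essentially the same route as the paper: you take products of $\epsilon/(M_H+M_K)$-covers of $\gH$ and $\gK$, split $hk-h_ik_j$ by adding and subtracting, and apply the bound $\|uv\|_n\le \max_i|v(X_i)|\,\|u\|_n$. Your decomposition places the uniform bound on $k$ and $h_i$, whereas the paper places it on $h$ and $\tilde k$; your remark that cover elements lie in their classes (so they are uniformly bounded and $h_ik_j\in\gF$) makes explicit a step the paper uses without comment.
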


\begin{proof}[Proof of Lemma~\ref{lemma:covering_number_product}]
For any $\epsilon > 0$, let $\delta = \frac{\epsilon}{M_{H}+M_{K}}$.  
Let $\gH_\delta$ be a $\delta$-cover of $\gH$ under $\|\cdot\|_{n}$, and let $\gK_\delta$ be a $\delta$-cover of $\gK$.  

For any $f = hk \in \gF$, select approximations $\tilde{h} \in \gH_\delta$ and $\tilde{k} \in \gK_\delta$ such that
\[
\|h - \tilde{h}\|_{n} \le \delta, 
\quad \|k - \tilde{k}\|_{n} \le \delta.
\]
Define $\tilde{f} := \tilde{h}\tilde{k}$.
Then, $f - \tilde{f} 
= hk - \tilde{h}\tilde{k}
= h(k - \tilde{k}) + (h - \tilde{h})\tilde{k}$.
We have
\[
\|f - \tilde{f}\|_{n} 
\le \|h(k - \tilde{k})\|_{n} + \|(h - \tilde{h})\tilde{k}\|_{n}.
\]
Next,
\[
\|h(k - \tilde{k})\|_{n} 
= \sqrt{\frac{1}{n}\sum_{i=1}^n h^2(X_i)\,(k(X_i)-\tilde{k}(X_i))^2}
\le \max_{1 \le i \le n} |h(X_i)| \cdot \|k - \tilde{k}\|_{n}
\le M_{H} \delta.
\]
Similarly,
\[
\|(h - \tilde{h})\tilde{k}\|_{n}
\le \max_{1 \le i \le n} |\tilde{k}(X_i)| \cdot \|h - \tilde{h}\|_{n}
\le M_{K} \delta.
\]
Combining, we get
\[
\|f - \tilde{f}\|_{n} 
\le (M_{H}+M_{K})\delta = \epsilon.
\]
In summary, the collection $\{\tilde{h}\tilde{k} \colon \tilde{h} \in \gH_\delta, \tilde{k} \in \gK_\delta\}$ forms an $\epsilon$-cover for $\gF$.
This gives the bound
\[
\gN(\epsilon, \gF, \|\cdot\|_{n})
\le \gN\Big(\frac{\epsilon}{M_{H}+M_{K}}, \gH, \|\cdot\|_{n}\Big) 
\cdot \gN\Big(\frac{\epsilon}{M_{H}+M_{K}}, \gK, \|\cdot\|_{n}\Big).
\]
This completes the proof.    
\end{proof}

\begin{lemma}[Covering number of pointwise maximum functions]
Let $\gF_1,\ldots, \gF_K$ be $K$ classes of functions.
Define the maximum class
\[
\gF_{\max} =  \left\{f(x) = \max_{k\in[K]}f_k(x) : f_k \in \gF_k \right\}.
\]
Then, for any $\epsilon > 0$, the covering number of $\gF_{\max}$ w.r.t. the empirical $L_2$ norm in~\eqref{eq:l2_norm} satisfies
\[
\gN(\epsilon, \gF_{\max}, \|\cdot\|_{n}) 
\le \prod_{k\in [K]}\gN\Big(\frac{\epsilon}{K}, \gF_k, \|\cdot\|_{n}\Big).
\]    
\end{lemma}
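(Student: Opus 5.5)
The plan is to imitate the proof of Lemma~\ref{lemma:covering_number_product}: build a cover of $\gF_{\max}$ directly out of covers of the individual classes. The key fact is that the pointwise maximum is $1$-Lipschitz coordinatewise in the sup-type sense:
\[
\Bigl|\max_{k\in[K]} a_k - \max_{k\in[K]} b_k\Bigr| \le \max_{k\in[K]} |a_k - b_k| \le \sum_{k\in[K]} |a_k-b_k|
\]
for all real $a_k, b_k$. I will apply this at each sample point $X_i$ and then pass to the empirical $L_2$ norm.

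\textbf{Step 1 (covers).} Fix $\epsilon>0$ and set $\delta=\epsilon/K$. For each $k\in[K]$, let $\gF_{k,\delta}$ be a minimal $\delta$-cover of $\gF_k$ under $\|\cdot\|_n$, so $|\gF_{k,\delta}| = \gN(\delta,\gF_k,\|\cdot\|_n)$. Define the candidate cover
\[
\mathcal{C}=\Bigl\{\max_{k\in[K]}\tilde f_k : \tilde f_k\in\gF_{k,\delta}\Bigr\}.
\]
Its cardinality is at most $\prod_{k\in[K]}\gN(\epsilon/K,\gF_k,\|\cdot\|_n)$.

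\textbf{Step 2 (approximation).} Take any $f=\max_k f_k\in\gF_{\max}$. For each $k$, choose $\tilde f_k\in\gF_{k,\delta}$ with $\|f_k-\tilde f_k\|_n\le\delta$, and set $\tilde f=\max_k\tilde f_k$. By the elementary inequality, at every $X_i$,
\[
|f(X_i)-\tilde f(X_i)|\le\sum_{k\in[K]}|f_k(X_i)-\tilde f_k(X_i)|.
\]
Since $\|\cdot\|_n$ is a seminorm that is monotone in absolute values, the triangle inequality gives
\[
\|f-\tilde f\|_n\le\sum_{k\in[K]}\|f_k-\tilde f_k\|_n\le K\delta=\epsilon.
\]
So $\mathcal{C}$ is an $\epsilon$-cover of $\gF_{\max}$, and the stated bound follows.

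\textbf{Main obstacle: membership of the cover.} The definition of an $\epsilon$-cover requires the cover to lie inside $\gF_{\max}$. This holds here provided each $\gF_{k,\delta}\subset\gF_k$, which the definition of covering number guarantees, so $\max_k\tilde f_k\in\gF_{\max}$. The only remaining check is the monotonicity step: if $|u|\le|v|$ pointwise on the sample, then $\|u\|_n\le\|v\|_n$, which is immediate from the definition in~\eqref{eq:l2_norm}. Everything else is routine.

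A sharper constant is available by using $\max_k|a_k-b_k|$ together with $\bigl(\sum_k\|\cdot\|_n^2\bigr)^{1/2}$, which gives $\epsilon/\sqrt K$. This is not needed, since the stated $\epsilon/K$ already follows.
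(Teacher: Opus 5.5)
Your proposal is correct and follows the paper's proof essentially step for step. Like the paper, you take $\delta=\epsilon/K$, cover each $\gF_k$, bound $|\max_k a_k-\max_k b_k|\le\sum_k|a_k-b_k|$ at each sample point, and apply the triangle inequality for $\|\cdot\|_n$ to get $K\delta=\epsilon$. Two of your points go beyond the paper and are both valid: the explicit check that $\max_k\tilde f_k\in\gF_{\max}$ (needed because covers are required to lie in the class), and the sharper $\epsilon/\sqrt{K}$ refinement.
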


\begin{proof}
For any $\epsilon > 0$, let $\delta = \frac{\epsilon}{K}$.  
Let $\gF_{k,\delta}$ be a $\delta$-cover of $\gF_{k}$ under $\|\cdot\|_{n}$.  
For any $f = \max_{k} f_k \in \gF_{\max}$, select approximations $\tilde{f}_k \in \gF_{k,\delta}$ such that $\|f_k - \tilde{f}_k\|_{n} \le \delta$.
Define $\tilde{f} := \max_{k} \tilde{f}_k$.
Then for any $x$, we have
\[
|f(x) - \tilde{f}(x)|\leq \max_{k\in[K]}|f_k(x)-\tilde{f}_k(x)| \leq \sum_{k\in [K]} |f_k(x)-\tilde{f}_k(x)|. 
\]
Combine this with the triangular inequality gives
\[
\begin{split}
\|f - \tilde{f}\|_{n} \leq&~\sqrt{\frac{1}{n}\sum_{i=1}^{n}\left(\sum_{k\in [K]} |f_k(X_i)-\tilde{f}_k(X_i)|\right)^2}\leq\sum_{k=1}^{K}\sqrt{\frac{1}{n}\sum_{i=1}^{n}\left(f_k(X_i)-\tilde{f}_k(X_i)\right)^2}\\
=&~\sum_{k\in [K]}\|f_k - \tilde{f}_k\|_{n}\leq K\delta = \epsilon.
\end{split}
\]
In summary, the collection $\{\tilde{f}_k \colon \tilde{f}_k \in \gF_{k,\delta}\}$ forms an $\epsilon$-cover for $\gF_{\max}$.
This gives the bound
\[
\gN(\epsilon, \gF_{\max}, \|\cdot\|_{n})
\le \prod_{k\in [K]}\gN\Big(\frac{\epsilon}{K}, \gF_{k}, \|\cdot\|_{n}\Big).
\]
This completes the proof.        
\end{proof}

%% file: sections/appendices/error_control_binary.tex
\subsubsection{Proof of type I error under binary classification}
\begin{proof}[Proof of Theorem~\ref{thm:binary_type1}]
To make the notation clearer, let $\pi^*(x)$ and $w^*$ denote the true values of $\pi(x)$ and $w$ in the data generating process.
Let $\widehat\pi(x)$ and $\widehat{w}$ be the corresponding estimators for $\pi^*(x)$ and $w^*$ in~\eqref{eq:binary_conditional} and~\eqref{eq:mele_estimator}, respectively. 
The final classifier is
\[
\widehat\phi_{\widehat\lambda}(x) = \mathbbm{1}\left(\widehat\lambda\leq r(x,\widehat\pi, \widehat{w})\right),
\]
where $\widehat\lambda$ is the solution to 
\[
\frac{1}{n}\sum_{i=1}^n \{1-\widehat{\pi}(X_i)\}\mathbbm{1}(\widehat\phi_{\lambda}(X_i)=1)=\frac{1}{n}\sum_{i=1}^n \{1-\widehat{\pi}(X_i)\}\mathbbm{1}\left(\lambda \leq r(X_i,\widehat{\pi}, \widehat{w})\right) = \alpha(1-\widehat w).
\]

Our goal is to show that the type I error $P_{0}(\{\widehat\phi_{\widehat\lambda}(X)\neq 0\})$ is controlled at level $\alpha$ in probability. 
Note that for any classifier $\phi$, the type I error can be rewritten as 
\[
P_{0}(\phi(X)\neq 0) = \frac{\sE_{X}[\{1-\pi^*(X)\}\mathbbm{1}(\phi(X)=1)]}{\sP(Y=0)}.
\]
We can therefore equivalently show that 
\[
\sE_{X}[\{1-\pi^*(X)\}\mathbbm{1}(\widehat\phi_{\widehat\lambda}(X)=1)]\leq \alpha(1-w^*) + O_{p}(n^{-1/2}).
\]

The difference of interest can be decomposed into
\[
\begin{split}
&\sE_{X}[\{1-\pi^*(X)\}\mathbbm{1}(\widehat\phi_{\widehat\lambda}(X)=1)] - \alpha(1-w^*) \\
= & \underbrace{\sE_{X}[\{\widehat\pi(X)-\pi^*(X)\}\mathbbm{1}(\widehat\phi_{\widehat\lambda}(X)=1)]}_{T_1} + \underbrace{\alpha(\widehat{w}-w^*)}_{T_2} \\
& + \underbrace{\left\{\sE_{X}[\{1-\widehat\pi(X)\}\mathbbm{1}(\widehat\phi_{\widehat\lambda}(X)=1)] - \frac{1}{n}\sum_{i=1}^n \{1-\widehat{\pi}(X_i)\}\mathbbm{1}(\widehat\phi_{\widehat\lambda}(X_i)=1)\right\}}_{T_3}.
\end{split}
\]
We now bound each term separately.

\noindent
\textbf{Bound on $T_1$:}
By the Lipschitz property of $\pi(\cdot)$ and the Cauchy-Schwarz inequality, we have
\[
|T_1| \leq \sE_{X}[|\widehat\pi(X)-\pi^*(X)|] \leq \frac{1}{4}\bigl(|\widehat\gamma - \gamma^*| + \|\widehat\beta - \beta^*\|_2\sE_{X}\|g(X)\|_2\bigr).
\]
Since $\sE_{X}\|g(X)\|_2<\infty$, this along with Theorem~\ref{thm:estimator_asymptotic_normality} implies that $|T_1| = O_p(n^{-1/2})$.

\noindent
\textbf{Bound on $T_2$:}
This term is straightforward.
Note that
\[
|T_2| = \alpha|\widehat{w}-w^*|.
\]
By the consistency of $\widehat{w}$, we have $|T_2| = O_p(n^{-1/2})$.

\noindent
\textbf{Bound on $T_3$:}    
The classifier can be expressed as a thresholding rule:
\[
\mathbbm{1}(\widehat\phi_{\widehat\lambda}(x)=1) = \mathbbm{1}\left(\widehat\pi(x) \geq \frac{\widehat\lambda \widehat{w}}{(1-\widehat{w}) + \widehat\lambda \widehat{w}}\right).
\]
Consider the class of functions:
\[
\gF = \left\{f_{\gamma,\beta,t}(x) = (1-\pi_{\gamma,\beta}(x))\mathbbm{1}(\pi_{\gamma,\beta}(x) \geq t) : \gamma \in \mathbb{R}, \|\beta\|_2 \leq B, t \in [0,1]\right\}
\]
where $\pi_{\gamma,\beta}(x) = \{1+\exp(-\gamma-\beta^{\top}g(x))\}^{-1}$.
We then have 
\[
T_3 \leq \sup_{f\in \gF} \left|\frac{1}{n}\sum_{i=1}^{n}f(X_i) - \sE f(X)\right|.
\]
Since $\gF$ is uniformly bounded by 1, by Lemma~\ref{lemma:ULLN_rademacher}, we have with probability at least $1-\delta$:
\[
T_3 \leq 2\mathfrak{R}_{n}(\gF) + \sqrt{\frac{2\log(1/\delta)}{n}},
\]
where $\mathfrak{R}_n(\gF)$ is the Rademacher complexity of $\gF$.
We have shown in Lemma~\ref{lemma:rademacher_weighted_logistic_threshold} that 
\[\mathfrak{R}_n(\gF) = C'\sqrt{\frac{d+1}{n}}\]
where $C'$ is the universal constant specified in Lemma~\ref{lemma:rademacher_weighted_logistic_threshold}.
Combining all bounds, we conclude that:
\[
\sE_{X}[\{1-\pi^*(X)\}\mathbbm{1}(\widehat\phi_{\widehat\lambda}(X)=1)] \leq \alpha(1-w^*) + O_p(n^{-1/2}),
\]
which completes the proof.
\end{proof}

\subsubsection{Rademacher complexity of weighted logistic classifiers}
\begin{lemma}[Covering number of weighted logistic threshold functions]
\label{lemma:covering_number_weighted_logistic_threshold}
Let $\|\cdot\|_{n}$ denote the empirical $L_2$-norm defined in Lemma~\ref{lemma:covering_number_product}. Consider the function class 
\[
\gF = \Bigl\{f_{\gamma,\beta,t}(x) = \bigl(1 - \pi_{\gamma,\beta}(x)\bigr)\, \mathbbm{1}\bigl(\pi_{\gamma,\beta}(x) \geq t\bigr) : 
|\gamma| \leq B, \|\beta\|_2 \leq B, t \in [0,1] \Bigr\},
\]
where $B > 0$ is a constant, and $\pi_{\gamma,\beta}(x) = \bigl\{1 + \exp\bigl(-\gamma - \beta^\top g(x)\bigr)\bigr\}^{-1}$ is the logistic function with $g(x)\in \sR^{d}$. 
Then, for any $\epsilon \in (0,1)$, the $\epsilon$-covering number of $\gF$ w.r.t. $\|\cdot\|_{n}$ satisfies
\[
\log \gN\bigl(\epsilon, \gF, \|\cdot\|_{n}\bigr) 
\leq 2C(d+1)\log\left(1+\frac{4B\max\{G_n,1\}}{\epsilon}\right).
\]
where $C > 0$ is a universal constant independent of $d$, $B$, $\epsilon$, and $G_n^2 = n^{-1}\sum_{i=1}^{n}\|g(X_i)\|_2^2$.
\end{lemma}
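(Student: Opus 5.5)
The plan is to cover the two factors of $f_{\gamma,\beta,t}$ separately and then combine the covers. We write $f_{\gamma,\beta,t} = h_{\gamma,\beta}\cdot k_{\gamma,\beta,t}$, where
\[
h_{\gamma,\beta}(x) = 1-\pi_{\gamma,\beta}(x), \qquad k_{\gamma,\beta,t}(x) = \mathbbm{1}(\pi_{\gamma,\beta}(x)\geq t).
\]
Both factors are bounded by $1$. By Lemma~\ref{lemma:covering_number_product} with $M_H=M_K=1$, it then suffices to bound the $\epsilon/2$-covering numbers of $\gH=\{h_{\gamma,\beta}\}$ and $\gK=\{k_{\gamma,\beta,t}\}$ under $\|\cdot\|_n$, and add their logarithms. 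This is where the factor $2$ in the bound comes from.

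\textbf{Cover of $\gH$.} Since the logistic function is $1/4$-Lipschitz, Cauchy--Schwarz gives
\[
\|h_{\gamma,\beta}-h_{\gamma',\beta'}\|_n \le \tfrac14\bigl(|\gamma-\gamma'| + \|\beta-\beta'\|_2\,G_n\bigr) \le \tfrac12\max\{G_n,1\}\,\|(\gamma,\beta)-(\gamma',\beta')\|_2 .
\]
So an $\epsilon/\max\{G_n,1\}$-net of the $(d+1)$-dimensional ball of radius $\sqrt2 B$ induces an $\epsilon/2$-cover of $\gH$. The standard volumetric bound then yields
\[
\log\gN(\epsilon/2,\gH,\|\cdot\|_n)\le (d+1)\log\bigl(1+4B\max\{G_n,1\}/\epsilon\bigr),
\]
after absorbing constants into $C$.

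\textbf{Cover of $\gK$.} Lipschitz reparametrization fails here because indicators are discontinuous, so this is the main obstacle. We instead use a VC argument. Since the logistic map is monotone, $\pi_{\gamma,\beta}(x)\ge t$ is equivalent to $\gamma+\beta^\top g(x)\ge \operatorname{logit}(t)$, with the conventions $t\in\{0,1\}$ giving the constant functions. Hence $\gK$ is contained in the class of half-space indicators $\{\mathbbm{1}(a+\beta^\top g(x)\ge 0)\}$ in $\sR^{d}$, whose VC dimension is $d+1$. Haussler's packing bound (equivalently, \citet{wainwright2019high}'s bound for VC classes) gives, for $\epsilon\in(0,1)$,
\[
\log\gN(\epsilon/2,\gK,\|\cdot\|_n)\le C(d+1)\log(1+c/\epsilon).
\]
This bound holds uniformly in the sample points. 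Because $B\max\{G_n,1\}\ge B$, we may, if needed, take $C$ large enough (it may depend only on a lower bound for $B$, or we assume $B\ge1$ w.l.o.g.) so that $\log(1+c/\epsilon)\le C'\log(1+4B\max\{G_n,1\}/\epsilon)$. The delicate point is exactly this constant bookkeeping, together with checking that the VC bound is stated for empirical $L_2$ rather than $L_1$. For $\{0,1\}$-valued functions, $\|k-k'\|_n^2=\|k-k'\|_{L_1(P_n)}$, so an $L_1$ cover at scale $\epsilon^2/4$ suffices and only changes the constant inside the logarithm by a factor of $2$.

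Adding the two logarithmic bounds gives
\[
\log\gN(\epsilon,\gF,\|\cdot\|_n)\le 2C(d+1)\log\bigl(1+4B\max\{G_n,1\}/\epsilon\bigr),
\]
as claimed.
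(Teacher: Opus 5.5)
Your argument follows the paper's proof. It uses the same factorization through Lemma~\ref{lemma:covering_number_product} with $M_H=M_K=1$, the same $\tfrac14$-Lipschitz volumetric cover of $\gH$, and the same reduction of $\gK$ to half-space indicators of VC dimension $d+1$, and then adds the two logarithms.

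The one thing to correct is the step you flagged yourself. Neither letting $C$ depend on a lower bound for $B$ nor assuming $B\ge 1$ is without loss of generality: the statement with $C$ independent of $B$ is false for small $B$. Since $t$ is free, $\gK$ is the full class of half-space indicators for every $B>0$. So the left-hand side does not shrink as $B\to0$, while the right-hand side tends to $0$.

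Concretely, take $d=1$, $g(x)=x$ and a dense sample in $[0,1]$, so $\max\{G_n,1\}=1$.
\begin{itemize}
\item The functions $(1-\pi_{0,B}(x))\mathbbm{1}(x\ge s)$, $s\in[0,1]$, lie in $\gF$ (choose $t=\pi_{0,B}(s)$).
\item For small $B$ their pairwise squared $\|\cdot\|_n$-distances are about $|s-s'|/4$.
\item Hence they contain a $2\epsilon$-separated family of size of order $\epsilon^{-2}$.
\end{itemize}
For a fixed small $\epsilon$, this contradicts the claimed bound for any fixed $C$ once $B$ is small enough.

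The paper's proof makes the same hidden assumption in its last line. There, $C(d+1)\log(4/\epsilon)\le C(d+1)\log\bigl(1+4B\max\{G_n,1\}/\epsilon\bigr)$ requires $B\max\{G_n,1\}\ge 1$. So both arguments establish the lemma under $B\ge 1$, or equivalently with $B$ replaced by $\max\{B,1\}$ on the right-hand side. You should state it that way. This costs nothing downstream: Rademacher complexity is monotone in the function class, so $B$ can simply be enlarged in Lemma~\ref{lemma:rademacher_weighted_logistic_threshold}.
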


\begin{proof}[Proof of Lemma~\ref{lemma:covering_number_weighted_logistic_threshold}]
Let $\sigma(z) = \{1 + \exp(-z)\}^{-1}$ denote the logistic sigmoid function. 
Define the auxiliary function classes:
\[
\begin{split}
\gH 
=&~\Bigl\{x\mapsto 1 - \sigma(\gamma+\beta^\top g(x)): |\gamma|\leq B, \|\beta\|_2 \leq B \Bigr\}, \\ 
\gK 
=&~\Bigl\{x\mapsto \mathbbm{1}\bigl(\sigma(\gamma+\beta^\top g(x)) \geq t\bigr): |\gamma|\leq B, \|\beta\|_2 \leq B, t\in[0,1]\Bigr\}.
\end{split}
\]
Since both $\gH$ and $\gK$ are uniformly bounded by $1$, Lemma~\ref{lemma:covering_number_product} implies
\[
\gN(\epsilon, \gF, \|\cdot\|_{n}) \leq \gN(\epsilon/2, \gH, \|\cdot\|_{n}) \cdot \gN(\epsilon/2, \gK, \|\cdot\|_{n}).
\]
We now bound each term separately.

\medskip
\noindent\textbf{Covering number of $\gH$.}  
Let $h_{\gamma,\beta}\in \gH$ denote the function $x\mapsto  1 - \sigma(\gamma + \beta^{\top}g(x))$.
Note that $\|\sigma'\|_{\infty} \leq 1/4$.
Thus, for any two parameters $(\gamma, \beta)$ and $(\gamma', \beta')$, we have
\[
\begin{split}
|\sigma(\gamma+\beta^\top g(x)) - \sigma(\gamma' + \beta'^\top g(x))|
\leq \frac{1}{4}\bigl(|\gamma - \gamma'| + |\beta - \beta'|\|g(x)\|_2\bigr).    
\end{split}
\]
This along with $(a+b)^2\leq 2a^2 +2b^2$ imply that the empirical $L_2$ norm satisfies
\[
\begin{split}
\|h_{\gamma,\beta} - h_{\gamma',\beta'}\|_n^2 & = \frac{1}{n}\sum_{i=1}^{n}\left(\sigma(\gamma+\beta^\top g(X_i)) - \sigma(\gamma' + \beta'^\top g(X_i))\right)^2 \\
& \leq\frac{1}{2}|\gamma - \gamma'|^2 +\frac{1}{2}\underbrace{\left\{\frac{1}{n}\sum_{i=1}^{n}\|g(X_i)\|_2^2\right\}}_{G_n^2}\|\beta - \beta'\|_2^2. 
\end{split}
\]
Therefore, if $|\gamma-\gamma'| \leq \epsilon$ and $\|\beta-\beta'\|_2 \leq \epsilon/G_n$, then  $\|h_{\gamma,\beta} - h_{\gamma',\beta'}\|_n \leq \epsilon$.
This implies that the $\epsilon$-covering number of $\gH$ is bounded by the product of:
\begin{itemize}
    \item The $\epsilon$-covering number of $[-B,B] \subset \mathbb{R}$
    \item The $\epsilon/G_n$-covering number of the radius $B$ ball in $\mathbb{R}^{d}$
\end{itemize}

Using standard bounded ball covering number results (see~\citet[Lemma 5.7]{wainwright2019high} for example), the $\epsilon$-covering of a radius $B$ ball in $\sR^{d}$ is at most 
\[\left(1+\frac{2B}{\epsilon}\right)^{d}.\]
Therefore, we have 
\[
\gN(\epsilon, \gH, \|\cdot\|_{n}) 
\leq \left(1+\frac{2B}{\epsilon}\right)\left(1+\frac{2BG_n}{\epsilon}\right)^{d}.
\]

\medskip
\noindent\textbf{Covering number of $\gK$.}  
The class $\gK$ consists of indicator functions of half-spaces in $\mathbb{R}^{d+1}$, as it can be rewritten as:
\[
\gK = \Bigl\{x \mapsto \mathbbm{1}(\gamma + \beta^{\top}g(x) \geq \log(t/(1-t))) : |\gamma| \leq B, \|\beta\|_2 \leq B, t \in [0,1]\Bigr\}.
\]
Hence, its  VC dimension is $d+1$.

By~\citet[Theorem 8.3.18]{vershynin2018high}, there exists some universal constant $C>0$ such that 
\[\gN(\epsilon, \gK, \|\cdot\|_{n}) \leq (2/\epsilon)^{C\cdot\text{VC}(\gK)},\]
where $\text{VC}(\gK)$ is the VC dimension of $\gK$.
Therefore, we have 
\[
\gN(\epsilon, \gK, \|\cdot\|_{n})
\leq \left(\frac{2}{\epsilon}\right)^{C(d+1)}.
\]
\medskip
\noindent\textbf{Covering number of $\gF$.}  
Putting everything together yields:
\[
\begin{split}
\log\gN(\epsilon, \gF, \|\cdot\|_{n})
\leq&~(d+1)\log\left(1+\frac{4B\max\{G_n,1\}}{\epsilon}\right)+C(d+1)\log\left(\frac{4}{\epsilon}\right)\\
\leq&~2C(d+1)\log\left(1+\frac{4B\max\{G_n,1\}}{\epsilon}\right).
\end{split}
\]
This completes the proof.
\end{proof}

\begin{lemma}[Rademacher complexity of weighted binary classifier]
\label{lemma:rademacher_weighted_logistic_threshold}
Let $\gF$ be the same function class defined in Lemma~\ref{lemma:covering_number_weighted_logistic_threshold}.
Assume that $\mathbb{E}\bigl[\|g(X)\|_2^2\bigr] \leq R^2$ for some $R > 0$, we have 
\[\mathfrak{R}_{n}(\gF) \leq C'\sqrt{\frac{d+1}{n}},\]
where $C'=24\sqrt{\pi}(1+2BR)C$ and $C$ is the universal constant in Lemma~\ref{lemma:covering_number_weighted_logistic_threshold}.
\end{lemma}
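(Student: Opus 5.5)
The plan is to apply Dudley's entropy integral (Lemma~\ref{lemma:dudley}) conditionally on the sample and feed in the covering-number bound of Lemma~\ref{lemma:covering_number_weighted_logistic_threshold}. Every $f\in\gF$ takes values in $[0,1]$. Hence $\sup_{f,f'\in\gF}\|f-f'\|_n\le 1$, and we may take the upper limit of the integral to be $B_0=1$.

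Conditional on $X_1,\ldots,X_n$, we then have
\[
\widehat{\mathfrak{R}}_n(\gF)\le \frac{24}{\sqrt n}\int_0^1\sqrt{2C(d+1)\log\Bigl(1+\frac{4B\max\{G_n,1\}}{t}\Bigr)}\,dt .
\]
To evaluate the integral, set $a=4B\max\{G_n,1\}$ and use $\log(1+a/t)\le a/t$ (or $\log(1+a/t)\le \log(1+a)+\log(1/t)$). The integral $\int_0^1\sqrt{a/t}\,dt=2\sqrt a$ is finite. A cleaner route is the bound $\sqrt{\log(1+a/t)}\le\sqrt{\log(1+a)}+\sqrt{\log(1/t)}$ together with $\int_0^1\sqrt{\log(1/t)}\,dt=\sqrt\pi/2$. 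This is where the factor $\sqrt\pi$ in $C'$ comes from. Either way, we get
\[
\widehat{\mathfrak R}_n(\gF)\le 24\sqrt{\pi}\,C\,(1+2B\max\{G_n,1\})\sqrt{\tfrac{d+1}{n}}
\]
after absorbing numerical constants; I would tune the elementary inequalities to land exactly on the stated constant. Note that $\sqrt{2C}\le C$ can be arranged by taking $C\ge 2$ without loss of generality, and $\sqrt{\log(1+a)}\le 1+a/2$-type bounds are crude but linear in $G_n$.

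The final step is to take the expectation over the sample. The bound is affine in $\max\{G_n,1\}\le 1+G_n$. By Jensen, $\sE G_n\le(\sE G_n^2)^{1/2}=(\sE\|g(X)\|_2^2)^{1/2}\le R$ under the second-moment assumption (Assumption~\ref{assumption:bounded_second_moment}). This gives $\mathfrak R_n(\gF)\le C'\sqrt{(d+1)/n}$ with $C'$ of the form $24\sqrt\pi(1+2BR)C$, possibly up to adjusting how the ``$1+$'' and the max are combined.

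The main obstacle is bookkeeping rather than ideas. We must make sure the dependence on $G_n$ enters linearly, not inside a square root of a log that resists Jensen; both bounding routes above give linear dependence. We must also handle the singularity at $t\to0$ correctly, which is integrable. If the exact constant proves awkward, I would state the bound with a generic universal constant multiplying $(1+BR)$, since only the $n^{-1/2}$ rate is used in Theorem~\ref{thm:binary_type1}.
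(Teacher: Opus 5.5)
Your proposal follows the paper's proof essentially step for step: Dudley's bound with diameter $1$, the covering estimate of Lemma~\ref{lemma:covering_number_weighted_logistic_threshold}, a bound on $\int_0^1\sqrt{\log(1+a/t)}\,dt$ that is affine in $a=4B\max\{G_n,1\}$ and supplies the $\sqrt{\pi}$, and then $\sE\max\{G_n,1\}\le 1+\sE G_n\le 1+R$ via Jensen. Your hedge on the constant is warranted: this chain gives $24\sqrt{\pi}\,C\,(1+2B+2BR)$ rather than exactly $24\sqrt{\pi}(1+2BR)C$, and the paper's own final step has the same problem (it writes $2+4B\sE G_n$ where $1+4B+4B\sE G_n$ is needed, and also loses a factor $\sqrt{2}$), which does not matter because only the $n^{-1/2}$ rate is used in Theorem~\ref{thm:binary_type1}.
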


\begin{proof}[Proof of Lemma~\ref{lemma:rademacher_weighted_logistic_threshold}]
Note that we have $\sup_{f,g\in \gF}\|f-g\|_{n} = 1$.
By Lemma~\ref{lemma:dudley},  we have 
\[
\begin{split}
\widehat{\mathfrak{R}}_{n}(\gF)\leq&~\frac{24}{\sqrt{n}}\int_{0}^{1}\sqrt{\log \gN(t;\gF, \|\cdot\|_n)}\,dt\\ 
\leq&~
24\sqrt{2}C\sqrt{\frac{d+1}{n}}\int_{0}^{1}\sqrt{\log(1+4B\max\{G_n,1\}/t)}\,dt\\
\leq&~24\sqrt{\pi/2}C\sqrt{\frac{d+1}{n}}(1+4B\max\{G_n,1\})
\end{split}
\]

Since $\sE\max\{G_n, 1\}\leq \sE(G_n)+1$, we get
\[
\begin{split}
\mathfrak{R}_{n}(\gF) =&~\sE_{X_1,\ldots, X_n}\{\widehat{\mathfrak{R}}_{n}(\gF)\}\\ 
\leq &~24\sqrt{\pi/2}C\sqrt{\frac{d+1}{n}}\{2+4B\sE_{X_1,\ldots, X_n}(G_n)\} \leq C'\sqrt{\frac{d+1}{n}}
\end{split}
\]
where $C' = 24\sqrt{\pi}(1+2BR)C$ and the last inequality is based on Jensen's inequality.
This completes the proof.
\end{proof}


%% file: sections/appendices/error_control_multiclass.tex
\subsubsection{Proof of class-specific error control}
\begin{proof}[Proof of Theorem~\ref{thm:multiclass_error}]
Recall that 
\[R_k(\phi) = \sP(\phi(X) \neq k | Y=k)\]
Then by the strong duality, the optimal classifier is a feasible solution that satisfies $R_k(\phi^*_{\blambda^*}) \leq \alpha_k$.
Therefore, we have 
\[R_k(\widehat\phi_{\widehat\blambda}) - \alpha_k = R_k(\widehat\phi_{\widehat\blambda}) - R_k(\phi^*_{\blambda^*}) + R_k(\phi^*_{\blambda^*}) - \alpha_k \leq R_k(\widehat\phi_{\widehat\blambda}) - R_k(\phi^*_{\blambda^*}).\]
Hence, we only need to bound
\[R_k(\widehat\phi_{\widehat\blambda}) - R_k(\phi^*_{\blambda^*}) \leq \sP(\widehat\phi_{\widehat\blambda}(X)\neq \phi^*_{\blambda^*}(X)|Y=k).\]

Note that the RHS is a random variable that depends on the training data $\gD^{\text{train}}$.
We wish to show that it converges to $0$ in probability as $n \to \infty$.
Since the random variable is bounded uniformly by 1, by Markov's inequality it suffices to show that
\[
\lim_{n\to \infty}\sE_{\gD^{\text{train}}}\left\{ \sP_X(\widehat\phi_{\widehat\blambda}(X)\neq \phi^*_{\blambda^*}(X)|Y=k) \right\} = 0.
\]

Using the dominant convergence theorem, we can exchange limits and expectations to see that the LHS is equal to
\[
\sE_{X}\left\{\lim_{n\to \infty}\sP_{\gD^{\text{train}}}(\widehat\phi_{\widehat\blambda}(X)\neq \phi^*_{\blambda^*}(X)\big|Y=k) \right\}
\]
if the inner limit exists pointwise.
It thus suffices to show that
\[
\lim_{n\to \infty} \sP_{\gD^{\text{train}}}(\widehat\phi_{\widehat\blambda}(X)\neq \phi^*_{\blambda^*}(X)) = 0
\]
holds almost surely in $X$.
For fixed $X$, we may write
\[
\phi^*_{\blambda^*}(X) = \arg\max_{k} Z_k, \text{ with } Z_k = \frac{\rho_k + \blambda_k^*\mathbbm{1}(k\in \gS)}{w_k^*}\pi_k^*(X)
\]
\[
\widehat\phi_{\widehat\blambda}(X) = \arg\max_{k} \widehat{Z}_k, \text{ with } \widehat{Z}_k = \frac{\rho_k + \widehat{\blambda}_k\mathbbm{1}(k\in \gS)}{\widehat{w}_k}\widehat{\pi}_k(X).
\]
Almost surely, there are no ties among $\{Z_k\}_{k=0}^{K-1}$, that is, $\eta > 0$ where $\eta$ is the gap between the largest and second largest entries in this set.
By Lemma~\ref{lemma:lagrange_multiplier} and Theorem~\ref{thm:estimator_asymptotic_normality}, we have 
\[
\widehat{Z}_k \overset{p}{\to} Z_k
\]
as $n\to\infty$.
As such, for any $\epsilon > 0$, for $n$ large enough, we have
\[
\sP\left(\max_{k\in[K]} |\widehat{Z}_k - Z_k | < \eta/2\right) \geq 1 - \epsilon.
\]
On this high probability set, we have that $\phi^*_{\blambda^*}(X) = \widehat\phi_{\widehat\blambda}(X)$, which implies that
\[
\sP_{\gD^{\text{train}}}\left(\widehat\phi_{\widehat\blambda}(x)= \phi^*_{\blambda^*}(X)\right) \geq 1 - \epsilon.
\]
Since this holds for all $\epsilon > 0$, the conclusion follows.
\end{proof}

\subsubsection{Asymptotic convergence of dual problem}
\begin{lemma}[Uniform convergence of dual objective]
\label{lemma:uniform_convergence}
Under Assumptions~\ref{assumption:bounded_second_moment} and~\ref{assump:compactness_multiclass}, for any bounded set $\Lambda\subseteq \sR_{+}^{|\sS|}$, the empirical dual objective $\widehat{G}(\lambda)$ converges uniformly to $G(\lambda)$ almost surely:
\[\sP\left(\lim_{n\to\infty} \sup_{\lambda\in\Lambda}|\widehat{G}(\blambda) - G(\blambda)|=0\right)=1.\]
\end{lemma}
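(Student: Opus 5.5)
We will compare $\widehat G$ and $G$ through the population objective evaluated at the estimated parameters. Write the dual objective as
\[
G(\blambda)=\sum_{k}\rho_k+\sum_{k\in\gS}\lambda_k(1-\alpha_k)-\sE_X\Bigl\{\max_{k\in[K]} c_k(\blambda,\vw^*)\pi_k^*(X)\Bigr\},
\]
using $c_{\phi^*_{\blambda}(x)}\pi^*_{\phi^*_{\blambda}(x)}(x)=\max_k c_k\pi_k^*(x)$. The estimator has the same form,
\[
\widehat G(\blambda)=\text{const}(\blambda)-n^{-1}\sum_{i}\max_k c_k(\blambda,\widehat\vw)\widehat\pi_k(X_i).
\]
The linear parts agree exactly. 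It therefore suffices to control
\[
\sup_{\blambda\in\Lambda}\Bigl|n^{-1}\textstyle\sum_i F_{\widehat\vartheta,\blambda}(X_i)-\sE F_{\vartheta^*,\blambda}(X)\Bigr|,\qquad F_{\vartheta,\blambda}(x)=\max_k c_k(\blambda,\vw)\pi_k(x;\bgamma,\bbeta).
\]
Here $\vartheta=(\vw,\bgamma,\bbeta)$. We split this quantity into an empirical-process term $\sup_{\vartheta,\blambda}|(P_n-P)F_{\vartheta,\blambda}|$ and a bias term $\sup_{\blambda}|\sE F_{\widehat\vartheta,\blambda}-\sE F_{\vartheta^*,\blambda}|$.

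\textbf{Bias term.} We restrict to a neighbourhood of $\vw^*$ on which $\min_k w_k\ge c_0>0$; Assumption~\ref{assump:compactness_multiclass} allows this. On that set, $c_k(\blambda,\vw)$ is bounded and Lipschitz in $\vw$, uniformly over bounded $\Lambda$. The map $\max_k$ is $1$-Lipschitz in sup norm. The softmax $\pi_k$ is Lipschitz in $(\bgamma,\bbeta)$ with constant proportional to $1+\|g(x)\|_2$. Hence
\[
|F_{\widehat\vartheta,\blambda}(x)-F_{\vartheta^*,\blambda}(x)|\le C(1+\|g(x)\|_2)\|\widehat\vartheta-\vartheta^*\|,
\]
uniformly in $\blambda\in\Lambda$. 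Assumption~\ref{assumption:bounded_second_moment} gives $\sE\|g(X)\|_2<\infty$, so the bias term is at most $C'\|\widehat\vartheta-\vartheta^*\|$. This tends to $0$: in probability by Theorem~\ref{thm:estimator_asymptotic_normality}, and almost surely using the strong consistency of the MELE from the \citet{qin1994empirical} argument invoked in that proof.

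\textbf{Empirical-process term.} The class $\{F_{\vartheta,\blambda}\}$ is uniformly bounded by $M=\max_k(\rho_k+\sup_\Lambda\lambda_k)/c_0$. We bound its covering number with the lemma on covering numbers of pointwise maxima. Each component class $\{c_k(\blambda,\vw)\pi_k(\cdot;\bgamma,\bbeta)\}$ is a product of a bounded scalar family (covered by a grid over the compact set of $(\blambda,\vw)$) and a softmax class. The softmax class is covered exactly as $\gH$ in Lemma~\ref{lemma:covering_number_weighted_logistic_threshold}, which gives log-covering $\lesssim K(d+1)\log(1+C\max\{G_n,1\}/\epsilon)$. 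Lemma~\ref{lemma:covering_number_product} then combines the two factors. Dudley's bound (Lemma~\ref{lemma:dudley}), together with the Jensen step from Lemma~\ref{lemma:rademacher_weighted_logistic_threshold}, yields $\mathfrak R_n=O(n^{-1/2})$.

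Lemma~\ref{lemma:ULLN_rademacher} with $\delta_n=n^{-2}$ gives deviations $O(\sqrt{\log n/n})$ with probability at least $1-n^{-2}$. Borel--Cantelli then upgrades this to almost-sure convergence to $0$.

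\textbf{Main obstacle.} The delicate step is getting an \emph{almost sure} statement while the weights $1/w_k$ are unbounded near the boundary, since a sup over all of $\Theta$ could blow up. We handle this by working on the event $\{\|\widehat\vartheta-\vartheta^*\|\le r\}$, which eventually holds a.s., with $r$ chosen so that $\widehat w_k\ge w_k^*/2$. All covering bounds are then taken over that compact neighbourhood. If only in-probability consistency of $\widehat\vartheta$ were available, the conclusion would degrade to convergence in probability, which would still suffice for Theorem~\ref{thm:multiclass_error}.
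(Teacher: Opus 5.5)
Your proposal is correct and follows essentially the same route as the paper. Both arguments split $\sup_{\blambda\in\Lambda}|\widehat G(\blambda)-G(\blambda)|$ into the same two pieces. The first is an empirical-process term, controlled through covering numbers of the max/product class, Dudley's bound and Lemma~\ref{lemma:ULLN_rademacher}. The second is a bias term, controlled by Lipschitz continuity of $c_k$ and of the softmax together with consistency of $(\widehat\vw,\widehat\bgamma,\widehat\bbeta)$. Your proof is more careful than the paper's on three points it passes over: the Borel--Cantelli step with $\delta_n=n^{-2}$, the localization keeping $\widehat w_k$ bounded away from zero, and your observation that the almost-sure claim for the bias term requires strong consistency of the MELE, which Theorem~\ref{thm:estimator_asymptotic_normality} does not supply.
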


\begin{proof}
Let $\pi_k^*(x) = \mathbb{P}(Y=k|X=x)$ denote the true conditional probability and $\widehat{\pi}_k(x) = \widehat{\mathbb{P}}(Y=k|X=x)$ its empirical estimate. 
We decompose the difference as follows:
\[
\begin{split}
|\widehat{G}(\blambda) - G(\blambda)|
=&~\left|\frac{1}{n}\sum_{i=1}^{n}\max_{k} \left\{c_{k}(\blambda, \widehat{\vw})\widehat\pi_k(X_i)\right\} - \sE_{X}\left[\max_{k} \left\{c_{k}(\blambda, \vw^*)\pi_k^*(X)\right\}\right]\right|\\
\leq &~\left|\frac{1}{n}\sum_{i=1}^{n}\max_{k} \left\{c_{k}(\blambda, \widehat{\vw})\widehat\pi_k(X_i)\right\} - \sE_{X}\left[\max_{k} \left\{c_{k}(\blambda, \widehat\vw)\widehat\pi_k(X)\right\}\right]\right|\\
&+\left|\sE_{X}\left[\max_{k} \left\{c_{k}(\blambda, \widehat{\vw})\widehat\pi_k(X)\right\}\right] - \sE_{X}\left[\max_{k} \left\{c_{k}(\blambda, \vw^*)\pi_k^*(X)\right\}\right]\right|
\end{split}
\]
where $c_k(\blambda, \vw) = \{\rho_k + \lambda_k \mathbbm{1}(k\in \gS)\}/w_k.$
Taking supremum over $\blambda\in \Lambda$ yields:
\[
\begin{split}
\sup_{\blambda\in \Lambda}|\widehat{G}(\blambda) - G(\blambda)| \leq&~\underbrace{\sup_{f\in \gF}\left|\frac{1}{n}\sum_{i=1}^{n} f(X_i) - \sE f(X)\right|}_{T_1} \\
&+ \underbrace{\sup_{\blambda\in \Lambda}\left|\sE_{X}\left[\max_{k} \left\{c_{k}(\blambda, \widehat{\vw})\widehat\pi_k(X)\right\}\right] - \sE_{X}\left[\max_{k} \left\{c_{k}(\blambda, \vw^*)\pi_k^*(X)\right\}\right]\right|}_{T_2},
\end{split}
\]
where 
\[
\gF=\left\{f_{\bgamma,\bbeta,\vw, \blambda}(x) = \max_{k}\{c_k(\blambda, \vw)\pi_k(x)\}: 
\|\gamma\|_2 \leq B, \|\beta\|_2 \leq B, w_k \geq c_k>0,\blambda\in\Lambda\right\}
\]
and $\pi_k(x) = \exp(\gamma_k + \beta_k^{\top} g(x))/\sum_{k'}\exp(\gamma_{k'} + \beta_{k'}^{\top} g(x))$.
We now bound each of $T_1$ and $T_2$ respectively.

\noindent
\textbf{Bound $T_1$:}
Since $\gF$ is uniformly bounded and satisfies the conditions of Lemma~\ref{lemma:ULLN_rademacher}, we have with probability at least $1-\delta$:
\[
T_1 \leq 2\mathfrak{R}_{n}(\gF) + \sqrt{\frac{2\log(1/\delta)}{n}},
\]
where $\mathfrak{R}_n(\gF)$ is the Rademacher complexity of $\gF$.
By Lemma~\ref{lemma:rademacher_multiclass}, there exists a constant $C'>0$ such that:
\[
\mathfrak{R}_n(\mathcal{F}) \leq C'\sqrt{\frac{d+3}{n}}.
\]
Thus, $T_1 \to 0$ almost surely as $n \to \infty$.

\noindent
\textbf{Bound $T_2$:}
Using the inequality $|\max_k f_k - \max_k g_k| \leq \max_k |f_k - g_k|\leq \sum_{k} |f_k-g_k|$, we obtain:
\[
\begin{split}
T_2\leq & \sum_{k} \sup_{\blambda\in \Lambda}\sE_{X}\left|c_{k}(\blambda, \widehat{\vw})\widehat\pi_k(X) - c_{k}(\blambda, \vw^*)\pi_k^*(X)\right|\\
\leq & \sum_{k} \sup_{\blambda\in \Lambda}c_{k}(\blambda, \widehat{\vw})\sE_{X}\left|\widehat\pi_k(X)-\pi_k^*(X)\right| + \sum_{k} \sE_{X}(\pi_k^*(X)) \sup_{\blambda\in \Lambda}|c_{k}(\blambda, \widehat\vw)-c_{k}(\blambda, \vw^*)|.
\end{split}
\]
Since $w_k^*$ is bounded away from zero and $\pi_k(x)$ is Lipschitz in its parameters, and $\widehat{\vw} \to \vw^*$ by asymptotic normality, we have:
\[
T_2 \leq C\left[\|\widehat\bgamma-\bgamma^*\|_2 + \|g(X)\|_2 \|\widehat\bbeta -\bbeta^*\|_2 + \sum_{k} \left\{\frac{1}{\widehat{w}_k} - \frac{1}{w_k^*}\right\}\right] \to 0
\]
almost surely as $n \to \infty$.

Combining these results yields the desired uniform convergence:
\[\sP\left(\lim_{n\to\infty} \sup_{\lambda\in\Lambda}|\widehat{G}(\blambda) - G(\blambda)|=0\right)=1.\]
\end{proof}

\begin{lemma}[Convergence of Lagrange multiplier]
\label{lemma:lagrange_multiplier}
Under Assumptions~\ref{assumption:bounded_second_moment},~\ref{assump:compactness_multiclass}, and~\ref{assump:local_strong_concavity}, we have
\[\widehat\blambda\overset{a.s.}{\longrightarrow} \blambda^*\quad\text{as}\quad n\to\infty.\]
\end{lemma}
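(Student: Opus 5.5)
This is a standard M-estimation consistency argument for the maximizer of a random concave function; it combines Lemma~\ref{lemma:uniform_convergence} with Assumption~\ref{assump:local_strong_concavity}. Two facts make it work. First, $G$ is concave, as recalled in Section~\ref{sec:NPMC_review}. Second, $\widehat G$ is also concave in $\blambda$. To see the second fact, note that for fixed data $\widehat G(\blambda)$ equals $\sum_k\rho_k+\sum_{k\in\gS}\lambda_k(1-\alpha_k)$ minus $n^{-1}\sum_i\max_k\{c_k(\blambda,\widehat\vw)\widehat\pi_k(X_i)\}$. Each $c_k(\blambda,\widehat\vw)$ is affine in $\blambda$, so the pointwise maximum is convex, and hence $\widehat G$ is concave.

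\textbf{Step 1: $\blambda^*$ is a strict maximizer.} By Assumption~\ref{assump:local_strong_concavity}, $\nabla^2G(\blambda^*)\prec0$, and $G$ is $C^2$. So there exist $r>0$ and $\kappa>0$ with $G(\blambda)\le G(\blambda^*)-\kappa\|\blambda-\blambda^*\|^2$ on the ball $\bar B(\blambda^*,r)\cap\sR_+^{|\gS|}$. This uses that $\blambda^*$ maximizes over the orthant, so the first-order term $\langle\nabla G(\blambda^*),\blambda-\blambda^*\rangle$ is nonpositive for feasible directions. Concavity of $G$ then extends the separation outside the ball. For any $\blambda$ with $\|\blambda-\blambda^*\|\ge r$, the point $\blambda^*+r(\blambda-\blambda^*)/\|\blambda-\blambda^*\|$ lies on the sphere, and concavity along the segment gives
\[
G(\blambda)\le G(\blambda^*)-\kappa r\|\blambda-\blambda^*\|\le G(\blambda^*)-\kappa r^2 .
\]

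\textbf{Step 2: localize $\widehat\blambda$.} Set $\Lambda=\bar B(\blambda^*,r)\cap\sR_+^{|\gS|}$, which is bounded. By Lemma~\ref{lemma:uniform_convergence}, almost surely $\Delta_n:=\sup_{\Lambda}|\widehat G-G|\to0$. Take $n$ large enough that $\Delta_n<\kappa r^2/2$. On the sphere $\partial\Lambda$ we then have
\[
\widehat G\le G(\blambda^*)-\kappa r^2+\Delta_n<\widehat G(\blambda^*).
\]
Now suppose $\widehat\blambda$ lay outside $\Lambda$. The segment from $\blambda^*$ to $\widehat\blambda$ crosses the sphere at some point $\blambda'$. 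Concavity of $\widehat G$ gives $\widehat G(\blambda')\ge\min\{\widehat G(\blambda^*),\widehat G(\widehat\blambda)\}=\widehat G(\blambda^*)$, which contradicts the strict inequality above. Hence $\widehat\blambda\in\Lambda$ eventually. The same argument applies if $\widehat\blambda$ is only an approximate maximizer, for example the output of Hooke--Jeeves over $[0,200]^{|\gS|}$, as long as its optimization error vanishes. In that case I would state the lemma for the exact argmax and note this caveat.

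\textbf{Step 3: rate inside the ball.} On $\Lambda$ we have $G(\blambda^*)-G(\widehat\blambda)\le\widehat G(\blambda^*)-\widehat G(\widehat\blambda)+2\Delta_n\le2\Delta_n$. Combined with the quadratic bound from Step 1, this gives $\kappa\|\widehat\blambda-\blambda^*\|^2\le2\Delta_n$. Therefore $\|\widehat\blambda-\blambda^*\|\le\sqrt{2\Delta_n/\kappa}\to0$ almost surely.

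\textbf{Main obstacle.} The delicate point is that Lemma~\ref{lemma:uniform_convergence} only gives uniform convergence on bounded sets, while $\widehat\blambda$ is a maximizer over the whole unbounded orthant. Concavity of $\widehat G$, used in Step 2, is what removes the need for any a priori bound on $\widehat\blambda$. A secondary issue is the boundary case where $\blambda^*$ has zero coordinates. There $\nabla G(\blambda^*)\neq0$ is possible, but the one-sided first-order condition in Step 1 is still enough for the quadratic separation over the orthant.
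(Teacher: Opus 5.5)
Your proof is correct and follows essentially the same route as the paper. Both arguments combine concavity of $\widehat{G}$ with strict separation of $\widehat{G}(\blambda^*)$ from the values of $\widehat{G}$ on a sphere around $\blambda^*$; that separation comes from $\nabla^2 G(\blambda^*)\prec 0$ together with the uniform convergence lemma on a bounded ball, and it forces $\widehat{\blambda}$ inside the ball. Your version is somewhat more careful: you justify the concavity of $\widehat{G}$, handle a boundary $\blambda^*$ via the one-sided first-order condition, make the gap explicitly positive once $\Delta_n<\kappa r^2/2$ (the paper's final display only bounds it below by $-2\sup|\widehat{G}-G|$), and obtain the rate $\|\widehat{\blambda}-\blambda^*\|\le\sqrt{2\Delta_n/\kappa}$ instead of repeating the argument for every $\epsilon$.
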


\begin{proof}[Proof of Lemma~\ref{lemma:lagrange_multiplier}]
For any $\epsilon>0$, to show that $\|\widehat\blambda - \blambda^*\|_2 \leq \epsilon$, it suffices to show that 
\begin{equation} \label{eq:lagrange_multiplier_conv_helper}
    \widehat{G}(\blambda^*) > \sup_{\blambda\in \partial B_{\epsilon}(\blambda^*)} \widehat{G}(\blambda).
\end{equation}
Then by the concavity of $\widehat{G}$, we must have $\widehat\blambda\in B_{\epsilon}(\blambda^*)$.
As such, if we can show that \eqref{eq:lagrange_multiplier_conv_helper} holds almost surely as $n \to \infty$, we have
\[\sP\left(\limsup_{n\to\infty} \|\widehat{\blambda}-\blambda^*\|_2 >\epsilon\right)=0.\]
If this holds for all $\epsilon > 0$, we get $\widehat\blambda\overset{a.s.}{\longrightarrow} \blambda^*$ as $n\to\infty$.

To verify \eqref{eq:lagrange_multiplier_conv_helper}, observe that
\[
\begin{split}
&~\widehat{G}(\blambda^*) - \sup_{\blambda\in \partial B_{\tau}(\blambda^*)} \widehat{G}(\blambda) \\
\geq &~\left\{G(\blambda^*) - \sup_{\blambda\in \partial B_{\tau}(\blambda^*)} G(\blambda)\right\} - 2\sup_{\blambda\in \bar{B}_{\epsilon}(\blambda^*)}|\widehat{G}(\blambda)-G(\blambda)|\\ 
>&~ - 2\sup_{\blambda\in \bar{B}_{\epsilon}(\blambda^*)}|\widehat{G}(\blambda)-G(\blambda)|,
\end{split}
\]
where the strict inequality follows from $\nabla^2 G(\blambda^*) \prec 0$, which guarantees $\blambda^*$ is the unique maximizer of $G$ in $\bar{B}_{\epsilon}(\blambda^*)$.
By Lemma~\ref{lemma:uniform_convergence}, $\sup_{\blambda \in \overline{B}_{\epsilon}(\blambda^*)} |\widehat{G}(\blambda) - G(\blambda)| \to 0$ almost surely as $n \to \infty$. Thus, \eqref{eq:lagrange_multiplier_conv_helper} holds almost surely, completing the proof.
\end{proof}

\subsubsection{Rademacher complexity of weighted multinomial logistic classifiers}

\begin{lemma}[Covering number of weighted multinomial logistic classifier]
\label{lemma:covering_number_multiclass}
Let $\|\cdot\|_{n}$ denote the empirical $L_2$-norm defined in Lemma~\ref{lemma:covering_number_product}. 
Let $B>0$ be some fixed constant, $\Lambda\subseteq \sR_{+}^{|\sS|}$ be any bounded set, and
\[\gF=\left\{f_{\bgamma,\bbeta,\vw, \blambda}(x) = \max_{k}\{c_k(\blambda, \vw)\pi_k(x)\}: 
\|\gamma\|_2 \leq B, \|\beta\|_2 \leq B, w_k \geq 1/B,\blambda\in\Lambda\right\}
\]
and $\pi_k(x) = \exp(\gamma_k + \beta_k^{\top} g(x))/\sum_{k'}\exp(\gamma_{k'} + \beta_{k'}^{\top} g(x))$.

Then, for any $\epsilon>0$, the $\epsilon$-covering number of $\gF$ w.r.t. $\|\cdot\|_{n}$ satisfies
\[
\log \gN\bigl(\epsilon, \gF, \|\cdot\|_{n}\bigr) 
\leq K^2(d+3)\log\left(1+\frac{2BG_n}{\epsilon}\right)
\]
where $G_n^2=K(A/4)^2 \left\{1 + n^{-1}\sum_{i=1}^{n}\|g(X_i)\|_2^2\right\} + A^2B^2 + B^2$ and $A=\max_{k} (\rho_k + B)B$.
\end{lemma}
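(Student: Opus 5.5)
The plan mirrors the proof of Lemma~\ref{lemma:covering_number_weighted_logistic_threshold}. We show that the map from the parameter vector $(\bgamma,\bbeta,\vw,\blambda)$ to $f_{\bgamma,\bbeta,\vw,\blambda}$ is Lipschitz in the empirical norm $\|\cdot\|_n$. We then cover the (bounded) parameter set by a Euclidean net, and the Lipschitz bound transfers the net to a cover of $\gF$.

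\textbf{Step 1 (reduction to components).} Write $f=\max_k f_k$ with $f_k(x)=c_k(\blambda,\vw)\pi_k(x)$. By the pointwise-maximum covering lemma, $\gN(\epsilon,\gF,\|\cdot\|_n)\le\prod_k\gN(\epsilon/K,\gF_k,\|\cdot\|_n)$, where $\gF_k$ is the class of $f_k$'s. Alternatively, we can bound $\|f-\tilde f\|_n\le\sum_k\|f_k-\tilde f_k\|_n$ directly, so that one joint net over all parameters suffices. The joint net is cleaner, and it yields the $K^2$ factor, since each coordinate block is counted $K$ times.

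\textbf{Step 2 (Lipschitz bounds).} On the parameter set, $0\le c_k\le(\rho_k+\lambda_k)/w_k\le(\rho_k+B)B\le A$, assuming $\Lambda\subseteq[0,B]^{|\gS|}$ (enlarge $B$ if needed). We also have $|\partial c_k/\partial\lambda_k|\le B$ and $|\partial c_k/\partial w_k|\le AB$, using $w_k\ge1/B$. The softmax map has Jacobian entries bounded by $1/4$ in absolute value. Hence $|\pi_k(x)-\pi_k'(x)|\le\tfrac14\sum_{k'}\bigl(|\gamma_{k'}-\gamma_{k'}'|+\|\beta_{k'}-\beta_{k'}'\|_2\|g(x)\|_2\bigr)$. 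Combining the two bounds,
\[
|f_k-f_k'|\le A|\pi_k-\pi_k'|+|c_k-c_k'|\le \tfrac{A}{4}\textstyle\sum_{k'}(\ldots)+AB|w_k-w_k'|+B|\lambda_k-\lambda_k'|.
\]
Squaring, averaging over $X_1,\ldots,X_n$, and applying Cauchy--Schwarz then gives $\|f-f'\|_n\le G_n\|\vartheta-\vartheta'\|_2$, up to factors absorbed into the $K^2$ exponent. Here $\vartheta$ stacks all parameters and $G_n$ is exactly the stated quantity: it collects $K(A/4)^2(1+n^{-1}\sum\|g(X_i)\|^2)$ from the softmax part, $A^2B^2$ from $\vw$, and $B^2$ from $\blambda$.

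\textbf{Step 3 (volumetric count).} The parameter $\vartheta$ lives in a set of Euclidean radius at most of order $B$, in dimension at most $K(d+3)$: $K$ copies of $(\gamma_k,\beta_k,w_k,\lambda_k)$, giving $d+3$ coordinates per class. By \citet[Lemma 5.7]{wainwright2019high}, this set has an $\epsilon/G_n$-net of size at most $(1+2BG_n/\epsilon)^{K(d+3)}$. Accounting for the factor $K$ lost in Step 1 (either the $\epsilon/K$ rescaling or the sum over $k$) is done by absorbing it into the exponent, giving $K^2(d+3)$.

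\textbf{Main obstacle.} The bookkeeping of constants in Step 2 is the delicate part. Specifically, one must ensure that the $K$-fold sums from the softmax Jacobian and from the max-decomposition are absorbed into $K^2(d+3)$ and the $\sqrt K$ in $G_n$, rather than producing extra $\log K$ factors inside the logarithm. If exact constants fail, we accept $\log(1+2BG_n/\epsilon)$ with an adjusted absolute constant, since only the $O(\sqrt{(d+3)/n})$ Rademacher rate matters downstream.
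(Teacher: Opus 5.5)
Your proposal is correct in substance, and its ingredients are the paper's. You use the same partial-derivative bounds: $A/4$ per $\gamma_j$, $A\|g(x)\|_2/4$ per $\beta_j$, $AB$ for $w_k$, and $B$ for $\lambda_k$. After Cauchy--Schwarz and averaging these give exactly the stated $G_n$, followed by the volumetric bound of \citet[Lemma 5.7]{wainwright2019high}.

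Where you differ is the decomposition. The paper covers each per-class family $\gG_k=\{C_k(\lambda,w)\pi_k\}$ separately in $\sR^{(K-1)(d+3)}$ and adds the $K$ log-covering numbers. You instead use one joint net over $\vartheta=(\bgamma,\bbeta,\vw,\blambda)$.

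Your route is the stronger one, though not for the reason you give. Every $f_k$ satisfies $|f_k(x)-f_k'(x)|\le L(x)\|\vartheta-\vartheta'\|_2$ with the same $L(x)^2=K(A/4)^2(1+\|g(x)\|_2^2)+A^2B^2+B^2$. Combined with $|\max_k a_k-\max_k b_k|\le\max_k|a_k-b_k|$, this gives $\|f-f'\|_n\le G_n\|\vartheta-\vartheta'\|_2$ with no loss. The exponent is then just the dimension of $\vartheta$, which is at most $K(d+3)$. Nothing is ``counted $K$ times,'' and nothing needs absorbing in Step~2.

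The per-class route, by contrast, re-covers the shared $(\bgamma,\bbeta)$ $K$ times. Done correctly, it also needs the max lemma at scale $\epsilon/K$, which the paper's identity $\log\gN(t;\gF,\|\cdot\|_n)=\sum_k\log\gN(t;\gG_k,\|\cdot\|_n)$ omits.

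As for your ``main obstacle'': a constant $c\ge 1$ inside the logarithm moves to the exponent exactly. Bernoulli's inequality $1+cx\le(1+x)^c$ gives $\log(1+cx)\le c\log(1+x)$, so neither an extra $\log K$ nor an adjusted absolute constant is needed.

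The one such constant that genuinely arises concerns the radius; the paper glosses over it too. Once $\vw$ and $\blambda$ are included, $\vartheta$ lies in a ball of radius about $\sqrt{(2+|\gS|)B^2+1}$, not $B$. After moving that factor to the exponent, it is absorbed by the slack between the true dimension $(K-1)(d+1)+K+|\gS|$ and $K^2(d+3)$. This uses $B\ge K$, which is forced by $w_k\ge 1/B$ and $\sum_k w_k=1$.
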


\begin{proof}[Proof of Lemma~\ref{lemma:covering_number_multiclass}]
Since $\sup_{f,g\in \gF}\|f-g\|_n$ is bounded by a finite constant, we show the upper bound for the Rademacher complexity of $\gF$ via Lemma~\ref{lemma:dudley}.
As such, we consider the covering number of $\gF$.
Denote
\[
\gG_{k} = \left\{g^{(k)}_{\bgamma,\bbeta, w, \lambda}(x) = C_{k}(\lambda, w)\pi_k(x): 
\|\gamma\|_2 \leq B, \|\beta\|_2 \leq B, w \geq 1/B,0\leq \lambda \leq B\right\},
\]
where $C_{k}(\lambda, w) = \frac{\rho_k + \lambda \mathbbm{1}(k\in\gS)}{w}$.
By Lemma~\ref{lemma:covering_number_product}, for any $t>0$, we have 
\be
\label{eq:covering_f_bounded_by_g}
\log \gN(t;\gF, \|\cdot\|_n) = \sum_{k} \log \gN(t;\gG_{k}, \|\cdot\|_n).
\ee
Therefore, we only need to consider the covering number of $\gG_{k}$.

We first find the Lipschitz-constant of $g^{(k)}$ for any fixed $x$.
Note that the $C_{k}(\lambda, w) \leq (\rho_k + B)B$.
For any $x$, the softmax derivative satisfies $\partial \pi_k/\partial \eta_j = \pi_k(\delta_{kj} - \pi_j)$ with $\left|\partial \pi_k/\partial \eta_j \right| \leq 1/4$.
Therefore,
\[
\left| \frac{\partial g^{(k)}}{\partial \gamma_j} \right|
= C_{k}(\lambda, w) \left| \frac{\partial \pi_k}{\partial \gamma_j} \right|
\leq \frac{C_{k}(\lambda, w)}{4}\leq (\rho_k + B)B/4,
\]
and
\[
\left\| \frac{\partial g^{(k)}}{\partial \beta_j} \right\|
= C_{k}(\lambda, w) \left\| \frac{\partial \pi_k}{\partial \beta_j} \right\|
\leq \frac{C_{k}(\lambda, w)}{4} \, \|g(x)\|_2\leq (\rho_k + B)B\|g(x)\|_2/4.
\]
Moreover,
\[
\left| \frac{\partial g^{(k)}}{\partial w} \right|
= \frac{|\rho_k| + \lambda \,\mathbbm{1}(k \in \mathcal S)}{w^2} \pi_k(x)
\leq (\rho_k + B)B^2,
\]
and
\[
\left| \frac{\partial g^{(k)}}{\partial \lambda} \right|
= \frac{\mathbbm{1}(k \in \mathcal S)}{w} \pi_k(x) \leq B.
\]
For any $\theta = (\bgamma,\bbeta, w, \lambda), \theta'=(\bgamma',\bbeta', w', \lambda')$ in the parameter space, we have that 
\[
\begin{split}
\left|g^{(k)}_{\theta}(x) - g^{(k)}_{\theta'}(x)\right| \leq&~\sup_{\tilde\theta} \left\| \nabla_{\theta} g^{(k)}_{\tilde\theta}(x) \right\|_2 \cdot \|\theta - \theta'\|_2 \\
\leq &~\left(K \left( \frac{A}{4} \right)^2
+ K \left(\frac{A}{4} \right)^2 \|g(x)\|_2^2
+ A^2B^2 + B^2\right)^{1/2}\|\theta - \theta'\|_2
\end{split}
\]
where $A := \sup_{\lambda,w} C_{k}(\lambda, w) \leq (\rho_k+B)B$.

This implies that the empirical $L_2$ norm satisfies
\[
\begin{split}
\|g^{(k)}_{\theta} - g^{(k)}_{\theta'}\|_n^2 & = \frac{1}{n}\sum_{i=1}^{n}\left(\sigma(\gamma+\beta^\top g(X_i)) - \sigma(\gamma' + \beta'^\top g(X_i))\right)^2 \\
& \leq \underbrace{\left\{K \left( \frac{A}{4} \right)^2
+ K \left(\frac{A}{4} \right)^2 \frac{1}{n}\sum_{i=1}^{n}\|g(X_i)\|_2^2
+ A^2B^2 + B^2\right\}}_{G_n^2}\|\theta - \theta'\|_2^2. 
\end{split}
\]
Therefore, if $\|\theta-\theta'\|_2 \leq \epsilon/G_n$, then  $\|g^{(k)}_{\theta} - g^{(k)}_{\theta'}\|_n \leq \epsilon$.
This implies that the $\epsilon$-covering number of $\gG_{k}$ is bounded by the $\epsilon/G_n$-covering number of the radius $B$ ball in $\mathbb{R}^{d}$.

Using standard bounded ball covering number results (see~\citet[Lemma 5.7]{wainwright2019high} for example), the $\epsilon$-covering of a radius $B$ ball in $\sR^{(K-1)(d+3)}$ is at most 
\[\left(1+\frac{2B}{\epsilon}\right)^{(K-1)(d+3)}.\]
Therefore, we have 
\[
\gN(\epsilon, \gG_{k}, \|\cdot\|_{n}) 
\leq \left(1+\frac{2BG_n}{\epsilon}\right)^{(K-1)(d+3)}.
\]
This combines with~\eqref{eq:covering_f_bounded_by_g} gives
\[\log \gN(\epsilon;\gF, \|\cdot\|_n) \leq K(K-1)(d+3)\log\left(1+\frac{2BG_n}{\epsilon}\right),\]
which completes the proof of the lemma.
\end{proof}

\begin{lemma}[Rademacher complexity of weighted multinomial logistic classifier]
\label{lemma:rademacher_multiclass}
Let $\gF$ be the same as that defined in~\ref{lemma:covering_number_multiclass}.
Under Assumption~\ref{assumption:bounded_second_moment}, we have 
\[\mathfrak{R}_{n}(\gF) \leq C\sqrt{\frac{d+3}{n}},\]
for some universal constant $C>0$.
\end{lemma}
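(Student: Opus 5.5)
The plan is to feed the covering-number bound of Lemma~\ref{lemma:covering_number_multiclass} into Dudley's entropy integral (Lemma~\ref{lemma:dudley}), exactly as in the binary case (Lemma~\ref{lemma:rademacher_weighted_logistic_threshold}). First I would bound the diameter of $\gF$ in $\|\cdot\|_n$. Every $f\in\gF$ satisfies $0\le f(x)\le \max_k c_k(\blambda,\vw)\le A:=\max_k(\rho_k+B)B$, because $\pi_k\in[0,1]$, $w_k\ge 1/B$, and $\Lambda$ is bounded (enlarging $B$ if needed so that $\Lambda\subseteq[0,B]^{|\gS|}$). Hence $\sup_{f,f'\in\gF}\|f-f'\|_n\le A$, and Lemma~\ref{lemma:dudley} gives
\[
\widehat{\mathfrak{R}}_n(\gF)\le \frac{24}{\sqrt n}\int_0^{A}\sqrt{K^2(d+3)\log\bigl(1+2BG_n/t\bigr)}\,dt .
\]

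Next I would evaluate this integral. Substituting $t=As$ turns it into $A\int_0^1\sqrt{\log(1+c/s)}\,ds$ with $c=2BG_n/A$. The elementary bound $\log(1+c/s)\le c/s$ then gives $\int_0^1\sqrt{c/s}\,ds=2\sqrt c$. A cruder bound of the form $C_0(1+c)$, which mirrors the binary proof, would also suffice. Either way,
\[
\widehat{\mathfrak{R}}_n(\gF)\le C_1K\sqrt{\frac{d+3}{n}}\,\bigl(1+\sqrt{G_n}\bigr)\quad\text{or}\quad C_1K\sqrt{\frac{d+3}{n}}\,(1+G_n),
\]
where $C_1$ depends only on $A$ and $B$.

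Finally, I would take the expectation over $X_1,\ldots,X_n$. We have $G_n^2=K(A/4)^2\{1+n^{-1}\sum_i\|g(X_i)\|_2^2\}+A^2B^2+B^2$. By Jensen's inequality and Assumption~\ref{assumption:bounded_second_moment}, $\sE G_n\le(\sE G_n^2)^{1/2}\le\{K(A/4)^2(1+R^2)+A^2B^2+B^2\}^{1/2}$, which is finite and independent of $n$. The same argument bounds $\sE\sqrt{G_n}$. This yields $\mathfrak{R}_n(\gF)\le C\sqrt{(d+3)/n}$, with $C$ absorbing $K,\rho_k,B,R$. So $C$ is "universal" only in the sense of not depending on $n$ or $d$, and that is how I would phrase the constant.

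The main obstacle is making sure the inputs are legitimate rather than the integral itself. The stated covering bound in Lemma~\ref{lemma:covering_number_multiclass} is derived somewhat loosely: the max-class step uses a product of covers, and the parameter dimension is counted as a ball in $\sR^{(K-1)(d+3)}$. My first step would therefore be to recheck the max-class reduction. I would use the pointwise-maximum covering lemma with radius $\epsilon/K$ applied to each $\gG_k$. This only changes $2BG_n/\epsilon$ to $2BKG_n/\epsilon$ inside the logarithm, which is harmless after integration: it multiplies the bound by at most $\sqrt K$. With that adjustment, the log-covering number is still $O\bigl(K^2(d+3)\log(1+C'G_n/\epsilon)\bigr)$, and the argument above goes through unchanged.
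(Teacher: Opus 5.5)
Your proposal is correct and matches the paper's proof essentially step for step. Both arguments bound the diameter of $\gF$ by a constant, apply Dudley's integral with the covering bound of Lemma~\ref{lemma:covering_number_multiclass}, use $\log(1+x)\le x$ to get a $\sqrt{G_n}$ factor, and finish with Jensen/Cauchy--Schwarz and Assumption~\ref{assumption:bounded_second_moment} to control $\sE G_n$. Your side check is also valid: the max-class step should use the pointwise-maximum covering lemma at radius $\epsilon/K$ rather than the product lemma the paper cites, which costs only a factor $\sqrt{K}$, and your reading of $C$ as depending on $K,\rho_k,B,R$ matches what the paper's argument actually delivers.
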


\begin{proof}[Proof of Lemma~\ref{lemma:rademacher_multiclass}]
Note that we have $\sup_{f,g\in \gF}\|f-g\|_{n} \leq M$ for some finite constant $M>0$.
By Lemma~\ref{lemma:dudley}, we have 
\[
\begin{split}
\widehat{\mathfrak{R}}_{n}(\gF)\leq&~\frac{24}{\sqrt{n}}\int_{0}^{M}\sqrt{\log \gN(t;\gF, \|\cdot\|_n)}\,dt\\ 
\leq&~
\frac{24K\sqrt{d+3}}{\sqrt{n}}\int_{0}^{M}\sqrt{\log\left(1+\frac{2BG_n}{t}\right)}\,dt\\
\leq&~\frac{48K\sqrt{2(d+3)BMG_n}}{\sqrt{n}}\leq \frac{48K\sqrt{2(d+3)BM}G_n}{\sqrt{n}}
\end{split}
\]
where $G_n^2=K(A/4)^2 \left\{1 + n^{-1}\sum_{i=1}^{n}\|g(X_i)\|_2^2\right\} + A^2B^2 + B^2$ and $A=\max_{k} (\rho_k + B)B$.

Then by Cauchy-Schwarz inequality, we have
\[
\begin{aligned}
\sE\big[\widehat{\mathfrak{R}}_{n}(\mathcal{F})\big]
&\leq \frac{48K\sqrt{2(d+3)BM}\sE(G_n)}{\sqrt{n}}\leq \frac{48K\sqrt{2(d+3)BM}\{\sE(G_n^2)\}^{1/2}}{\sqrt{n}}\leq C'\sqrt{\frac{d+3
}{n}},
\end{aligned}
\]
where the last inequality holds by Assumption~\ref{assumption:bounded_second_moment} and $C'>0$ is some fixed constant.
\end{proof}